\title{It Is Easy to Be Wise After the Event: Communicating Finite-State Machines Capture First-Order Logic with ``Happened Before''}
\titlerunning{It Is Easy to Be Wise After the Event}
\author{Benedikt Bollig}{LSV, CNRS \& ENS Paris-Saclay, Universit{\'e} Paris-Saclay, Cachan, France}{bollig@lsv.fr}{}{}
\author{Marie Fortin}{LSV, CNRS \& ENS Paris-Saclay, Universit{\'e} Paris-Saclay, Cachan, France}{fortin@lsv.fr}{}{}
\author{Paul Gastin}{LSV, CNRS \& ENS Paris-Saclay, Universit{\'e} Paris-Saclay, Cachan, France}{gastin@lsv.fr}{}{}
\authorrunning{B. Bollig, M. Fortin, and P. Gastin}
\subjclass{Theory of computation $\rightarrow$ Concurrency, Theory of computation $\rightarrow$ Logic and verification}
\keywords{communicating finite-state machines, first-order logic, happened-before relation}
\renewcommand{\paragraph}{\@startsection{paragraph}{6}{\z@}{2ex}{-0.7em}{\normalsize\bf}}
\colorlet{colj}{gray}
\definecolor{coll}{RGB}{31,0,149}
\definecolor{colk}{RGB}{0,152,83}
\definecolor{mygreen}{RGB}{0,152,83}
\theoremstyle{plain}
\newtheorem{proposition}[theorem]{Proposition}
\newtheorem{claim}[theorem]{Claim}
\newtheorem{fact}[theorem]{Fact}
\newcommand\Nat{\mathbb{N}}
\newcommand{\df}{:=}
\newcommand\A{\mathcal{A}}
\newcommand\B{\mathcal{B}}
\newcommand\MSCs[2]{\mathbb{MSC}(#1,#2)}
\newcommand\ebMSCs[3]{\mathbb{MSC}_{\exists #3}(#1,#2)}
\newcommand\prel{\rightarrow}
\newcommand\mrel{\lhd}
\newcommand{\ploc}{\mathit{loc}}
\newcommand{\proc}{\mathsf{proc}}
\newcommand{\Ch}{\mathit{Ch}}
\newcommand{\N}{\mathbb{N}}
\newcommand{\col}{\mathit{col}}
\newcommand{\PDL}{\sfPDLall}
\newcommand{\sfPDL}{\textup{PDL}_{\mathsf{sf}}}
\newcommand{\PDLm}{\sfPDLm}
\newcommand{\MSO}{\textup{MSO}}
\newcommand{\EMSO}{\textup{EMSO}}
\newcommand{\FO}{\textup{FO}}
\newcommand{\FOle}{\FO[\prel,\mrel,\le]}
\newcommand{\rightg}[1]{\xrightarrow{#1}}
\newcommand{\leftg}[1]{\xleftarrow{#1}}
\newcommand{\leproc}{\leq_\proc}
\newcommand{\lessproc}{<_\proc}
\newcommand{\ltproc}{<_\proc}
\newcommand{\righta}{\xrightarrow{\ast}}
\newcommand{\rightp}{\xrightarrow{+}}
\newcommand{\leftp}{\xleftarrow{+}}
\newcommand{\leftmove}{\leftarrow}
\newcommand{\jump}[2]{\mathsf{jump}_{#1,#2}}
\newcommand{\test}[1]{\{#1\}?}
\newcommand{\Loop}[1]{\mathsf{Loop}(#1)}
\newcommand{\existsp}[2]{\mathop{\langle #1 \rangle} {#2}}
\newcommand{\existsptrue}[1]{\mathop{\langle #1 \rangle}}
\newcommand{\pic}[1]{{#1}^{\mathsf{c}}}
\newcommand{\True}{\mathit{true}}
\newcommand{\False}{\mathit{false}}
\newcommand{\compl}{\mathsf{c}}
\newcommand{\Loopname}{\mathsf{Loop}}
\newcommand{\sfPDLall}{\sfPDL}
\newcommand{\sfPDLm}{\sfPDL[\Loopname]}
\newcommand{\E}{\mathop{\mathsf{E}\vphantom{a}}\nolimits}
\newcommand{\sem}[2]{\llbracket {#2} \rrbracket_{#1}}
\newcommand{\semM}[1]{\llbracket {#1} \rrbracket}
\newcommand{\Comp}[1]{\mathsf{Comp}(#1)}
\newcommand{\id}{\mathsf{id}}
\newcommand{\Procs}{P}
\newcommand{\pdlsentence}{\xi}
\newcommand{\pifo}[1]{\widetilde{#1}}
\newcommand{\phifo}[1]{\widetilde{#1}}
\newcommand{\minpi}[1]{{\mathsf{min}~} {#1}}
\newcommand{\maxpi}[1]{{\mathsf{max}~} {#1}}
\newcommand{\minpie}[2]{\min \semM {#1} (#2)}
\newcommand{\maxpie}[2]{\max \semM {#1} (#2)}
\newcommand{\Loopop}{\mathsf{Loop}}
\newcommand{\Lt}[1]{\llbracket #1 \rrbracket}
\newcommand{\Aphi}[1]{\A_{#1}}
\newcommand{\Mphi}[2]{{#1}_{#2}}
\newcommand{\ncolone}{
\begin{tikzpicture}
\filldraw[fill=white,draw=black] circle (2.8pt);
\end{tikzpicture}
}
\newcommand{\ncoltwo}{
\begin{tikzpicture}
\filldraw[fill=gray!90,draw=black] circle (2.8pt);
\end{tikzpicture}
}
\newcommand{\ycolone}{
\begin{tikzpicture}
\filldraw[fill=white,draw=black] rectangle (5pt,5pt);
\end{tikzpicture}
}
\newcommand{\ycoltwo}{
\begin{tikzpicture}
\filldraw[fill=gray!90,draw=black] rectangle (5pt,5pt);
\end{tikzpicture}
}
\newcommand\init{\iota}
\newcommand\Acc{\mathit{Acc}}
\newcommand\Msg{\mathit{Msg}}
\newcommand\msg{m}
\newcommand{\act}{\alpha}
\newcommand{\loc}[1]{\langle#1\rangle}
\newcommand{\send}[3]{\langle#1,!_{#3}#2\rangle}
\newcommand{\rec}[3]{\langle#1,?_{#3}#2\rangle}
\newcommand{\Act}[2]{\mathit{Act}_{#1}(#2)}
\newcommand\source{\mathit{source}}
\newcommand\target{\mathit{target}}
\newcommand\tmsg{\mathit{msg}}
\newcommand\tlabel{\mathit{label}}
\newcommand\receiver{\mathit{receiver}}
\newcommand\sender{\mathit{sender}}
\newcommand{\fophi}{\Phi}
\newcommand{\Class}{\mathcal{F}}
\newcommand{\Sign}{R}
\newcommand{\msclang}{\mathbb{L}}
\newcommand{\Lang}{\mathcal{L}}
\newcommand{\LangCFM}{\Lang(\textup{CFM})}
\newcommand{\Var}{\mathcal{V}_{\mathsf{event}}}
\newcommand{\VAR}{\mathcal{V}_{\mathsf{set}}}
\newcommand{\Free}{\mathsf{Free}}
\tikzstyle{dot} = [circle, fill, inner sep=0, minimum size = 4pt]
\begin{document}

\maketitle

\begin{abstract}
  Message sequence charts (MSCs) naturally arise as executions of communicating
  finite-state machines (CFMs), in which finite-state processes exchange
  messages through unbounded FIFO channels.  We study the first-order logic of
  MSCs, featuring Lamport's happened-before relation.  We introduce a star-free
  version of propositional dynamic logic (PDL) with loop and converse.  Our main
  results state that (i) every first-order sentence can be transformed into an
  equivalent star-free PDL sentence (and conversely), and (ii) every star-free
  PDL sentence can be translated into an equivalent CFM. This answers an open
  question and settles the exact relation between CFMs and fragments of monadic
  second-order logic.  As a byproduct, we show that first-order logic over MSCs
  has the three-variable property.
\end{abstract}

\section{Introduction}

First-order (FO) logic can be considered, in many ways, a reference
specification language.  It plays a key role in automated theorem proving and
formal verification.  In particular, FO logic over finite or infinite words is
central in the verification of reactive systems.  When a word is understood as a
total order that reflects a chronological succession of events, it represents an
execution of a sequential system.  Apart from being a natural concept in itself,
FO logic over words enjoys manifold characterizations.  It defines exactly 
the star-free languages and coincides with recognizability by aperiodic monoids
or natural subclasses of finite (B{\"u}chi, respectively) automata (cf.\
\cite{DiGa08Thomas,Tho97handbook} for overviews).  Moreover, linear-time
temporal logics are usually measured against their expressive power with respect
to FO logic.  For example, LTL is considered the yardstick temporal logic not
least due to Kamp's famous theorem, stating that LTL and FO logic are
expressively equivalent \cite{Kamp68}.

While FO logic on words is well understood, a lot remains to be said once
concurrency enters into the picture.  When several processes communicate
through, say, unbounded first-in first-out (FIFO) channels, 
events are only partially ordered and a behavior, which is referred to
as a \emph{message sequence chart (MSC)}, reflects Lamport's happened-before
relation: an event $e$ happens before an event $f$ if, and only if, there is a
``message flow'' path from $e$ to $f$ \cite{Lamport78}.  \emph{Communicating
finite-state machines} (CFMs) \cite{Brand1983} are to MSCs what finite automata
are to words: a canonical model of finite-state processes that communicate
through unbounded FIFO channels.  Therefore, the FO logic of MSCs can be
considered a canonical specification language for such systems.  Unfortunately,
its study turned out to be difficult, since algebraic and automata-theoretic
approaches that work for words, trees, or Mazurkiewicz traces do not carry over.
In particular, until now, the following central problem remained open:

\begin{center}
  \parbox{0.7\textwidth}{ \it Can every first-order sentence be transformed into an equivalent communicating finite-state machine, without any channel bounds?  }
\end{center}

Partial answers were given for CFMs with bounded channel capacity
\cite{HenriksenJournal,Kuske01,GKM06} and for fragments of FO that restrict the
logic to bounded-degree predicates \cite{BolligJournal} or to two variables
\cite{BFG-stacs18}.

In this paper, we answer the general question positively.  To do so, we
make a detour through a variant of propositional dynamic logic (PDL)
with loop and converse \cite{FisL79,Streett81}.
Actually, we introduce \emph{star-free} PDL, which
serves as an interface between FO logic and CFMs.
That is, there are two main tasks to accomplish:
\begin{itemize}
\item[(i)] Translate every FO sentence into a star-free PDL sentence.

\item[(ii)] Translate every star-free PDL sentence into a CFM.
\end{itemize}

Both parts constitute results of own interest.
In particular, step (i) implies that, over MSCs, FO logic has the three-variable property, i.e.,
every FO sentence over MSCs can be rewritten into one that uses only three different
variable names.
Note that this is already interesting in the special
case of words, where it follows from Kamp's theorem \cite{Kamp68}.
It is also noteworthy that star-free PDL is a \emph{two-dimensional} temporal logic in the sense of
Gabbay et al.~\cite{Gabbay1981,gabbay1994temporal}. 
Since every star-free PDL sentence is equivalent to some FO sentence, we actually provide a (higher-dimensional) temporal logic over MSCs that is expressively
complete for FO logic.%
\footnote{It is open whether there is an equivalent one-dimensional one.}
While step (i) is based on purely logical considerations, step (ii)
builds on new automata constructions that allow us to cope with the loop operator of PDL.

Combining (i) and (ii) yields the translation from FO logic to CFMs.
It follows that CFMs are expressively equivalent to \emph{existential} MSO logic.
Moreover, we can derive self-contained proofs of several results on
channel-bounded CFMs whose original proofs refer to involved constructions for Mazurkiewicz traces
(cf.\ Section~\ref{sec:concl}).

\subparagraph{Related Work.}

Let us give a brief account of what was already known on the relation between logic and CFMs.
In the 60s, B{\"u}chi, Elgot, and Trakhtenbrot proved that 
finite automata over words are expressively equivalent to monadic second-order logic \cite{Buechi:60,Elgot1961,Trakhtenbrot62}.
Note that finite automata correspond to the special case of CFMs with a single process.

This classical result has been generalized to CFMs with bounded channels:
Over \emph{universally} bounded MSCs (where all possible linear extensions meet a given channel bound),
deterministic CFMs are expressively equivalent to MSO logic \cite{HenriksenJournal,Kuske01}.
Over \emph{existentially} bounded MSCs (some linear extension meets the channel bound),
CFMs are still expressively equivalent to MSO logic \cite{GKM06}, but inherently nondeterministic
\cite{GKM07}.  The proofs of these characterizations reduce message-passing
systems to finite-state shared-memory systems so that deep results from
Mazurkiewicz trace theory \cite{DiekertRozenberg95} can be applied.

This generic approach is no longer applicable when the restriction on the channel capacity is dropped. Actually, in general, CFMs do not capture MSO logic \cite{BolligJournal}. On the other hand, they are expressively equivalent to existential MSO logic when we discard the happened-before relation \cite{BolligJournal} or when restricting to two first-order variables \cite{BFG-stacs18}. Both results rely on normal forms of FO logic, due to Hanf \cite{Hanf1965} and Scott \cite{GradelO99}, respectively.
However, MSCs with the happened-before relation are structures of \emph{unbounded} degree (while Hanf's normal form requires structures of bounded degree), and we consider FO logic with \emph{arbitrarily} many variables (while Scott's normal form only applies to two-variable logic).
That is, neither approach is applicable in our case.

Finally, there exists a translation of a loop-free PDL into CFMs \cite{BKM-lmcs10}.
As our star-free PDL has a loop operator, we cannot exploit \cite{BKM-lmcs10} either.

\subparagraph{Outline.}
In Section~\ref{sec:prel}, we recall basic notions such as MSCs, FO logic, and CFMs.
Moreover, we state one of our main results: the translation of FO formulas into CFMs.
Section~\ref{sec:pdl} presents star-free PDL and proves that it captures FO
logic.  In Section~\ref{sec:pdl-cfm}, we establish the translation of star-free
PDL into CFMs.  We conclude in Section~\ref{sec:concl} mentioning applications
of our results.  


\section{Preliminaries}\label{sec:prel}

We consider message-passing systems in which processes communicate through unbounded FIFO channels.
We fix a nonempty finite set of \emph{processes} $\Procs$ and a nonempty finite set of \emph{labels} $\Sigma$.
For all $p,q \in \Procs$ such that $p \neq q$, there is a channel $(p,q)$ that allows $p$ to send messages to $q$. The set of channels is denoted $\Ch$.

In the following, we define message sequence charts, which represent executions
of a message-passing system, and logics to reason about them.  Then, we recall
the definition of communicating finite-state machines and state one of our main
results.

\subsection{Message Sequence Charts}

A \emph{message sequence chart (MSC)} (over $\Procs$ and $\Sigma$) is a graph $M=(E,\prel,\mrel,\ploc,\lambda)$ with nonempty finite set of nodes $E$, edge relations ${\prel},{\mrel} \subseteq E \times E$, and node-labeling functions $\ploc\colon E \to \Procs$ and $\lambda\colon E \to \Sigma$. An example MSC is depicted in Figure~\ref{fig:msc}.
A node $e \in E$ is an \emph{event} that is executed by process $\ploc(e) \in \Procs$.
In particular, $E_p \df \{e \in E \mid \ploc(e) = p\}$ is the set of events located on $p$.
The label $\lambda(e) \in \Sigma$ may provide more information about $e$ such as the message that is sent/received at $e$ or ``enter critical section'' or ``output some value''.

Edges describe causal dependencies between events:
\begin{itemize}
\item The relation $\prel$ contains \emph{process edges}.  They connect
successive events executed by the same process.  That is, we actually have
${\to} \subseteq \bigcup_{p \in \Procs} (E_p \times E_p)$.  Every process $p$ is
sequential so that ${\to} \cap (E_p \times E_p)$ must be the direct-successor
relation of some total order on $E_p$.  We let ${\leproc} \df {\to^\ast}$ and
${\lessproc} \df {\to^+}$.

\item The relation $\mrel$ contains \emph{message edges}. If $e \mrel f$, then $e$ is a \emph{send event} and $f$ is the corresponding \emph{receive event}.
In particular, $(\ploc(e),\ploc(f)) \in \Ch$.
Each event is part of at most one message edge. An event that is neither a send nor a receive event is called \emph{internal}. Moreover, for all $(p,q) \in \Ch$ and $(e,f), (e',f') \in {\mrel} \cap (E_p
\times E_q)$, we have $e \leproc e'$ iff $f \leproc f'$ (which guarantees a FIFO behavior).
\end{itemize}
We require that ${\prel} \cup {\mrel}$ be acyclic (intuitively, messages cannot travel backwards in time). The associated partial order is denoted ${\le} \df ({\prel} \cup {\mrel})^*$ with strict part ${<}=({\prel} \cup {\mrel})^+$.
We do not distinguish isomorphic MSCs.
Let $\MSCs{\Procs}{\Sigma}$ denote the set of MSCs over $\Procs$ and $\Sigma$.

Actually, MSCs are very similar to the space-time diagrams from Lamport's seminal paper \cite{Lamport78}, and $\le$ is commonly referred to as the \emph{happened-before relation}.

It is worth noting that, when $\Procs$ is a singleton, an MSC with events $e_1
\to e_2 \to \ldots \to e_n$ can be identified with the word
$\lambda(e_1)\lambda(e_2) \ldots \lambda(e_n) \in \Sigma^\ast$.

\newcommand{\nbullet}{\resizebox{!}{1.4ex}{$\diamond$}}

\newcommand{\abullet}{
\begin{tikzpicture}
\filldraw[fill=white,draw=black] circle (2.8pt);
\end{tikzpicture}
}

\newcommand{\bbullet}{
\begin{tikzpicture}
\filldraw[fill=white,draw=black] rectangle (5pt,5pt);
\end{tikzpicture}
}

\tikzstyle{acirc} = [draw, fill=white, circle, inner sep=0, minimum size=0.25cm, draw=black]
\tikzstyle{bcirc} = [draw, fill=white, rectangle, inner sep=0, minimum size=0.2cm, draw=black]
\tikzstyle{ncirc} = [draw, fill=white, diamond, inner sep=0, minimum size=0.3cm]

\begin{example}
  Consider the MSC from Figure~\ref{fig:msc} over $\Procs=\{p_1,p_2,p_3\}$ and
  $\Sigma = \{\bbullet,\abullet,\nbullet\}$.  We have, for instance, $E_{p_1} =
  \{e_0,\ldots,e_7\}$.  The process relation is given by $e_i \to e_{i+1}$, $f_i
  \to f_{i+1}$, and $g_i \to g_{i+1}$ for all $i \in \{0,\ldots,6\}$.
  Concerning the message relation, we have $e_1 \mrel f_0$, $e_4 \mrel g_5$,
  etc.  Moreover, $e_2 \le f_3$, but neither $e_2 \le f_1$ nor $f_1 \le
  e_2$.
\end{example}

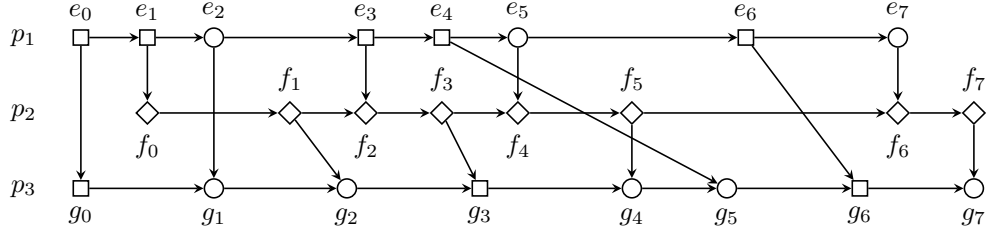
\begin{figure}[t]
\centering
    \begin{tikzpicture}[semithick,>=stealth]

     \node[bcirc,label=above:$e_0$] (e0) at (0.25,2) {};
     \node[bcirc,label=below:$g_0$] (f0) at (0.25,0) {};
     \draw[->] (e0) -- (f0);

     \node[bcirc,label=above:$e_1$] (e1) at (1.125,2) {};
     \node[ncirc,label=below:$f_0$] (g0) at (1.125,1) {};
     \draw[->] (e1) -- (g0);

     \node[acirc,label=above:$e_2$] (e2) at (2,2) {};
     \node[acirc,label=below:$g_1$] (f1) at (2,0) {};
     \draw[->] (e2) -- (f1);

     \node[ncirc,label=above:$f_1$] (g1) at (3,1) {};
     \node[acirc,label=below:$g_2$] (f2) at (3.75,0) {};
     \draw[->] (g1) -- (f2);

     \node[bcirc,label=above:$e_3$] (e3) at (4,2) {};
     \node[ncirc,label=below:$f_2$] (g2) at (4,1) {};
     \draw[->] (e3) -- (g2);

     \node[ncirc,label=above:$f_3$] (g3) at (5,1) {};
     \node[bcirc,label=below:$g_3$] (f3) at (5.5,0) {};
     \draw[->] (g3) -- (f3);
     
     \node[bcirc,label=above:$e_4$] (e4) at (5,2) {};
     \node[acirc,label=below:$g_5$] (f5) at (8.75,0) {};
     \draw[->] (e4) -- (f5);

     \node[acirc,label=above:$e_5$] (e5) at (6,2) {};
     \node[ncirc,label=below:$f_4$] (g4) at (6,1) {};
     \draw[->] (e5) -- (g4);

     \node[ncirc,label=above:$f_5$] (g5) at (7.5,1) {};
     \node[acirc,label=below:$g_4$] (f4) at (7.5,0) {};
     \draw[->] (g5) -- (f4);

     \node[bcirc,label=above:$e_6$] (e6) at (9,2) {};
     \node[bcirc,label=below:$g_6$] (f6) at (10.5,0) {};
     \draw[->] (e6) -- (f6);

     \node[acirc,label=above:$e_7$] (e7) at (11,2) {};
     \node[ncirc,label=below:$f_6$] (g6) at (11,1) {};
     \draw[->] (e7) -- (g6);

     \node[ncirc,label=above:$f_7$] (g7) at (12,1) {};
     \node[acirc,label=below:$g_7$] (f7) at (12,0) {};
     \draw[->] (g7) -- (f7);

\draw[->] (e0) -- (e1);
\draw[->] (e1) -- (e2);
\draw[->] (e2) -- (e3);
\draw[->] (e3) -- (e4);
\draw[->] (e4) -- (e5);
\draw[->] (e5) -- (e6);
\draw[->] (e6) -- (e7);

\draw[->] (f0) -- (f1);
\draw[->] (f1) -- (f2);
\draw[->] (f2) -- (f3);
\draw[->] (f3) -- (f4);
\draw[->] (f4) -- (f5);
\draw[->] (f5) -- (f6);
\draw[->] (f6) -- (f7);

\draw[->] (g0) -- (g1);
\draw[->] (g1) -- (g2);
\draw[->] (g2) -- (g3);
\draw[->] (g3) -- (g4);
\draw[->] (g4) -- (g5);
\draw[->] (g5) -- (g6);
\draw[->] (g6) -- (g7);

      \node at (-0.5,0) {$p_3$};
      \node at (-0.5,1) {$p_2$};
      \node at (-0.5,2) {$p_1$};
                        
    \end{tikzpicture}
    \caption{A message sequence chart (MSC)\label{fig:msc}}
  \end{figure}

\subsection{MSO Logic and Its Fragments}

Next, we give an account of \emph{monadic second-order} (MSO) logic and its fragments.
Note that we restrict our attention to MSO logic interpreted \emph{over MSCs}.
We fix an infinite supply $\Var = \{x,y,\ldots\}$ of first-order variables, which range over events of an MSC, and
an infinite supply $\VAR = \{X,Y,\ldots\}$ of second-order variables, ranging over sets of events.
The syntax of $\MSO$ (we consider that $\Procs$ and $\Sigma$ are fixed) is given as follows:
\begin{align*}
\fophi & ~::=~ p(x) \mid a(x) \mid x = y \mid x \prel y \mid x \mrel y \mid x \le y \mid x \in X \mid \fophi \lor \fophi \mid \lnot \fophi \mid
\exists x. \fophi \mid \exists X. \fophi
\end{align*}
where $p \in \Procs$, $a \in \Sigma$, $x,y \in \Var$, and $X \in \VAR$.
We use the usual abbreviations to also include implication $\Longrightarrow$, conjunction $\wedge$, and universal quantification $\forall$. Moreover, the relation $x \leproc y$ can be defined by $x \le y \wedge \bigvee_{p \in \Procs} p(x) \wedge p(y)$.
We write $\Free(\Phi)$ the set of free variables of $\Phi$.

Let $M = (E,\prel,\mrel,\ploc,\lambda)$ be an MSC.
An \emph{interpretation} (for $M$) is a mapping
$\nu\colon \Var \cup \VAR \to E \cup 2^E$ assigning to each
$x \in \Var$ an event
$\nu(x) \in E$, and to each $X \in \VAR$ a set of events
$\nu(X) \subseteq E$.
We write $M,\nu \models \fophi$ if $M$ satisfies
$\fophi$ when the free variables of $\fophi$ are interpreted according to
$\nu$. Hereby, satisfaction is defined in the usual manner.
In fact, whether $M,\nu \models \fophi$ holds or not only depends on the interpretation
of variables that occur free in $\fophi$. Thus, we may restrict $\nu$ to any set of variables that contains at least all free variables.
For example, for $\fophi(x,y) = (x \mrel y)$, we have
$M,[x \mapsto e, y \mapsto f] \models \fophi(x,y)$ iff $e \mrel f$.
For a \emph{sentence} $\fophi \in \MSO$ (without free variables), we define
$\msclang(\fophi) \df \{M \in \MSCs{\Procs}{\Sigma} \mid M \models \fophi\}$.

We say that two formulas $\fophi$ and $\fophi'$ are \emph{equivalent}, written $\fophi \equiv \fophi'$, if, for all MSCs $M = (E,\prel,\mrel,\ploc,\lambda)$ and interpretations $\nu \colon \Var \cup \VAR \to E \cup 2^E$, we have $M,\nu \models \fophi$ iff $M,\nu \models \fophi'$.

Let us identify two important fragments of MSO logic: \emph{First-order} (\FO) formulas do not make use of second-order quantification (however, they may contain formulas $x \in X$). Moreover, \emph{existential} MSO ($\EMSO$) formulas are of the form $\exists X_1 \ldots \exists X_n.\fophi$ with $\fophi \in \FO$.

Let $\Class$ be $\MSO$ or $\EMSO$ or $\FO$ and let $R
\subseteq\{\prel,\mrel,\le\}$.
We obtain the logic $\Class[\Sign]$ by
restricting $\Class$ to formulas that do not make use of $\{\prel,\mrel,\le\}
\setminus \Sign$.
Note that $\Class = \Class[\prel,\mrel,\le]$.
Moreover, we let $\Lang(\Class[R]) \df \{\msclang(\fophi) \mid \fophi \in \Class[R]$ is a sentence$\}$.

Since the reflexive transitive closure of an MSO-definable binary relation is
MSO-definable, $\MSO$ and $\MSO[\prel,\mrel]$ have the same expressive power:
$\Lang(\MSO[\prel,\mrel,\leq])=\Lang(\MSO[\prel,\mrel])$.  However,
$\MSO[\le]$ (without the message relation) is strictly weaker than $\MSO$
\cite{BolligJournal}.

\begin{example}
  We give an FO formula that allows us to recover, at some event $f$, the most
  recent event $e$ that happened in the past on, say, process $p$.  More
  precisely, we define the predicate $\mathit{latest}_p(x,y)$ as $x \le y \wedge
  p(x) \wedge \forall z\bigl((z \le y \wedge p(z)) \implies z \le x\bigr)$.  The
  ``gossip language'' says that process $q$ always maintains the latest
  information that it can have about $p$.  Thus, it is defined by
  $\fophi^{\mathsf{gossip}}_{p,q} = \forall x\forall
  y.\bigl((\mathit{latest}_p(x,y) \wedge q(y)) \implies \bigvee_{a \in \Sigma}
  (a(x) \wedge a(y))\bigr) \in \FO^3[\le]$.  For example, for
  $\Procs=\{p_1,p_2,p_3\}$ and $\Sigma = \{\bbullet,\abullet,\nbullet\}$, the
  MSC $M$ from Figure~\ref{fig:msc} is contained in
  $\msclang(\fophi^{\mathsf{gossip}}_{p_1,p_3})$.  In particular, $M,[x \mapsto e_5, y
  \mapsto g_5] \models \mathit{latest}_{p_1}(x,y)$ and $\lambda(e_5) =
  \lambda(g_5) = \abullet$.
\end{example}

\subsection{Communicating Finite-State Machines}

In a communicating finite-state machine, each process $p \in \Procs$ can
perform internal actions of the form $\loc a$, where $a \in \Sigma$,
or send/receive messages from a finite set of messages $\Msg$.
A send action $\send{a}{\msg}{q}$ of process $p$ writes message
$\msg \in \Msg$ to channel $(p,q)$, and performs $a \in \Sigma$.
A receive action $\rec{a}{\msg}{q}$ reads message $\msg$ from channel $(q,p)$.
Accordingly, we let $\Act p \Msg \df \{\loc a \mid a \in \Sigma\} \cup
\{\send{a}{\msg}{q} \mid a \in \Sigma$, $m \in \Msg$, $q \in \Procs\setminus\{p\}\} \cup
\{\rec{a}{\msg}{q} \mid a \in \Sigma$, $m \in \Msg$, $q \in \Procs\setminus\{p\}\}$
denote the set of possible actions of process $p$.

A \emph{communicating finite-state machine (CFM)} over $\Procs$ and $\Sigma$
is a tuple $((\A_p)_{p \in \Procs},\Msg,\Acc)$ consisting of a finite set of
messages $\Msg$ and a finite-state transition system $\A_p=(S_p,\init_p,\Delta_p)$
for each process $p$, with finite set of states $S_p$, initial state $\init_p \in S_p$, and
transition relation $\Delta_p \subseteq S_p \times \Act p \Msg \times S_p$.
Moreover, we have an acceptance condition $\Acc \subseteq \prod_{p \in \Procs} S_p$.

Given a transition $t = (s,\act,s') \in \Delta_p$, we let $\source(t) = s$
and $\target(t) = s'$ denote the source and target states of $t$.
In addition, if $\act = \loc{a}$, then $t$ is an \emph{internal transition} and we let $\tlabel(t) = a$.
If $\act = \send{a}{\msg}{q}$, then $t$ is a \emph{send transition} and we let $\tlabel(t) = a$, $\tmsg(t) = \msg$,
and $\receiver(t) = q$.
Finally, if $\act = \rec{a}{\msg}{q}$, then $t$ is a \emph{receive transition} with $\tlabel(t) = a$, $\tmsg(t) = \msg$,
and $\sender(t) = q$.

A \emph{run} $\rho$ of $\A$ on an MSC
$M = (E,\prel,\mrel,\ploc,\lambda) \in \MSCs{\Procs}{\Sigma}$ is a mapping
associating with each event $e \in E_p$ a transition $\rho(e) \in \Delta_p$,
and satisfying the following conditions:
\begin{enumerate}
\item for all events $e \in E$, we have $\tlabel(\rho(e)) = \lambda(e)$,
\item for all $\prel$-minimal events $e \in E$, we have
  $\source(\rho(e))=\init_{p}$, where $p = \ploc(e)$,
\item for all process edges $(e,f) \in {\prel}$, we have $\target(\rho(e)) = \source(\rho(f))$,
\item for all internal events $e \in E$, $\rho(e)$ is an internal transition, and
\item for all message edges $e\mrel f$, 
  $\rho(e)$ and $\rho(f)$ are respectively send and receive transitions
  such that $\tmsg(\rho(e)) = \tmsg(\rho(f))$, $\receiver(\rho(e)) = \ploc(f)$,
  and $\sender(\rho(f)) = \ploc(e)$.
\end{enumerate}
To determine whether $\rho$ is accepting, we collect the last state $s_p$
of every process $p$.  If $E_p \neq \emptyset$, we let $s_p = \target(\rho(e))$,
where $e$ is the last event of $E_p$.  Otherwise, $s_p = \init_p$.
We say that $\rho$ is \emph{accepting} if $(s_p)_{p \in \Procs} \in \Acc$.

The \emph{language} $\msclang(\A)$ of $\A$ 
is the set of MSCs $M$ such that there
exists an accepting run of $\A$ on $M$.
Moreover, $\LangCFM \df \{\msclang(\A) \mid \A$ is a CFM$\}$.
Recall that, for these definitions, we have fixed $\Procs$ and $\Sigma$.

One of our main results states that CFMs and 
$\EMSO$ logic are expressively equivalent.  This solves a problem that was
stated as open in \cite{GKM07}:

\begin{theorem}\label{thm:main}
  $\Lang(\EMSO[\prel,\mrel,\le]) = \LangCFM$.
\end{theorem}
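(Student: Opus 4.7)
My plan is to prove the two inclusions $\LangCFM \subseteq \Lang(\EMSO[\prel,\mrel,\le])$ and $\Lang(\EMSO[\prel,\mrel,\le]) \subseteq \LangCFM$ separately, with the latter being the substantial direction.

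For the inclusion $\LangCFM \subseteq \Lang(\EMSO[\prel,\mrel,\le])$, I would proceed by the standard encoding of runs. Given a CFM $\A = ((\A_p)_{p \in \Procs},\Msg,\Acc)$ with $\A_p=(S_p,\init_p,\Delta_p)$, I introduce second-order variables to guess a run. Concretely, for each process $p$ and each transition $t \in \Delta_p$, take a variable $X_t$ marking the events $e$ with $\rho(e)=t$; one may similarly introduce $Y_m$ for each message $m \in \Msg$ to guess the message carried on each $\mrel$-edge. An FO formula then checks: each event in $E_p$ belongs to exactly one $X_t$ with $t \in \Delta_p$; the label of the transition matches $\lambda(e)$; process edges $e \prel f$ are consistent with $\target$ and $\source$; minimal events start in $\init_p$; for each $e \mrel f$, the two chosen transitions are matching send/receive with equal message guess; and the tuple of final states lies in $\Acc$ (which is expressible since each process either has a last event, characterized by being $\prel$-maximal, or is empty, characterized by the absence of a $p$-event). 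Everything needed is FO-definable from $\prel$, $\mrel$, and the labels, so the whole formula lies in $\EMSO[\prel,\mrel,\le]$.

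For the converse inclusion $\Lang(\EMSO[\prel,\mrel,\le]) \subseteq \LangCFM$, the plan is to first reduce EMSO to FO, and then invoke the two main technical contributions of the paper in sequence. Given an EMSO sentence $\exists X_1 \ldots \exists X_n.\,\fophi$ with $\fophi \in \FO[\prel,\mrel,\le]$, extend the alphabet from $\Sigma$ to $\Sigma' \df \Sigma \times 2^{\{1,\ldots,n\}}$, so that an $(\Sigma')$-MSC encodes both an $\Sigma$-MSC and a valuation of $X_1,\ldots,X_n$. Rewrite $\fophi$ as an FO sentence $\fophi'$ over $\Sigma'$ by replacing each subformula $x \in X_i$ with a disjunction over the appropriate labels; the class $\msclang(\exists \vec X.\fophi)$ is exactly the projection on $\Sigma$ of $\msclang(\fophi')$. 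Since CFMs are trivially closed under relabeling (each process can nondeterministically guess the second component of its labels and verify local consistency), it suffices to translate the FO sentence $\fophi'$ over $\Sigma'$ into a CFM.

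To achieve the FO-to-CFM step, I would chain together the two main results of the paper: step (i) from the introduction converts every FO sentence into an equivalent star-free PDL sentence (Section~\ref{sec:pdl}), and step (ii) converts every star-free PDL sentence into an equivalent CFM (Section~\ref{sec:pdl-cfm}). Composing these produces a CFM recognizing $\msclang(\fophi')$; projecting its alphabet back to $\Sigma$ yields a CFM for the original EMSO sentence. The main obstacle throughout is step (ii): since the happened-before relation $\le$ makes MSCs structures of unbounded degree, neither Hanf- nor Scott-style normal forms apply, and unlike the loop-free PDL of \cite{BKM-lmcs10}, our logic has a loop operator that forces a genuinely new automata construction to detect cyclic dependencies between processes using only finite local memory. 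Step (i), in turn, is nontrivial because we must eliminate the third variable from arbitrary FO formulas over a partial order induced by $\prel$ and $\mrel$, which is what motivates the two-dimensional PDL with converse and loop.
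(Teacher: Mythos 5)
Your proposal is correct and follows essentially the same route as the paper: the easy inclusion by the standard second-order encoding of runs, and the converse by reducing EMSO to FO over an enlarged alphabet (using closure of CFM languages under projection) and then invoking the FO-to-star-free-PDL translation followed by the PDL-to-CFM construction. This matches the paper's reduction of Theorem~\ref{thm:main} to Proposition~\ref{prop:fo-cfm} and its proof via Theorem~\ref{thm:FO-to-PDLm-main} and Theorem~\ref{thm:PDLp-to-CFM}.
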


It is standard to prove $\LangCFM \subseteq \Lang(\EMSO[\prel,\mrel])$:
The formula guesses an assignment of transitions to events
in terms of existentially quantified second-order variables (one for each transition)
and then checks, in its first-order kernel, that the assignment is indeed an (accepting) run.
As, moreover, the class $\LangCFM$ is closed under projection, the proof of
Theorem~\ref{thm:main} comes down to the proposition below
(whose proof is spread over Sections~\ref{sec:pdl} and \ref{sec:pdl-cfm}).
Note that the translation from $\FOle$ to CFMs is inherently non-elementary,
already when $|\Procs| = 1$ \cite{phd-stockmeyer}.

\begin{proposition}\label{prop:fo-cfm}
  $\Lang(\FO[\prel,\mrel,\le]) \subseteq \LangCFM$.
\end{proposition}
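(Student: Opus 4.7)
Following the roadmap sketched right after Proposition~\ref{prop:fo-cfm}, the plan is to route the translation through star-free PDL: (i) translate every $\FOle$ sentence into an equivalent $\sfPDL$ sentence, and (ii) translate every $\sfPDL$ sentence into an equivalent CFM. Composing the two yields a CFM recognizing $\msclang(\fophi)$ for any given $\fophi \in \FOle$, which proves the proposition.

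\paragraph{Step (i): $\FOle$ to $\sfPDL$.} Since $\sfPDL$ is a two-dimensional temporal logic, the natural inductive statement is: every FO formula $\fophi(x,y)$ with at most two free variables is equivalent to an $\sfPDL$ path expression $\pi$ in the sense that, for all MSCs $M$ and events $e,f$, $M,[x\mapsto e,y\mapsto f]\models\fophi$ iff $(e,f)\in\semM{\pi}$; formulas with a single free variable correspond to $\sfPDL$ state formulas. Atoms ($x=y$, $x\prel y$, $x\mrel y$, $x\le y$, $p(x)$, $a(x)$) are directly available through tests and the basic navigation primitives of $\sfPDL$. Boolean combinations of two-variable formulas translate to Boolean combinations of path expressions, for which star-freeness (i.e.\ closure under complement) is exactly what is needed. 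The crucial case is an existential quantifier $\exists z.\,\fophi(x,y,z)$ introducing a third variable: one would use a Kamp-style separation, writing the body as a Boolean combination of formulas in at most two of the variables $x,y,z$, from which the existential over $z$ can be re-encoded as composition/intersection of path expressions between $x$ and $y$. Carrying this out inductively simultaneously yields the three-variable property over MSCs.

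\paragraph{Step (ii): $\sfPDL$ to CFM.} The goal is to build, for each $\sfPDL$ sentence $\pdlsentence$, a CFM $\A_\pdlsentence$ with $\msclang(\A_\pdlsentence)=\msclang(\pdlsentence)$. The standard route is an induction on the structure of $\pdlsentence$ that has the CFM relabel each event with the truth values of all subformulas. Propositional connectives and local atoms are straightforward, and the existential modality $\existsp{\pi}{\fopsi}$ reduces, using closure under projection, to deciding path existence. The heart of the construction is the $\Loop{\pi}$ operator: the CFM must decide, locally at each event $e$, whether there is a $\pi$-path from $e$ back to $e$. Since such a path may cross arbitrarily many processes and channels, the CFM has to gossip enough summary information about the happened-before past (and, dually, the future) so that the endpoints of every potential loop can agree on its existence without centralized control. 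Mukund--Sohoni-style gossip, combined with star-freeness (which makes non-existence of a $\pi$-path itself an $\sfPDL$-expressible property), should make this manageable.

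\paragraph{Main obstacle.} The routine logical manipulations of Step (i) should be technical but largely follow the Kamp-style separation blueprint. The real difficulty is the loop operator in Step (ii): implementing $\Loop{\pi}$ in a CFM with unbounded FIFO channels, where the two endpoints of the loop may be executed by distant processes, is genuinely new — it is precisely the feature that prevents us from reusing the loop-free PDL-to-CFM construction of \cite{BKM-lmcs10}, and will require fresh automata-theoretic ideas.
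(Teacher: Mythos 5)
You have correctly reproduced the paper's two-step architecture (FO $\to$ star-free PDL $\to$ CFM), but both steps contain genuine gaps, and the mechanisms you propose for the hard cases are not the ones that work. In Step~(i), an induction on formulas with at most two free variables does not close: the body of $\exists z.\,\fophi(x,y,z)$ has three free variables, so the induction hypothesis no longer applies, and invoking a ``Kamp-style separation'' at that point begs the question --- no separation theorem is available over MSCs, and establishing one is essentially the hard content of the result. The paper instead proves a stronger inductive statement (Proposition~\ref{thm:FO-to-PDLm}) for formulas with \emph{arbitrarily many} free variables: each is a Boolean combination of path formulas $\pi(x,y)$ over pairs of its free variables with $\pi$ in the $\cup/\cap/\compl$-free fragment $\sfPDLm$. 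The existential quantifier is then eliminated using the key structural fact (Lemmas~\ref{lem:monotone} and~\ref{lem:image}) that for $\pi\in\sfPDLm$ the image $\semM{\pi}(e)$ is an interval $[\minpie{\pi}{e},\maxpie{\pi}{e}]$ on a single process intersected with the set of events satisfying $\existsptrue{\pi^{-1}}$; a common witness for $\bigwedge_j\pi_j(y_j,x_n)$ therefore exists iff the maximal $\min$ precedes the minimal $\max$ with a suitable event in between, which is again a Boolean combination of path formulas between pairs of the $y_j$. The same interval property is what allows complements to be rewritten positively (Lemma~\ref{lem:closure-complement}). Your plan of re-encoding the quantifier ``as composition/intersection of path expressions'' would leave $\cap$ and $\compl$ nested inside path formulas, where the interval property fails --- and then Step~(ii) breaks, because the CFM construction only handles $\sfPDLm$ event formulas.

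In Step~(ii), gossiping ``summary information about the happened-before past'' is precisely the technique that does not extend to unbounded channels (which is why the problem was open): the summaries needed to decide existence of a $\pi$-loop are unbounded, and Mukund--Sohoni-style timestamping is only available in the channel-bounded setting. The paper's construction for $\Loopname$ is different: general loops are first reduced (Lemma~\ref{lem:loops}, Claims~\ref{claim:minimal} and~\ref{claim:consecutive}) to loops over the \emph{functional} path formulas $\maxpi{\pi}$ and $(\maxpi{\pi})\cdot{\leftp}$, and $\Loop{\maxpi{\pi}}$ is recognized by nondeterministically colouring events with four colours so that an event is a fixed point of $e\mapsto\maxpie{\pi}{e}$ iff it can consistently carry the same colour as its image; monotonicity guarantees that the non-fixed-point part of this functional graph is a forest, hence $2$-colourable, and an alternation condition along each process line makes the guessed colouring verifiable by a CFM. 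Without these two ideas --- the interval/min-max normal form and the colouring argument for functional loops --- the proposal remains a plan rather than a proof.
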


\section{Star-Free Propositional Dynamic Logic}\label{sec:pdl}

In this section, we introduce a star-free version of propositional dynamic logic
and show that it is expressively equivalent to $\FO[\prel,\mrel,\leq]$.  This is
the second main result of the paper.
Then, in Section~\ref{sec:pdl-cfm}, we show how to translate star-free PDL 
formulas into CFMs.

\subsection{Syntax and Semantics}

Originally, propositional dynamic logic (PDL) has been used to reason about
program schemas and transition systems \cite{FisL79}.  Since then, PDL and its
extension with intersection and converse have developed a rich theory with
applications in artificial intelligence and verification
\cite{HalpernM92,GiacomoL94,LangeLutzJSL05,Lange06,Goeller2009}.  It has also
been applied in the context of MSCs \cite{BKM-lmcs10,Mennicke13}.

Here, we introduce a \emph{star-free} version of PDL, denoted $\sfPDLall$.
It will serve as an ``interface'' between FO logic and CFMs.
The syntax of $\sfPDLall$ and its fragment $\sfPDLm$ is given by the following grammar:
\setlength{\tabcolsep}{20pt}
\renewcommand{\arraystretch}{1.5}
\begin{center}
\fbox{
\scalebox{0.95}{
\parbox{1\textwidth}{
$\sfPDLall = \sfPDLall[\Loopname,\cup,\cap,\compl]$\\[0.8ex]
$\begin{array}{|ll|l}
\cline{1-2}
\sfPDLm & \pdlsentence ::= \E\varphi \mid \pdlsentence\vee\pdlsentence \mid \neg\pdlsentence \\
  & \varphi ::= p \mid a \mid \varphi \lor \varphi \mid \lnot \varphi
         \mid \existsp{\pi}{\varphi} \mid \Loop \pi \\[0.5ex]
  & \pi ::= {\prel} \mid {\leftmove} \mid {\mrel_{p,q}} \mid {\mrel_{p,q}^{-1}}
        \mid {\rightg \varphi} \mid {\leftg \varphi} \mid \jump p {r}
        \mid \test \varphi \mid \pi \cdot \pi & \pi \cup \pi \mid \pi \cap \pi \mid \pic \pi\\
\cline{1-2}
\end{array}$
}}}
\end{center}
where $p,r \in \Procs$, $q \in \Procs \setminus \{p\}$, and $a \in \Sigma$.  We
refer to $\pdlsentence$ as a \emph{sentence}, to $\varphi$ as an \emph{event formula}, and
to $\pi$ as a \emph{path formula}.  We name the logic star-free because we use
the operators $(\cup,\cap,\compl,\cdot)$ of star-free regular expressions instead of
the regular-expression operators $(\cup,\cdot,\ast)$ of classical PDL. However, the
formula $\rightg{\varphi}$, whose semantics is explained below, can be seen as a
restricted use of the construct $\pi^\ast$.

A sentence $\pdlsentence$ is evaluated wrt.\ an MSC $M =
(E,\prel,\mrel,\ploc,\lambda)$.
An event formula $\varphi$ is evaluated wrt.\ $M$ and an event $e \in E$.
Finally, a path formula $\pi$ is evaluated over \emph{two} events.
In other words, it
defines a binary relation $\sem M \pi \subseteq E \times E$.  We often write
$M,e,f \models \pi$ to denote $(e,f) \in \sem M \pi$.  Moreover, for $e \in E$,
we let $\sem M \pi(e) \df \{f \in E \mid (e,f) \in \sem M \pi\}$.  When $M$ is
clear from the context, we may write $\semM \pi$ instead of $\sem M \pi$.  The
semantics of sentences, event formulas, and path formulas is given in
Table~\ref{table:sem-PDL}.

\begin{table}[t]
\caption{The semantics of $\sfPDLall$\label{table:sem-PDL}}
\centering
$\begin{array}{ll}
\hline
\multicolumn{2}{l}{
M \models \E\varphi \textup{ if } M,e \models \varphi \text{ for some event } e\in E}
\\
M \models \neg\pdlsentence \textup{ if } M \not\models \pdlsentence &
M \models \pdlsentence_1 \vee \pdlsentence_2 \textup{ if } M \models \pdlsentence_1 \textup{ or } M \models \pdlsentence_2\\[0.5ex]
\hline
M,e \models p \textup{ if } \ploc(e) = p &
M,e \models \existsp{\pi}{\varphi} \textup{ if } \exists f \in \sem M \pi (e): M,f \models \varphi
\\
M,e \models a \textup{ if } \lambda(e) = a &
M,e \models \Loop \pi \textup{ if } (e,e) \in \sem M \pi
\\
M,e \models \neg\varphi \textup{ if } M,e \not\models \varphi &
M,e \models \varphi_1 \vee \varphi_2 \textup{ if } M,e \models \varphi_1 \textup{ or } M,e \models \varphi_2\\[0.5ex]
\hline
\sem M {\prel} \df \{ (e,f) \in E \times E \mid e \to f \}~~~~ &
\sem M {\mrel_{p,q}} \df \{ (e,f) \in E_p \times E_q \mid e \mrel f \}
\\
\sem M {\leftmove} \df \{ (f,e) \in E \times E \mid e \prel f \} &
\sem M {\mrel_{p,q}^{-1}} \df \{ (f,e) \in E_q \times E_p \mid e \mrel f \}
\\
\sem M {\jump p r} \df E_p \times E_r &
\sem M {\test \varphi} \df \{ (e,e) \mid e \in E : M,e \models \varphi \}
\\
\multicolumn{2}{l}{
\sem M {\rightg \varphi} \df
\{ (e,f) \in E \times E \mid e \lessproc f \textup{ and } \forall g \in E\textup{: } e \lessproc g \lessproc f \implies M,g \models \varphi\}
}
\\
\multicolumn{2}{l}{
\sem M {\leftg \varphi} \df
\{ (e,f) \in E \times E \mid f \lessproc e \textup{ and } \forall g \in E\textup{: } f \lessproc g \lessproc e \implies M,g \models \varphi\}
}
\\
\multicolumn{2}{l}{
\sem M {\pi_1 \cdot \pi_2} \df  \{ (e,g) \in E \times E \mid \exists f \in E:
                        (e,f)\in\sem{M}{\pi_1} \land (f,g)\in\sem{M}{\pi_2} \}
}
\\
\sem M {\pi_1 \cup \pi_2} \df \sem M {\pi_1} \cup \sem M {\pi_2} &
\sem M {\pic \pi} \df (E \times E) \setminus \sem M {\pi}\\
\sem M {\pi_1 \cap \pi_2} \df \sem M {\pi_1} \cap \sem M {\pi_2}\\[0.5ex]
\hline
\end{array}$
\end{table}

\begin{example}
  The usual temporal logic modalities can be expressed easily. For instance, 
  $\existsp{\prel}{\varphi}$ means that the \emph{next} event on the same 
  process satisfies $\varphi$, and $\existsp{\rightg{\varphi}}{\psi}$ corresponds 
  to the \emph{strict until} $\mathsf{X}(\varphi\mathbin{\mathsf{U}}\psi$). The 
  corresponding past modalities can be written similarly.
\end{example}

\begin{example}
Consider again the MSC $M$ from Figure~\ref{fig:msc} and the path formula
$\pi = \mrel^{-1}_{p_1,p_3} {\prel} {\mrel_{p_1,p_2}} {\prel} {\mrel_{p_2,p_3}} {\prel}$.
We have $M,g_5 \models \Loop {\pi}$. Moreover, $(e_2,e_5) \in \sem M {\rightg{{\bbullet}}}$ but
$(e_2,e_6) \not\in \sem M {\rightg{{\bbullet}}}$.
\end{example}

We use the usual abbreviations for sentences and event formulas such as
implication and conjunction.  Moreover, $\True \df p \vee \neg p$ (for some
arbitrary process $p \in \Procs$) and $\False \df \neg\True$.  Finally, we
define the event formula $\existsptrue{\pi} \df \existsp{\pi}{\True}$, and the
path formulas ${\rightp} \df {\rightg \True}$ and ${\righta} \df {\rightp} \cup
\test \True$.

Note that there are some redundancies in the logic. For example (letting $\equiv$ denote logical equivalence), ${\prel} \equiv {\rightg{\False}}$, $\pi_1 \cap \pi_2 \equiv \pic{(\pic \pi_1 \cup \pic \pi_2)}$, and $\Loop \pi \equiv \existsptrue{\test{\True} \cap \pi}$.  Some of them are necessary to define certain subclasses of $\sfPDLall$.  For every $R \subseteq
\{\Loopname,\cup,\cap,\compl\}$, we let $\sfPDLall[R]$ denote the fragment of
$\sfPDLall$ that does not make use of $\{\Loopname,\cup,\cap,\compl\} \setminus
R$.  In particular, $\sfPDLall = \sfPDLall[\Loopname,\cup,\cap,\compl]$.  Note
that, syntactically, ${\righta}$ is not contained in $\sfPDLm$ since union is 
not permitted.

Note that $\sfPDLall[\cup]$ over MSCs is analogous to Conditional XPath over trees \cite{Marx05}.\footnote{Thanks to Sylvain Schmitz for pointing this out.} However, while Marx showed that Conditional XPath is expressively complete for FO logic over
ordered unranked trees, our expressive completeness result over MSCs crucially relies on the $\Loopname$ modality.

\subsection{Main Results}

Let $\FO^3[\prel,\mrel,\le]$ be the set of formulas from $\FO[\prel,\mrel,\le]$
that use at most three different first-order
variables (however, a variable can be quantified and reused several times in a
formula). The main result of this section is that, for formulas with zero or one
free variable, the logics
$\FO[\prel,\mrel,\le]$, $\FO^3[\prel,\mrel,\le]$, $\PDL$, and $\PDLm$ are
expressively equivalent.

Consider $\FO[\prel,\mrel,\le]$ formulas $\Phi_0$, $\Phi_1(x)$ and $\Phi_2(x,y)$
with respectively zero, one, and two free variables (hence, $\Phi_0$ is a
sentence).  Consider also some $\PDL$ sentence $\pdlsentence$, event formula
$\varphi$, and path formula $\pi$.  The respective formulas are equivalent,
written $\Phi_0\equiv\pdlsentence$, $\Phi_1(x)\equiv\varphi$, and
$\Phi_2(x,y)\equiv\pi$, if, for all MSCs $M$ and all events $e,f$ in
$M$, we have
\begin{align*}
  M &\models \Phi_0 && \text{iff} & M &\models \pdlsentence \\
  M,[x\mapsto e] &\models \Phi_1(x) && \text{iff} & M,e &\models \varphi \\
  M,[x \mapsto e, y \mapsto f] &\models \Phi_2(x,y) && \text{iff} & M,e,f &\models \pi
\end{align*}

We start with a simple observation, which can be shown easily by induction:

\begin{proposition}\label{prop:pdl-fo3}
  Every $\PDL$ formula is equivalent to some $\FO^3[\prel,\mrel,\le]$ formula.
  More precisely, for every $\PDL$ sentence $\pdlsentence$, event formula $\varphi$,
  and path formula $\pi$, there exist some $\FO^3[\prel,\mrel,\le]$ sentence
  $\widetilde{\pdlsentence}$, formula $\phifo\varphi(x)$ with one free variable, and
  formula $\pifo\pi(x,y)$ with two free variables, respectively, such that, 
  $\pdlsentence\equiv\widetilde{\pdlsentence}$, $\varphi\equiv\phifo\varphi(x)$,
  and $\pi\equiv\pifo\pi(x,y)$.
\end{proposition}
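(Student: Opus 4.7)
The plan is to prove the three statements simultaneously by mutual induction on the structure of the PDL sentence $\pdlsentence$, event formula $\varphi$, and path formula $\pi$. I fix once and for all a pool of three variable names $\{x,y,z\}$ and maintain as an invariant that $\widetilde{\pdlsentence}$ uses only variables from this pool (none free), that $\phifo\varphi$ uses only variables from this pool with $x$ the sole free one, and that $\pifo\pi$ uses only variables from this pool with exactly $x$ and $y$ free. Provided this invariant is preserved, the result follows.

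The atomic and purely Boolean cases are routine. I would set $\widetilde{\E\varphi} = \exists x.\,\phifo\varphi(x)$; the propositional connectives at sentence and event-formula level translate pointwise. For event atoms, $p$ becomes $p(x)$ and $a$ becomes $a(x)$. For path atoms, I take $\prel \leadsto x\prel y$, $\leftmove \leadsto y \prel x$, $\mrel_{p,q} \leadsto p(x)\wedge q(y)\wedge x\mrel y$, $\mrel_{p,q}^{-1} \leadsto p(y)\wedge q(x)\wedge y\mrel x$, $\jump{p}{r} \leadsto p(x)\wedge r(y)$, and $\test\varphi \leadsto x=y\wedge\phifo\varphi(x)$. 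The cases $\pi_1\cup\pi_2$, $\pi_1\cap\pi_2$, $\pic\pi$ translate directly as $\pifo{\pi_1}\vee\pifo{\pi_2}$, $\pifo{\pi_1}\wedge\pifo{\pi_2}$, and $\neg\pifo\pi$; each preserves the free-variable profile and introduces no new variable names. Note that $\lessproc$ (needed below) is two-variable-definable from $\le$ and the process predicates as $x\le y \wedge \bigvee_{p\in\Procs} (p(x)\wedge p(y)) \wedge \neg(x=y)$.

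The delicate cases are those where a subformula must be evaluated at a position different from its current free variable, namely $\existsp\pi\varphi$, $\Loop\pi$, $\rightg\varphi$, $\leftg\varphi$, and $\pi_1\cdot\pi_2$. A naive renaming of free variables would risk variable capture by the bound occurrences inside the already-constructed subformulas, and we cannot call on a fourth name. The standard workaround is the quantifier-based rebinding trick: to refer to $\psi(v)$ when $\psi$ is only available as $\psi(u)$, write $\exists u.\,(u=v \wedge \psi(u))$, which freshly re-binds $u$ outside the body of $\psi$ while forcing it to equal $v$. Using this device I set
\[
  \phifo{\existsp\pi\varphi}(x) \;=\; \exists y.\bigl(\pifo\pi(x,y) \wedge \exists x.(x=y \wedge \phifo\varphi(x))\bigr),
\]
\[
  \phifo{\Loop\pi}(x) \;=\; \exists y.\bigl(y=x \wedge \pifo\pi(x,y)\bigr),
\]
\[
  \pifo{\rightg\varphi}(x,y) \;=\; x\lessproc y \;\wedge\; \forall z.\bigl(x\lessproc z \wedge z\lessproc y \,\Rightarrow\, \exists x.(x=z \wedge \phifo\varphi(x))\bigr),
\]
and analogously for $\leftg\varphi$. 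For concatenation I use $z$ as the meeting point and rebind on both sides:
\[
  \pifo{\pi_1\cdot\pi_2}(x,y) \;=\; \exists z.\bigl(\exists y.(y=z \wedge \pifo{\pi_1}(x,y)) \,\wedge\, \exists x.(x=z \wedge \pifo{\pi_2}(x,y))\bigr).
\]
In each case a direct check against Table~\ref{table:sem-PDL} confirms equivalence, and the formula uses only names from $\{x,y,z\}$.

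There is no real obstacle in this direction; the only point requiring attention is to consistently apply the rebinding trick whenever the inductive hypothesis yields a subformula whose free variable is not already the desired one. Everything else is bookkeeping, and the invariant on free variables and on the three-name pool propagates through every construct. The harder converse direction---translating $\FO[\prel,\mrel,\le]$ (equivalently, $\FO^3[\prel,\mrel,\le]$) into $\PDL$---is orthogonal to this proposition and is treated separately in the paper.
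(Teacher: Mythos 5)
Your proof is correct and follows exactly the route the paper intends (the paper merely remarks that the proposition ``can be shown easily by induction'' and leaves the details out): a structural induction maintaining the three-name pool $\{x,y,z\}$, with the standard quantifier-rebinding trick $\exists u.(u=v\wedge\psi(u))$ handling the cases ($\existsp{\pi}{\varphi}$, $\Loop{\pi}$, $\rightg{\varphi}$, $\leftg{\varphi}$, $\pi_1\cdot\pi_2$) where a subformula must be re-evaluated at a new position. All the individual clauses check out against the semantics in Table~\ref{table:sem-PDL}, and the free-variable invariant propagates as claimed.
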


The main result is a \emph{strong} converse of Proposition~\ref{prop:pdl-fo3}:

\begin{theorem}\label{thm:FO-to-PDLm-main}
  Every $\FO[\prel,\mrel,\le]$ formula with at most two free variables is
  equivalent to some $\PDL$ formula.
  More precisely, for every
  $\FO[\prel,\mrel,\le]$ sentence $\Phi_0$, formula $\Phi_1(x)$ with one free
  variable, and formula $\Phi_2(x,y)$ with two free variables, there exist some
  $\PDLm$ sentence $\pdlsentence$, $\PDLm$ event formula $\varphi$, and $\PDLm$
  path formulas $\pi_{ij}$, respectively, such that, $\Phi_0\equiv\pdlsentence$,
  $\Phi_1(x)\equiv\varphi$, and $\Phi_2(x,y)\equiv\bigcup_i\bigcap_j\pi_{ij}$.
\end{theorem}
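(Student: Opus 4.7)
The plan is to carry out a simultaneous structural induction on the $\FO$ formula, proving in parallel the three claims: (a) every sentence becomes a $\PDLm$ sentence, (b) every formula with one free variable becomes a $\PDLm$ event formula, and (c) every formula with two free variables becomes a finite union of intersections $\bigcup_i \bigcap_j \pi_{ij}$ of $\PDLm$ path formulas. The atoms $x = y$, $p(x)$, $a(x)$, $x \prel y$, and $x \mrel y$ translate directly to $\test{\true}$, to event-formula atoms, and to path atoms, respectively; the happened-before atom $x \le y$ decomposes into a finite union of path formulas, one per non-repeating sequence of processes through which a chain of $\prel$ and $\mrel_{p,q}$ steps travels (such a minimal chain can always be shortened to avoid process repetitions), which fits the $\bigcup \bigcap$ template in (c).

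For Boolean combinations in (a) and (b), $\PDLm$ provides $\vee$ and $\neg$ natively. In (c), the $\bigcup \bigcap$ shape is closed under $\cup$ and under $\cap$ via distributivity; the delicate case is negation, which after De~Morgan reduces to expressing the complement $\pic{\pi}$ of a single $\PDLm$ path formula as a union of intersections of $\PDLm$ path formulas. This is handled by a secondary analysis on the syntactic shape of $\pi$, occasionally invoking (a) and (b) to package global or local failure conditions as event-formula tests composed with skeletal path components.

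The heart of the proof, and its main obstacle, is the existential-quantifier case for (c): translating $\exists z.\,\psi(x,y,z)$ into a binary path formula in $x, y$. Because $\PDL$ path formulas are inherently two-dimensional, the variable $z$ must be eliminated rather than propagated. I would first perform a finite case analysis on the happened-before configuration of the triple $(x,y,z)$ (who precedes whom, whether pairs are concurrent, whether pairs share a process, etc.). In each case, $\psi$ can be rewritten as a Boolean combination of constituents each involving at most two of the variables, together with a one-variable property of $z$. The inductive hypothesis translates each constituent, and the pieces reassemble into a path from $x$ through $z$ to $y$ of the generic form $\pi_{xz} \cdot \test{\varphi_z} \cdot \pi_{zy}$. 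When $\psi$ imposes constraints tying $z$ back to $x$ or $y$ through a cycle, the $\Loopname$ operator is placed inside the event-formula test $\varphi_z$ at the $z$-position.

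The quantifier case in (b), $\exists y.\,\Phi_1(x,y)$, uses (c) to obtain $\bigcup_i \bigcap_j \pi_{ij}$ and must then produce an event formula at $x$ expressing the existence of a common witness $y$ in $\bigcap_j \sem{M}{\pi_{ij}}$. For a single conjunct this is just $\langle \pi_{i1} \rangle \true$; for several conjuncts the intersection is encoded using $\Loopname$ applied to compositions of the $\pi_{ij}$ together with their $\PDLm$-definable converses (closure under converse is routine since each atomic path construct has a converse counterpart). The sentence case $\exists x.\,\Phi_1(x)$ is then the $\PDLm$ sentence $\E(\varphi)$. I expect the principal difficulty to lie in the exhaustive case analysis for eliminating $z$ and in ensuring that the resulting path formulas stay within the $\bigcup \bigcap$ shape available to $\PDLm$; the three-variable property then follows as an immediate corollary, since the induction never introduces more than the two variables $x, y$ into path formulas, with any transient third variable absorbed into local $\Loopname$ tests or into path composition.
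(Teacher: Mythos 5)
There is a genuine gap, and it sits exactly where you locate the ``main obstacle''. Your induction hypothesis is restricted to formulas with at most two free variables, but the subformula under a quantifier generally has more: in $\exists z.\,\psi(x,y,z)$ the formula $\psi$ has three free variables (and after several nested quantifiers, arbitrarily many), so claims (a)--(c) give you nothing to apply to $\psi$. You propose to bridge this by a ``finite case analysis on the happened-before configuration of the triple'' that rewrites $\psi$ as a Boolean combination of two-variable constituents --- but that rewriting \emph{is} the theorem (it is essentially the three-variable property you intend to derive as a corollary), and no case analysis on order types of $(x,y,z)$ will produce it, since $\psi$ may itself contain quantifiers whose truth depends jointly on all three positions. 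The paper instead strengthens the induction to arbitrarily many free variables (Proposition~\ref{thm:FO-to-PDLm}): every $\FOle$ formula with free variables $x_1,\dots,x_n$ is equivalent to a Boolean combination of atoms $\pi(x_i,x_j)$ ranging over all pairs of free variables.

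The second, related, missing ingredient is the structural fact that makes the quantifier elimination possible at all: every $\PDLm$ path formula is \emph{monotone} (Lemma~\ref{lem:monotone}), whence $\semM{\pi}(e)$ is the interval from $\minpie{\pi}{e}$ to $\maxpie{\pi}{e}$ intersected with the unary test $\existsptrue{\pi^{-1}}$ (Lemma~\ref{lem:image}), and the endpoint maps $\minpi{\pi}$, $\maxpi{\pi}$ are themselves $\PDLm$-definable. This is what turns the simultaneous constraint ``there exists $z$ with $\pi_j(y_j,z)$ for all $j\in J$'' into binary comparisons among the $y_j$ (the intersection of intervals is nonempty iff the largest left endpoint precedes the smallest right endpoint and some event in between passes all the tests), and it is also what underlies the complement lemma you only gesture at. Without it your quantifier step for (b) does not work either: for three or more conjuncts, ``$\Loopname$ applied to compositions of the $\pi_{ij}$ and their converses'' only certifies \emph{pairwise} witnesses ($\Loop{\pi_1\cdot\pi_2^{-1}}$ yields a $y$ common to $\pi_1$ and $\pi_2$, but conjoining such loops does not force a single $y$ common to all conjuncts). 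So the proposal is not a small repair away from a correct argument; it is missing the monotonicity/interval machinery on which the paper's proof is built.
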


From Theorem~\ref{thm:FO-to-PDLm-main} and Proposition~\ref{prop:pdl-fo3},
we deduce that $\FO$ has the three variable property:

\begin{corollary}\label{thm:fo3}
  $\Lang(\FO[\prel,\mrel,\le]) = \Lang(\FO^3[\prel,\mrel,\le])$.
\end{corollary}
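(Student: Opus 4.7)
The plan is to derive the corollary as an immediate consequence of the two results just stated, by round-tripping an arbitrary $\FO[\prel,\mrel,\le]$ sentence through $\PDLm$. No new combinatorial work on MSCs is needed: the heavy lifting lives entirely in Theorem~\ref{thm:FO-to-PDLm-main}, while Proposition~\ref{prop:pdl-fo3} supplies a cheap return trip into a three-variable fragment.

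First I would dispatch the easy inclusion: since $\FO^3[\prel,\mrel,\le]$ is syntactically a subclass of $\FO[\prel,\mrel,\le]$, one has $\Lang(\FO^3[\prel,\mrel,\le]) \subseteq \Lang(\FO[\prel,\mrel,\le])$ without any argument. For the reverse inclusion, let $L \in \Lang(\FO[\prel,\mrel,\le])$, so $L = \msclang(\Phi_0)$ for some sentence $\Phi_0 \in \FO[\prel,\mrel,\le]$. Applying Theorem~\ref{thm:FO-to-PDLm-main} to $\Phi_0$ produces a $\PDLm$ sentence $\pdlsentence$ with $\Phi_0 \equiv \pdlsentence$, and then applying Proposition~\ref{prop:pdl-fo3} to $\pdlsentence$ yields an $\FO^3[\prel,\mrel,\le]$ sentence $\widetilde{\pdlsentence}$ with $\pdlsentence \equiv \widetilde{\pdlsentence}$. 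Transitivity of $\equiv$ gives $\Phi_0 \equiv \widetilde{\pdlsentence}$, so $L = \msclang(\widetilde{\pdlsentence}) \in \Lang(\FO^3[\prel,\mrel,\le])$.

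Since the statement is about languages (i.e.\ sentences with no free variables), only the sentence clauses of Theorem~\ref{thm:FO-to-PDLm-main} and Proposition~\ref{prop:pdl-fo3} are invoked; the more delicate two-variable clause with its $\bigcup_i \bigcap_j \pi_{ij}$ normal form is not needed here. There is effectively no obstacle: both ingredients have already been established, and the proof is a one-line composition. The only care required is to note that Proposition~\ref{prop:pdl-fo3} indeed lands in the \emph{three}-variable fragment (not merely in $\FO$), which is the whole point of explicitly stating it with the bound $\FO^3$.
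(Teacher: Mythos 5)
Your proof is correct and matches the paper's intended argument exactly: the paper derives the corollary directly from Theorem~\ref{thm:FO-to-PDLm-main} and Proposition~\ref{prop:pdl-fo3} by the same round trip through a $\PDLm$ sentence, relying on the fact that $\PDLm$ is a fragment of $\PDL$ so that Proposition~\ref{prop:pdl-fo3} applies. Nothing is missing.
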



\subsection{From $\FO$ to $\boldsymbol{\PDL}$}\label{sec:fo-pdl}

In the remainder of this section, we give the translation from $\FO$ to $\PDL$.
We start with some basic properties of $\PDL$. First, the converse of a $\PDL$ formula is definable
in $\PDL$ (easy induction on $\pi$).

\begin{lemma}\label{lem:inverse}
  Let $R \subseteq \{\Loopname,\cup,\cap,\compl\}$ and $\pi \in \sfPDL[R]$ be a path formula.
  There exists $\pi^{-1} \in \sfPDL[R]$ such that, for all MSCs $M$,
  $\sem M {\pi^{-1}} = \sem M {\pi}^{-1} = \{(f,e) \mid (e,f) \in \sem M {\pi}\}$.
\end{lemma}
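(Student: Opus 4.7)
The plan is to proceed by structural induction on the path formula $\pi$, defining $\pi^{-1}$ syntactically case by case and checking at each step that (a) the semantics matches the set-theoretic inverse, and (b) no operator from $\{\Loopname,\cup,\cap,\compl\} \setminus R$ is introduced, so that $\pi^{-1}$ stays in $\sfPDL[R]$ whenever $\pi$ does. Since the event formulas occurring inside path formulas (through $\rightg{\varphi}$, $\leftg{\varphi}$, and $\test{\varphi}$) need not themselves be inverted, the induction concerns only path constructors.

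For the atomic cases I would simply swap symbols that are already syntactic converses of each other: set $({\prel})^{-1} \df {\leftmove}$, $({\leftmove})^{-1} \df {\prel}$, $({\mrel_{p,q}})^{-1} \df {\mrel_{p,q}^{-1}}$, $({\mrel_{p,q}^{-1}})^{-1} \df {\mrel_{p,q}}$, $(\jump{p}{r})^{-1} \df \jump{r}{p}$, and $(\test{\varphi})^{-1} \df \test{\varphi}$. For the ``until'' style atoms the semantics of $\rightg{\varphi}$ and $\leftg{\varphi}$ are mirror images by definition, so I would set $(\rightg{\varphi})^{-1} \df \leftg{\varphi}$ and $(\leftg{\varphi})^{-1} \df \rightg{\varphi}$. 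Each of these is available in $\sfPDL[R]$ independently of $R$, so no new operator enters.

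For the inductive constructors I would use the standard identities: $(\pi_1 \cdot \pi_2)^{-1} \df \pi_2^{-1} \cdot \pi_1^{-1}$, $(\pi_1 \cup \pi_2)^{-1} \df \pi_1^{-1} \cup \pi_2^{-1}$, $(\pi_1 \cap \pi_2)^{-1} \df \pi_1^{-1} \cap \pi_2^{-1}$, and $(\pic{\pi})^{-1} \df \pic{(\pi^{-1})}$. Semantic correctness in each case is an immediate set-theoretic check (e.g.\ for complement, $(E \times E) \setminus \sem{M}{\pi}^{-1} = ((E \times E) \setminus \sem{M}{\pi})^{-1}$). Crucially, each case uses exactly the same top-level operator in $\pi^{-1}$ as in $\pi$, so by the induction hypothesis $\pi^{-1} \in \sfPDL[R]$ whenever $\pi \in \sfPDL[R]$.

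There is no real obstacle here; the only point that deserves a line of justification is the preservation of the fragment $\sfPDL[R]$, which follows because the rewrite rules above are ``operator-neutral'' and because the atomic cases use only the base path formulas that are present in every fragment. The event formulas embedded in $\rightg{\varphi}$, $\leftg{\varphi}$, and $\test{\varphi}$ are left untouched, so no auxiliary claim about event-formula inversion is needed.
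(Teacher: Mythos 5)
Your proof is correct and follows exactly the route the paper intends: the paper dispatches this lemma with the remark ``easy induction on $\pi$,'' and your case analysis (swapping the syntactically converse atoms, reversing concatenation, pushing the inverse through $\cup$, $\cap$, $\compl$, and leaving embedded event formulas untouched so the fragment $\sfPDL[R]$ is preserved) is precisely that induction, spelled out. Nothing is missing.
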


Given a $\PDLm$ path formula $\pi$, we denote by $\Comp{\pi}$ the set of 
pairs $(p,q) \in \Procs \times \Procs$ such that there may be a $\pi$-path from some event 
on process $p$ to some event on process $q$. Formally, we let
$\Comp{\prel}=\Comp{\leftmove}=\Comp{\rightg{\varphi}}=\Comp{\leftg{\varphi}}=\Comp{\test{\varphi}}=\id$,
where $\id=\{(p,p)\mid p\in\Procs\}$; 
$\Comp{\mrel_{p,q}}=\Comp{\mrel_{q,p}^{-1}}=\{(p,q)\}$; $\Comp{\jump{p}{r}}=\{(p,r)\}$; and
$\Comp{\pi_1\cdot\pi_2}=\Comp{\pi_2}\circ\Comp{\pi_1}=\{(p,r)\mid \exists q:
(p,q)\in\Comp{\pi_1}, (q,r)\in\Comp{\pi_2}\}$.

Notice that, for all path formulas $\pi \in \PDLm$, the relation $\Comp{\pi}$ is
either empty or a singleton $\{(p,q)\}$ or the identity $\id$.  Moreover,
$M,e,f\models\pi$ implies $(\ploc(e),\ploc(f))\in\Comp{\pi}$. 
Therefore, all events in $\semM{\pi}(e)$ are on the same process, and if this 
set is nonempty (i.e., if $M,e\models\existsptrue{\pi}$), then 
$\minpie{\pi}{e}$ and $\maxpie{\pi}{e}$ are well-defined.

\begin{example}
Consider the MSC from Figure~\ref{fig:msc} and
$\pi = {\rightp}{\mrel_{p_1,p_2}}{\prel}{\mrel_{p_2,p_3}}{\prel}$.
We have $\Comp{\pi} = \{(p_1,p_3)\}$. Moreover, $\minpie{\pi}{e_2} = g_4$ and $\maxpie{\pi}{e_2} = g_5$.
\end{example}

We say that $\pi \in \PDLm$ is \emph{monotone} if, for all MSCs $M$ and events 
$e,f$ such that $M,e \models \existsptrue \pi$, $M,f \models \existsptrue \pi$,
and $e \leproc f$, we have $\minpie \pi e \leproc \minpie \pi f$ and $\maxpie
\pi e \leproc \maxpie \pi f$. 
Lemmas~\ref{lem:min-conc} and \ref{lem:monotone} are shown by simultaneous induction.

\begin{lemma}\label{lem:min-conc}
  Let $\pi_1,\pi_2 \in \PDLm$ be path formulas. 
  For all MSCs $M$ and events $e$ such that $M,e\models\existsptrue{\pi_1 \cdot \pi_2}$, we have
  \begin{align*}
    \minpie{\pi_1 \cdot \pi_2}{e} & = \minpie {\pi_2} {\minpie {\pi_1 \cdot \test{\existsptrue {\pi_2}}} {e}} \text{ and }
      \\
    \maxpie{\pi_1 \cdot \pi_2}{e} & = \maxpie {\pi_2} {\maxpie {\pi_1 \cdot \test{\existsptrue {\pi_2}}} {e}} \, .
  \end{align*}
\end{lemma}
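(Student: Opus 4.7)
The plan is to prove this directly from the definitions, using the monotonicity of $\pi_2$ as the key ingredient supplied by Lemma~\ref{lem:monotone} (which is proved by simultaneous induction with the present lemma). I will treat only the $\min$ case in detail; the $\max$ case is entirely symmetric.

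First I would observe that, since $\pi_1 \cdot \pi_2 \in \PDLm$, the set $\Comp{\pi_1 \cdot \pi_2} = \Comp{\pi_2} \circ \Comp{\pi_1}$ is either empty, a singleton $\{(p,r)\}$, or the identity, so that all events in $\semM{\pi_1 \cdot \pi_2}(e)$ lie on a single process and the expression $\minpie{\pi_1 \cdot \pi_2}{e}$ is well-defined whenever $M,e \models \existsptrue{\pi_1 \cdot \pi_2}$. The same remark applies to $\pi_1 \cdot \test{\existsptrue{\pi_2}}$ and to $\pi_2$ once we have a starting event satisfying $\existsptrue{\pi_2}$.

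Next I would unfold the definition of composition. The assumption $M,e \models \existsptrue{\pi_1 \cdot \pi_2}$ yields at least one intermediate event $f$ with $(e,f) \in \semM{\pi_1}$ and $M,f \models \existsptrue{\pi_2}$; hence $(e,f) \in \semM{\pi_1 \cdot \test{\existsptrue{\pi_2}}}$, so $f^\star \df \minpie{\pi_1 \cdot \test{\existsptrue{\pi_2}}}{e}$ exists. By construction $M,f^\star \models \existsptrue{\pi_2}$, so $g^\star \df \minpie{\pi_2}{f^\star}$ exists as well, and the pair $(e,f^\star), (f^\star,g^\star)$ witnesses $g^\star \in \semM{\pi_1 \cdot \pi_2}(e)$.

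It then remains to show that $g^\star$ is the $\leproc$-minimum of $\semM{\pi_1 \cdot \pi_2}(e)$. Let $g \in \semM{\pi_1 \cdot \pi_2}(e)$, witnessed by some $f$ with $(e,f) \in \semM{\pi_1}$ and $(f,g) \in \semM{\pi_2}$. Then $f \in \semM{\pi_1 \cdot \test{\existsptrue{\pi_2}}}(e)$, so by minimality of $f^\star$ we have $f^\star \leproc f$. Since both $f^\star$ and $f$ satisfy $\existsptrue{\pi_2}$ and $\pi_2$ is monotone by the induction hypothesis (Lemma~\ref{lem:monotone} applied to $\pi_2$, which is a strict subformula of $\pi_1\cdot\pi_2$), we get $g^\star = \minpie{\pi_2}{f^\star} \leproc \minpie{\pi_2}{f} \leproc g$. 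This yields $g^\star = \minpie{\pi_1 \cdot \pi_2}{e}$, completing the min case. The max case is obtained symmetrically, replacing $\min$ by $\max$ and using the $\maxpi{}$-monotonicity of $\pi_2$.

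The one delicate point is the reliance on monotonicity of $\pi_2$, which is exactly why the two lemmas must be proved by simultaneous induction: the base cases and composition step of Lemma~\ref{lem:monotone} (the other clause of the mutual induction) will in turn depend on Lemma~\ref{lem:min-conc}, so one must be careful to invoke each statement only on strictly smaller path formulas. Here this is the case, because both $\pi_1$ and $\pi_2$ are proper subformulas of $\pi_1 \cdot \pi_2$.
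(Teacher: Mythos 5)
Your proof is correct and follows essentially the same route as the paper's: take the minimal $\pi_1\cdot\test{\existsptrue{\pi_2}}$-successor $f^\star$ of $e$, and use monotonicity of $\pi_2$ (supplied by the mutual induction with Lemma~\ref{lem:monotone}) to show $\minpie{\pi_2}{f^\star} \leproc \minpie{\pi_2}{f} \leproc g$ for any witness $f$ of any $g \in \semM{\pi_1\cdot\pi_2}(e)$. The only cosmetic difference is that you verify your candidate is simultaneously a member and a lower bound of $\semM{\pi_1\cdot\pi_2}(e)$, whereas the paper sandwiches it against $\minpie{\pi_1\pi_2}{e}$ from both sides; your handling of the circularity between the two lemmas matches the paper's resolution.
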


\begin{lemma}\label{lem:monotone}
  All $\PDLm$ path formulas are monotone.
\end{lemma}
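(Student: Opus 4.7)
The plan is to prove Lemmas~\ref{lem:min-conc} and \ref{lem:monotone} by simultaneous induction on the structure of the path formula. A naive induction for Lemma~\ref{lem:monotone} stalls in the composition case, because monotonicity of $\pi_1$ alone does not control the minimum of a subset of $\semM{\pi_1}(e)$ filtered by an arbitrary test. To fix this, I would strengthen the inductive statement to a \emph{convex-monotonicity} property: call $\pi$ convex-monotone if, for all $e \leproc f$, (i) whenever $g \in \semM{\pi}(f)$, $g' \in \semM{\pi}(e)$, and $g \leproc g'$, then $g \in \semM{\pi}(e)$; and (ii) dually, whenever $g \in \semM{\pi}(e)$, $g' \in \semM{\pi}(f)$, and $g' \leproc g$, then $g \in \semM{\pi}(f)$. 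Convex-monotonicity implies monotonicity: if $\min \semM{\pi}(f) \lessproc \min \semM{\pi}(e)$, clause (i) forces $\min \semM{\pi}(f) \in \semM{\pi}(e)$, contradicting minimality, and (ii) handles the maximum symmetrically.

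For the base cases, convex-monotonicity is routine. When $\pi \in \{\prel, \leftmove, \test{\varphi}\}$, the set $\semM{\pi}(e)$ is at most a singleton moving rigidly with $e$, so the property holds vacuously. For $\jump{p}{r}$, the set is the constant $E_r$ on the target process. For $\mrel_{p,q}$ and its inverse, convex-monotonicity reduces to the FIFO channel condition: two sends on a common process are received in the same order. For $\rightg{\varphi}$, the set $\semM{\rightg{\varphi}}(e)$ is a contiguous $\prel$-prefix of the strict same-process successors of $e$ whose internal events all satisfy $\varphi$; the required inclusion then follows because the events strictly between $e$ and $g$ form a subset of those strictly between $e$ and $g'$, and the latter already satisfy $\varphi$. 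The case $\leftg{\varphi}$ is symmetric.

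For the inductive step $\pi = \pi_1 \cdot \pi_2$, I would assume by IH that $\pi_1$ and $\pi_2$ are convex-monotone. Take $e \leproc f$, $h \in \semM{\pi}(f)$, $h' \in \semM{\pi}(e)$, and $h \leproc h'$; choose witnesses $f_1 \in \semM{\pi_1}(f)$ with $h \in \semM{\pi_2}(f_1)$ and $e_1 \in \semM{\pi_1}(e)$ with $h' \in \semM{\pi_2}(e_1)$. Since $e_1$ and $f_1$ both lie on the target process of $\Comp{\pi_1}$, they are comparable under $\leproc$. If $f_1 \leproc e_1$, convex-monotonicity of $\pi_1$ gives $f_1 \in \semM{\pi_1}(e)$, whence $h \in \semM{\pi}(e)$. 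If $e_1 \lessproc f_1$, convex-monotonicity of $\pi_2$ applied to $e_1 \leproc f_1$ together with $h \leproc h'$ yields $h \in \semM{\pi_2}(e_1)$, and again $h \in \semM{\pi}(e)$. Clause (ii) is handled symmetrically, completing the induction and establishing Lemma~\ref{lem:monotone}.

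Lemma~\ref{lem:min-conc} is then a short consequence: decomposing $\semM{\pi_1 \cdot \pi_2}(e) = \bigcup_{g \in G} \semM{\pi_2}(g)$ with $G = \semM{\pi_1 \cdot \test{\existsptrue{\pi_2}}}(e)$, monotonicity of $\pi_2$ forces the family $g \mapsto \minpie{\pi_2}{g}$ to be $\leproc$-monotone along the totally $\leproc$-ordered set $G$, so the minimum of $\semM{\pi_1 \cdot \pi_2}(e)$ is attained at $\min G$, and similarly for the maximum. The main obstacle is the first step, identifying convex-monotonicity as the right strengthening; once it is in place, the composition case becomes a clean two-case analysis driven solely by the total order of $e_1$ and $f_1$ on their common target process.
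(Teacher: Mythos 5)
Your proof is correct, but it takes a genuinely different route from the paper's. The paper runs an intertwined simultaneous induction: it first establishes Lemma~\ref{lem:min-conc} under the assumption that $\pi_2$ is monotone, and then proves by structural induction that every formula of the form $\pi\cdot\test{\psi}$ is monotone, invoking that special case of Lemma~\ref{lem:min-conc} inside the concatenation step to decompose $\minpie{\pi_1\cdot\pi_2\cdot\test{\psi}}{e}$ as $\minpie{\pi_2\cdot\test{\psi}}{\minpie{\pi_1\cdot\test{\existsp{\pi_2}{\psi}}}{e}}$; the strengthening ``monotone after appending an arbitrary test'' is exactly what makes that decomposition usable. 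You instead strengthen the invariant to a convexity property of the image sets $\semM{\pi}(\cdot)$ under shifting the source event, which makes the induction self-contained: the concatenation case reduces to a two-way case split on the $\leproc$-order of the intermediate witnesses $e_1,f_1$ (which are comparable because $\Comp{\pi_1}$ is a partial function on processes), with no appeal to Lemma~\ref{lem:min-conc} at all, and both monotonicity and Lemma~\ref{lem:min-conc} then fall out as corollaries. I checked the base cases and the concatenation step and they go through; the only cosmetic gap is that your one-line justification for $\rightg{\varphi}$ covers clause (i), while clause (ii) needs the dual inclusion (events strictly between $f$ and $g$ are among those strictly between $e$ and $g$), which is equally routine. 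What each approach buys: the paper keeps the inductive invariant as weak as possible at the price of interleaving the two lemmas, whereas your convex-monotonicity decouples them and gives a cleaner concatenation argument; it is worth noting that your invariant is essentially a two-source relativization of the interval characterization that the paper proves later and separately as Lemma~\ref{lem:image}, so you are in effect front-loading part of that lemma into the induction.
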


\begin{proof}[Proof of Lemma~\ref{lem:min-conc} and Lemma~\ref{lem:monotone}]
  We first show that Lemma~\ref{lem:min-conc} holds when $\pi_2$ is monotone.
  We then use this to prove by induction that all $\sfPDL$ formulas are monotone
  (Lemma~\ref{lem:monotone}). Therefore, we deduce that Lemma~\ref{lem:min-conc} is always
  true.

  Let $\pi_1,\pi_2 \in \PDLm$ be path formulas such that $\pi_2$ is monotone. 
  Let $M$ be an MSCs and $e$ be some event in $M$ such that
  $M,e\models\existsptrue{\pi_1\pi_2}$.
  The proof is illustrated in Figure~\ref{fig:min-conc}.
  We let $g = \minpie {\pi_1 \pi_2} {e}$.  Since $M,e,g \models \pi_1\pi_2$,
  there exists $f$ such that $M,e,f \models \pi_1$, and $M,f,g \models \pi_2$.
  In particular, $M,e,f \models \pi_1 \test{\existsptrue {\pi_2}}$, so $f' =
  \minpie {\pi_1 \test{\existsptrue {\pi_2}}} {e}$ is well-defined and $f'
  \leproc f$.  Since $\pi_2$ is monotone, $\minpie {\pi_2} {f'} \leproc
  \minpie {\pi_2} {f} \leproc g$.
  Also, $M,e,\minpie {\pi_2} {f'} \models \pi_1\pi_2$.  Hence $g \leproc \minpie {\pi_2} {f'}$.
  Therefore, $g = \minpie {\pi_2} {f'}$.
  
  The proof that $\maxpie {\pi_1 \pi_2} {e} =
  \maxpie {\pi_2} {\maxpie {\pi_1 \test{\existsptrue {\pi_2}}} {e}}$
  is similar.
  
  \begin{figure}[h]
  \centering
  \begin{tikzpicture}[scale=1,font=\footnotesize,inner sep=1pt,auto,>=stealth]
      \node[dot,label=below:{$e$}] (e) at (0,0) {};
      \node[dot,label=right:{$f$}] (f) at (1,1) {};
      \node[dot,label=left:{$\minpie {\pi_1 \test{\existsptrue {\pi_2}}} {e} =: f'$}] (fp) at (-1,1) {};
      \node[dot,label=left:{$\minpie {\pi_2} {f'}$}] (gp) at (-1.1,2) {};
      \node[dot,label=right:{\;\,$g := \minpie {\pi_1 \pi_2} {e}$}] (g) at (0,2) {};
      \node (leproc) at (0,1) {$\leproc$};
      \node at (-0.5,1.95) {$\leproc$};
      
      \path
      (e) edge[decorate,decoration={snake,post length=0.5mm,amplitude=0.4mm},->=2] node[right] {\;$\pi_1$} (f)
      (e) edge[decorate,decoration={snake,post length=0.5mm,amplitude=0.4mm},->=2] node[left] {$\pi_1$\;} (fp)
      (f) edge[decorate,decoration={snake,post length=0.5mm,amplitude=0.4mm},->=2] node[right] {\;$\pi_2$} (g)
      (fp) edge[decorate,decoration={snake,post length=0.5mm,amplitude=0.4mm},->=2] node[left] {$\pi_2$\,} (gp);       
    \end{tikzpicture}
    \caption{Proof of Lemma~\ref{lem:min-conc}\label{fig:min-conc}}
   \end{figure}

  We turn now to the proof of Lemma~\ref{lem:monotone}.  Actually, we prove a
  slightly stronger statement.  We show by induction on $\pi$ that, for all
  $\PDLm$ event formulas $\psi$, the path formula $\pi \cdot \test{\psi}$ is
  monotone.

  Let $e,f$ be events such that $e\leproc f$, $M,e\models\pi\cdot\test{\psi}$
  and $M,f\models\pi\cdot\test{\psi}$.  Let $e' = \minpie{\pi\cdot\test{\psi}}
  e$ and $f' = \minpie{\pi\cdot\test{\psi}} f$.  We show that $e'\leproc f'$.
  The proof that
  $\maxpie{\pi\cdot\test{\psi}}{e}\leproc\maxpie{\pi\cdot\test{\psi}}{e}$ is
  similar.  We start with the base cases.
  
  If $\pi = \test {\varphi}$, we have $e' = e \leproc f = f'$.
  The result is also trivial for $\pi={\prel}$ or $\pi={\leftmove}$.
  It follows from the fact that channels are FIFO for $\pi={\mrel_{p,q}}$ or 
  $\pi={\mrel_{p,q}^{-1}}$. When $\pi=\jump{p}{q}$ we have $e'=f'$. Suppose 
  that $\pi={\rightg{\varphi}}$. It is easy to see that either $e'\leproc 
  f\ltproc f'$ or $e'=f'$.  Similarly, when $\pi={\leftg{\varphi}}$ we have
  either $e'\ltproc e\leproc f'$ or $e'=f'$.
  
  The proof for $\pi=\pi_1\cdot\pi_2$ is illustrated in Figure~\ref{fig:monotone}.

  \begin{figure}[h]
  \centering
  \begin{tikzpicture}[scale=1,font=\footnotesize,inner sep=1pt,auto,>=stealth]
       \node[dot,label=below:{$e$}] (e) at (0,0) {};
       \node[dot,label=below:{$f$}] (f) at (2,0) {};
       \node (leproc) at (1,0) {$\leproc$};

       \node[dot,label=left:{$\minpie{\pi_1\cdot\test{\existsp{\pi_2}{\psi}}}{e} =: e''$}] (epp) at (-0.5,1) {};
       \node[dot,label=right:{$f'':=\minpie{\pi_1\cdot\test{\existsp{\pi_2}{\psi}}}{f}$}] (fpp) at (1.5,1) {};
       \node (leproc) at (0.5,1) {$\leproc$};       

       \node[dot,label=left:{$\minpie{\pi_2\cdot\test{\psi}}{e''} = e'$}] (ep) at (-0.5,2) {};
       \node[dot,label=right:{$f'=\minpie{\pi_2\cdot\test{\psi}}{f''}$}] (fp) at (1.5,2) {};
       \node (leproc) at (0.5,2) {$\leproc$};       
       
       \path
       (e) edge[decorate,decoration={snake,post length=0.5mm,amplitude=0.4mm},->=2] node[left] {$\pi_1$\,} (epp)
       (f) edge[decorate,decoration={snake,post length=0.5mm,amplitude=0.4mm},->=2] node[right] {\,$\pi_1$} (fpp)
       (epp) edge[decorate,decoration={snake,post length=0.5mm,amplitude=0.4mm},->=2] node[left] {$\pi_2$\,} (ep)
       (fpp) edge[decorate,decoration={snake,post length=0.5mm,amplitude=0.4mm},->=2] node[right] {\,$\pi_2$} (fp);
     \end{tikzpicture}
    \caption{Proof of Lemma~\ref{lem:monotone}\label{fig:monotone}}
   \end{figure}

  By induction, the path formula $\pi_2\cdot\test{\psi}$ is monotone.  So we can
  apply the special case of Lemma~\ref{lem:min-conc} proved above to the product
  $\pi_1\cdot(\pi_2\cdot\test{\psi})$.  Let
  $e''=\minpie{\pi_1\cdot\test{\existsp{\pi_2}{\psi}}}{e}$ and
  $f''=\minpie{\pi_1\cdot\test{\existsp{\pi_2}{\psi}}}{f}$.  We have
  $e'=\minpie{\pi_2\cdot\test{\psi}}{e''}$ and
  $f'=\minpie{\pi_2\cdot\test{\psi}}{f''}$.
  Again by induction, the path formula  $\pi_1 \cdot \test{\existsp{\pi_2}{\psi}}$ 
  is monotone and we obtain $e''\leproc f''$.
  We get $e'\leproc f'$ since $\pi_2\cdot\test{\psi}$ is monotone.
\end{proof}

The following crucial lemma states that, for all path formulas $\pi \in \PDLm$
and events~$e$ in some MSC, $\semM \pi (e)$ contains precisely the events
that lie in the interval between $\minpie \pi e$ and $\maxpie \pi e$ and
that satisfy $\existsptrue{\pi^{-1}}$.

\begin{lemma}\label{lem:image}
  Let $\pi$ be a $\PDLm$ path formula. For all MSCs $M$ and events $e$ such
  that $M,e\models\existsptrue\pi$, we have
  \[
    \semM \pi (e) = \{ f \in E \mid \minpie \pi e \leproc f \leproc
    \maxpie \pi e \land M,f \models \existsptrue {\pi^{-1}} \} \, .
  \]
\end{lemma}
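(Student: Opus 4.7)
The approach is a structural induction on $\pi$, using Lemmas~\ref{lem:inverse}, \ref{lem:min-conc}, and \ref{lem:monotone}. The inclusion $\semM{\pi}(e) \subseteq \{f \mid \minpie{\pi}{e} \leproc f \leproc \maxpie{\pi}{e} \land M,f\models\existsptrue{\pi^{-1}}\}$ is immediate from the definitions of $\minpie{\pi}{e}$ and $\maxpie{\pi}{e}$ together with Lemma~\ref{lem:inverse}, which converts $(e,f)\in\semM{\pi}$ into $(f,e)\in\semM{\pi^{-1}}$.

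For the converse, I induct on $\pi$. The atomic cases ($\prel$, $\leftmove$, $\mrel_{p,q}$, $\mrel_{p,q}^{-1}$, $\jump{p}{r}$, $\test{\varphi}$) and the guard cases $\rightg{\varphi}$, $\leftg{\varphi}$ are routine: $\semM{\pi}(e)$ is empty, a singleton, all of $E_r$, or, for the guards, a $\leproc$-interval of process-successors of $e$ whose strict intermediates all satisfy $\varphi$. In each case, any $f$ in the interval automatically satisfies $M,f\models\existsptrue{\pi^{-1}}$ (for instance, $e$ itself witnesses $\leftg{\varphi}$-reachability for $f$ in the guard case).

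The main obstacle is the concatenation $\pi=\pi_1\cdot\pi_2$. Set $h_{\min}=\minpie{\pi_1\test{\existsptrue{\pi_2}}}{e}$ and $h_{\max}=\maxpie{\pi_1\test{\existsptrue{\pi_2}}}{e}$, so that by Lemma~\ref{lem:min-conc}, $\minpie{\pi_1\pi_2}{e}=\minpie{\pi_2}{h_{\min}}$ and $\maxpie{\pi_1\pi_2}{e}=\maxpie{\pi_2}{h_{\max}}$. Given $f$ in the big interval with $M,f\models\existsptrue{(\pi_1\pi_2)^{-1}}$, which in particular yields $M,f\models\existsptrue{\pi_2^{-1}}$, I want to produce $h$ with $(e,h)\in\semM{\pi_1}$ and $(h,f)\in\semM{\pi_2}$. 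If $f\leproc\maxpie{\pi_2}{h_{\min}}$, the induction hypothesis for $\pi_2$ at $h_{\min}$ gives $f\in\semM{\pi_2}(h_{\min})$, so $h=h_{\min}$ works; symmetrically, if $\minpie{\pi_2}{h_{\max}}\leproc f$, take $h=h_{\max}$.

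The hard subcase is $\maxpie{\pi_2}{h_{\min}}\lessproc f\lessproc\minpie{\pi_2}{h_{\max}}$. I would extract from $M,f\models\existsptrue{(\pi_1\pi_2)^{-1}}$ a witness $h''$ with $(h'',f)\in\semM{\pi_2}$ and $M,h''\models\existsptrue{\pi_1^{-1}}$, and sandwich it into $[h_{\min},h_{\max}]$ using Lemma~\ref{lem:monotone}: if $h''\lessproc h_{\min}$, monotonicity of $\pi_2$ would give $\maxpie{\pi_2}{h''}\leproc\maxpie{\pi_2}{h_{\min}}\lessproc f$, contradicting $f\leproc\maxpie{\pi_2}{h''}$; a symmetric argument rules out $h_{\max}\lessproc h''$. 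Since $h_{\min},h_{\max}\in\semM{\pi_1}(e)\subseteq[\minpie{\pi_1}{e},\maxpie{\pi_1}{e}]$, we get $h''\in[\minpie{\pi_1}{e},\maxpie{\pi_1}{e}]$, and combined with $M,h''\models\existsptrue{\pi_1^{-1}}$ the induction hypothesis for $\pi_1$ at $e$ yields $h''\in\semM{\pi_1}(e)$; taking $h=h''$ completes the induction.
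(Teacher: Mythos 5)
Your proposal is correct and follows essentially the same route as the paper: induction on $\pi$, with the concatenation case handled by Lemma~\ref{lem:min-conc}, the same three subcases depending on where $f$ sits relative to $\maxpie{\pi_2}{h_{\min}}$ and $\minpie{\pi_2}{h_{\max}}$, and the same sandwiching of the witness $h''$ between $h_{\min}$ and $h_{\max}$ via monotonicity before invoking the induction hypothesis for $\pi_1$. The only (harmless) presentational difference is that you route $h''$ through the interval $[\minpie{\pi_1}{e},\maxpie{\pi_1}{e}]$ explicitly, where the paper applies the induction hypothesis directly from $h_{\min}\leproc h''\leproc h_{\max}$.
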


\begin{proof}
  The left-to-right inclusion is trivial.
  We prove the right-to-left inclusion by induction on $\pi$.
  The base cases are immediate.
  
  Assume that $\pi = \pi_1 \cdot \pi_2$. For illustration, consider Figure~\ref{fig:image}.
  We let $f_1 = \minpie {\pi_1 \test{\existsptrue{\pi_2}}} {e}$, 
  $f_2 = \minpie {\pi_2} {f_1}$,
  $g_1 = \maxpie {\pi_1 \test{\existsptrue{\pi_2}}} {e}$, 
  and $g_2 = \maxpie {\pi_2} {g_1}$.  
  By Lemma~\ref{lem:min-conc},
  we have $f_2=\minpie{\pi_1\pi_2}{e}$ and
  $g_2=\maxpie{\pi_1\pi_2}{e}$.
  Let $h_2 \in E$ such that $f_2 \leproc h_2 \leproc g_2$ and
  $M,h_2 \models \existsptrue{(\pi_1\pi_2)^{-1}}$.
  If $h_2 \leproc \maxpie {\pi_2} {f_1}$, then by induction hypothesis,
  $M,f_1,h_2 \models \pi_2$, and we obtain $M,e,h_2 \models \pi_1\pi_2$.
  Similarly, if $\minpie {\pi_2} {g_1} \leproc h_2$, then $M,g_1,h_2 \models \pi_2$
  and $M,e,h_2 \models \pi_1\pi_2$.
  So assume $\maxpie {\pi_2} {f_1} \ltproc h_2 \ltproc \minpie {\pi_2} {g_1}$.
  Since $M,h_2 \models \existsptrue{\pi_2^{-1}\pi_1^{-1}}$, there exists
  $h_1$ such that $M,h_1,h_2 \models \pi_2$ and
  $M,h_1 \models \existsptrue{\pi_1^{-1}}$.
  Moreover, $\minpie {\pi_2} {h_1} \leproc h_2 \ltproc \minpie {\pi_2} {g_1}$, hence
  $h_1 \leproc g_1$ by Lemma~\ref{lem:monotone} (notice that $g_1$ and 
  $h_1$ must be on the same process).
  Similarly, $\maxpie {\pi_2} {f_1} \ltproc h_2 \leproc \maxpie {\pi_2} {h_1}$,
  hence $f_1 \leproc h_1$. We then have $f_1 \leproc h_1 \leproc g_1$, and
  $M,h_1 \models \existsptrue{\pi_1^{-1}}$. By induction hypothesis,
  $M,e,h_1 \models \pi_1$. Hence, $M,e,h_2 \models \pi_1\pi_2$.
\end{proof}

  \begin{figure}[h]
  \centering
  \begin{tikzpicture}[scale=0.75,baseline,font=\footnotesize,inner sep=3pt,auto,>=stealth]
     
     \node[dot,label=below:{$e$}] (e) at (0,0) {};
     \node[dot,label=left:{$h_1$}] (h1) at (0,2) {};
     \node[dot,label=above:{$h_2$}] (h2) at (0,4) {};
     \node[dot,label=left:{$\minpie {\pi_1 \test{\existsptrue{\pi_2}}} {e} =: f_1$}] (f1) at (-3,2) {};
     \node[dot,label=right:{$g_1 := \maxpie {\pi_1 \test{\existsptrue{\pi_2}}} {e}$}] (g1) at (3,2) {};

     \node[dot,label={[anchor=east,xshift=1.5ex]above:{$\begin{aligned}\minpie {\pi_2} {f_1} & =: f_2 \\ \minpie{\pi_1\pi_2}{e} & = \end{aligned}$}}] (f2) at (-5.5,4) {};
     \node[dot,label={[anchor=west,xshift=-1.5ex]above:{$\begin{aligned}g_2 & := \maxpie {\pi_2} {g_1} \\ & = \maxpie{\pi_1\pi_2}{e} \end{aligned}$}}] (g2) at (5.5,4) {};
     
     \node[dot,label=above:{$\maxpie {\pi_2} {f_1}$}] (max) at (-2.5,4) {};
     \node[dot,label=above:{$\minpie {\pi_2} {g_1}$}] (min) at (2.5,4) {};
            
       \node (leproc) at (-1.5,2) {$\leproc$};
       \node (leproc) at (1.5,2) {$\leproc$};
       \node (leproc) at (-1,4) {$\ltproc$};
       \node (leproc) at (1,4) {$\ltproc$};
       \node (leproc) at (-4,4) {$\leproc$};
       \node (leproc) at (4,4) {$\leproc$};

       \path[inner sep=1pt]
       (e) edge[decorate,decoration={snake,post length=0.5mm,amplitude=0.4mm},->=2,bend left=0] node[left] {$\pi_1$~~~} (f1)
       (e) edge[decorate,decoration={snake,post length=0.5mm,amplitude=0.4mm},->=2,bend left=-0] node[right] {~~~$\pi_1$} (g1)
       (h1) edge[decorate,decoration={snake,post length=0.5mm,amplitude=0.4mm},->=2,bend left=0] node[right] {~$\pi_1^{-1}$} (e)
       (h2) edge[decorate,decoration={snake,post length=0.5mm,amplitude=0.4mm},->=2,bend left=0] node[right] {~$\pi_2^{-1}$} (h1)
       (f1) edge[decorate,decoration={snake,post length=0.5mm,amplitude=0.4mm},->=2,bend left=0] node[right] {~$\pi_2$} (max)
       (g1) edge[decorate,decoration={snake,post length=0.5mm,amplitude=0.4mm},->=2,bend left=0] node[left] {$\pi_2$~~} (min)
       (f1) edge[decorate,decoration={snake,post length=0.5mm,amplitude=0.4mm},->=2,bend left=0] node[left] {$\pi_2$~~} (f2)
       (g1) edge[decorate,decoration={snake,post length=0.5mm,amplitude=0.4mm},->=2,bend left=0] node[right] {~~$\pi_2$} (g2);

     \end{tikzpicture}
     \caption{Proof of Lemma~\ref{lem:image}\label{fig:image}}
   \end{figure}

Using Lemma~\ref{lem:image}, we can give a characterization of
$\semM {\pic \pi} (e)$ (when $\pi \in \PDLm$) that also relies on intervals
delimited by $\minpie \pi e$ and $\maxpie \pi e$. More precisely,
$\semM {\pic \pi} (e)$ is the union of the following sets (see Figure~\ref{fig:charcompl}):
(i) the interval of all events to the left of $\minpie \pi e$,
(ii) the interval of all events to the right of $\maxpie \pi e$,
(iii) the set of events located between $\minpie \pi e$ and $\maxpie \pi e$ and
satisfying $\lnot \existsptrue{\pi^{-1}}$,
(iv) all events located on other processes than $\minpie \pi e$ and
$\maxpie \pi e$.

\begin{figure}[h]
\centering
   \begin{tikzpicture}[auto,>=stealth,inner sep=2pt,scale=0.6]
    \node[dot,label=below:{$e$}] (e) at (3,-2) {};
    \begin{scope}[mygreen]
      \foreach \i in {0,1,3,6} {
        \node[dot,label={above:{\footnotesize $\existsptrue{\pi^{-1}}$}}]
        (e\i) at (1.3*\i,0) {};
      }
      \path[->]
      (e) edge[decorate,decoration={snake,post length=0.7mm,amplitude=0.4mm}]
      node[below left] {$\minpi \pi$} (e0)
      (e) edge[decorate,decoration={snake,post length=0.7mm,amplitude=0.4mm}]
      node[below right] {$\maxpi \pi$} (e6) ;
    \end{scope}
    \foreach \i in {-1,2,4,5,7,8} {
      \node[dot] (e\i) at (1.3*\i,0) {};
    }
    \foreach \i [evaluate=\i as \j using int(\i+1)] in {-1,...,7} {
      \draw[->] (e\i) -- (e\j) ;
    }

    \begin{pgfonlayer}{background}
      \node[rectangle,minimum size=0.5cm,rounded corners=0.1cm,
      fill=purple!40,label={[purple]above left:{(i)}}] at (e-1) {};

      \node[minimum size=0.5cm,rounded corners=0.1cm, fill=colj!40]
      at (e2) {};
      \fill[colj!40,rounded corners=1mm]
      ($(e4)+(-0.4,0.4)$) rectangle ($(e5)+(0.4,-0.4)$) ;
      \node[colj,below=0.1cm of e3] {(iii)} ;

      \fill[coll!40,rounded corners=1mm]
      ($(e7)+(-0.4,0.4)$) rectangle ($(e8)+(0.4,-0.4)$) ;
      \node[coll,above right=0.2cm of e8] {(ii)};
    \end{pgfonlayer}
  \end{tikzpicture}
  \caption{Characterization of $\semM {\pic \pi} (e)$ for $\pi \in \PDLm$\label{fig:charcompl}}
\end{figure}

This description of $\semM {\pic \pi} (e)$ can be used to rewrite
$\pic \pi$ as a union of $\PDLm$ formulas.
In a first step, we show that, if $\pi$ is a $\PDLm$
formula, then the relation $\{(e,\minpie \pi e)\}$ can also be expressed
in $\PDLm$ (and similarly for $\max$).

\begin{lemma}
Let $R = \emptyset$ or $R = \{\Loopname\}$.
  For every path formula $\pi \in \sfPDLall[R]$, there exist $\sfPDLall[R]$ path formulas
  $\minpi \pi$ and $\maxpi \pi$ such that $M,e,f \models \minpi \pi$
  iff $f = \minpie \pi e$, and $M,e,f \models \maxpi \pi$
  iff $f = \maxpie \pi e$.
\end{lemma}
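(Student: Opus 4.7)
The plan is to strengthen the claim and prove, by structural induction on $\pi$, the following: for every $\pi \in \sfPDL[R]$ and every event formula $\psi \in \sfPDL[R]$, the minimum and maximum of $\semM{\pi}(e) \cap \{f : M,f\models\psi\}$ are captured by $\sfPDL[R]$ path formulas, which I will call $\minpi{\pi\cdot\test{\psi}}$ and $\maxpi{\pi\cdot\test{\psi}}$. The lemma then follows by taking $\psi = \True$ and using $\pi \cdot \test{\True} \equiv \pi$.

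For the atomic cases, the formulas with at most one target per source ($\test{\varphi}$, $\prel$, $\leftmove$, $\mrel_{p,q}$, $\mrel_{p,q}^{-1}$) are handled directly: $\pi \cdot \test{\psi}$ serves as both $\minpi$ and $\maxpi$. For $\jump{p}{r}$, the first (respectively last) $\psi$-event on process $r$ is picked out by testing that no $<_\proc$-predecessor (respectively successor) satisfies $\psi$, yielding $\jump{p}{r}\cdot\test{\psi \land \neg\existsp{\leftg{\True}}{\psi}}$ (respectively with $\rightg{\True}$). For $\rightg{\varphi}$, I take $\minpi{\rightg{\varphi}\cdot\test{\psi}} = \rightg{\varphi \land \neg\psi} \cdot \test{\psi}$: the intermediate events on $e$'s process must satisfy both $\varphi$ (to remain in $\semM{\rightg{\varphi}}(e)$) and $\neg\psi$ (so that no smaller $\psi$-target is missed). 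The maximum instead uses a blocking test at $f$: $\rightg{\varphi} \cdot \test{\psi \land \neg(\varphi \land \existsp{\rightg{\varphi}}{\psi})}$, where the last conjunct prohibits a further $\varphi$-extension from reaching another $\psi$-event. The $\leftg{\varphi}$ case is symmetric.

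For the inductive case $\pi = \pi_1 \cdot \pi_2$, I apply Lemma~\ref{lem:min-conc} to $\pi_1$ and $\pi_2 \cdot \test{\psi}$ to obtain
\[
  \minpie{\pi_1 \cdot \pi_2 \cdot \test{\psi}}{e} \;=\; \minpie{\pi_2 \cdot \test{\psi}}{\minpie{\pi_1 \cdot \test{\chi}}{e}}, \qquad \chi \df \existsptrue{\pi_2 \cdot \test{\psi}}.
\]
By the induction hypothesis applied to $\pi_1$ with filter $\chi$ and to $\pi_2$ with filter $\psi$, both path formulas on the right-hand side lie in $\sfPDL[R]$, so setting $\minpi{\pi_1 \cdot \pi_2 \cdot \test{\psi}} \df \minpi{\pi_1 \cdot \test{\chi}} \cdot \minpi{\pi_2 \cdot \test{\psi}}$ closes the induction; the argument for $\maxpi$ is analogous. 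The construction never uses $\Loopname$, $\cup$, $\cap$, or $\compl$, so the cases $R = \emptyset$ and $R = \{\Loopname\}$ are handled uniformly.

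The main obstacle is finding the right inductive strengthening. A direct induction on $\pi$ without threading a $\psi$-filter stalls in the concatenation case: Lemma~\ref{lem:min-conc} rewrites $\minpi{\pi_1 \cdot \pi_2}$ in terms of $\minpi{\pi_1 \cdot \test{\existsptrue{\pi_2}}}$, whose path component is no structurally simpler than $\pi_1 \cdot \pi_2$ itself, and iterating the lemma on this subterm produces a tautology. Absorbing the test into the invariant breaks this apparent circularity. A secondary subtlety is that, without union on path formulas, one cannot cleanly split the $\rightg{\varphi}$-min analysis into the cases ``$f$ is $e$'s immediate $\prel$-successor'' and ``$f$ lies strictly further along''; the resolution is that the vacuous treatment of intermediate events by $\rightg{\varphi \land \neg\psi}$ when $e \prel f$ makes a single expression cover both subcases simultaneously.
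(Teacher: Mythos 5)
Your proof is correct and follows essentially the same route as the paper's: the same inductive strengthening to formulas $\minpi{(\pi\cdot\test{\psi})}$ parametrized by a test $\psi$, the same base-case formulas (including $\rightg{\varphi\wedge\neg\psi}\cdot\test{\psi}$ and the $\jump{p}{r}$ case), and the same decomposition of concatenation via Lemma~\ref{lem:min-conc}. You additionally spell out the $\maxpi$ variant of $\rightg{\varphi}$, which the paper leaves as ``similar,'' and your formula matches the mirror image of the paper's $\minpi{({\leftg{\varphi}}\cdot\test{\psi})}$.
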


\begin{proof}
  We construct, by induction on $\pi$, formulas $\minpi {(\pi \cdot\test{\psi})}$
  for all $\sfPDLall[R]$ event formulas $\psi$.
  For $\pi \in \{ {\prel}, {\leftmove}, {\mrel_{p,q}}, {\mrel^{-1}_{p,q}},
  \test {\varphi}\}$, we let $\minpi {(\pi \cdot\test{\psi})} = \pi \cdot\test{\psi}$.
  Then,
  \begin{align*}
    \minpi {({\rightg \varphi} \cdot\test{\psi})}
    & = {\rightg {\varphi \land \lnot \psi}} \cdot\test{\psi} \\
    \minpi {({\leftg \varphi} \cdot\test{\psi})}
    & = {\leftg \varphi}\cdot \test {\psi \land (\lnot \varphi \lor \lnot
      \existsp{\leftg \varphi}{\psi}) } \\
    \minpi {(\jump p q \cdot\test{\psi})}
    & = \jump p q \cdot\test{\psi \land \lnot \existsp{\leftp} \psi} \\
    \minpi {(\pi_1\cdot\pi_2 \cdot\test{\psi})}
    & = \minpi {(\pi_1 \cdot\test{\existsp{\pi_2}{\psi}})}
      \cdot \minpi {(\pi_2 \cdot\test{\psi})} \, .
  \end{align*}
  The construction of $\maxpi \pi$ is similar.
\end{proof}

We are now ready to prove that any boolean combination of $\PDLm$ formulas
is equivalent to a positive one, i.e., one that does not use complement.

\begin{lemma}\label{lem:closure-complement}
  For all path formulas $\pi \in \PDLm$, there exist $\PDLm$ path formulas
  $(\pi_i)_{1 \le i \le |\Procs|^2+3}$ such that
  $\pic \pi \equiv \bigcup_{1 \le i \le |\Procs|^2+3} \pi_i$.
\end{lemma}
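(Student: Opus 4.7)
The plan is to read off the decomposition of $\pic\pi$ directly from Lemma~\ref{lem:image}, using the fact that every nonempty $\semM{\pi}(e)$ is contained in a single $E_{q}$. Given $(e,f) \in \pic\pi$, exactly one of the following holds: (a) no $\pi$-successor of $e$ lies on process $\ploc(f)$; or $e$ has $\pi$-successors on $\ploc(f)$ and either (b) $f \lessproc \minpie{\pi}{e}$, or (c) $\maxpie{\pi}{e} \lessproc f$, or (d) $\minpie{\pi}{e} \leproc f \leproc \maxpie{\pi}{e}$ with $M,f \not\models \existsptrue{\pi^{-1}}$.

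Situation (a) is captured by $|\Procs|^2$ formulas, one per pair $(p,q) \in \Procs^2$, namely $\pi_{p,q} \df \test{\lnot \existsp{\pi}{q}} \cdot \jump{p}{q}$, using that the process name $q$ is itself an atomic event formula. Situations (b), (c), (d) are captured by the three further formulas
\[
  \pi_{\mathrm{b}} \df \minpi{\pi} \cdot {\leftg{\True}}, \quad
  \pi_{\mathrm{c}} \df \maxpi{\pi} \cdot {\rightg{\True}}, \quad
  \pi_{\mathrm{d}} \df \minpi{\pi} \cdot {\rightg{\True}} \cdot \test{\lnot \existsptrue{\pi^{-1}}}.
\]
All of these lie in $\PDLm$, thanks to the previous lemma (which keeps $\minpi{\pi}, \maxpi{\pi}$ in $\PDLm$) and Lemma~\ref{lem:inverse} (which keeps $\pi^{-1}$ in $\PDLm$).

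The proof then reduces to the two inclusions. Each $\pi_{p,q}$ lies in $\pic\pi$ because its target sits on a process on which $e$ has no $\pi$-successor; $\pi_{\mathrm{b}}$ and $\pi_{\mathrm{c}}$ land strictly outside $[\minpie{\pi}{e}, \maxpie{\pi}{e}]$, so Lemma~\ref{lem:image} excludes them from $\semM{\pi}(e)$; and $\pi_{\mathrm{d}}$ lands on an event satisfying $\lnot\existsptrue{\pi^{-1}}$, hence not in $\semM{\pi}(e')$ for any $e'$. Conversely, given $(e,f) \in \pic\pi$, if no $\pi$-successor of $e$ lies on $\ploc(f)$ then $\pi_{\ploc(e),\ploc(f)}$ applies; otherwise $\semM{\pi}(e)$ is nonempty and entirely on $\ploc(f)$, so Lemma~\ref{lem:image} applies and $(e,f)$ falls into (b), (c), or (d).

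The only minor point to verify is that $\pi_{\mathrm{d}}$ uses the strict relation ${\rightg{\True}}$, whereas case (d) is stated over the closed interval; this is harmless because $\minpie{\pi}{e}$ trivially satisfies $\existsptrue{\pi^{-1}}$ (witnessed by $e$), so an $f$ falling under (d) is automatically distinct from $\minpie{\pi}{e}$ and hence strictly to its right. I do not foresee a real obstacle: the geometric work has been done in Lemma~\ref{lem:image}, and the one design choice that matters is to bundle the ``wrong process'' and ``empty $\pi$-image'' cases into the single uniform family $\pi_{p,q}$, which yields the tight count $|\Procs|^2+3$.
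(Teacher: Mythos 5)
Your proof is correct and follows essentially the same route as the paper: the paper's witness is $(\minpi \pi \cdot {\leftp}) \cup (\maxpi \pi \cdot {\rightp}) \cup (\pi \cdot {\rightp} \cdot \test{\neg\existsptrue{\pi^{-1}}}) \cup \bigcup_{(p,q)} \test{\lnot\existsp{\pi}{q}} \cdot \jump p q$, which matches your four-way case split based on Lemma~\ref{lem:image}. The only difference is your third disjunct starting from $\minpi\pi$ rather than $\pi$, which defines the same relation (every event strictly to the right of some $\pi$-successor of $e$ is strictly to the right of $\minpie{\pi}{e}$), and your remark about the strict versus closed interval is handled exactly as you say.
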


\begin{proof}
  We show $\pic \pi \equiv \sigma$, where
  \[
  \sigma = (\minpi \pi \cdot {\leftp}) \cup (\maxpi \pi \cdot {\rightp}) 
  \cup ( \pi \cdot {\rightp} \cdot \test{\neg\existsptrue{\pi^{-1}}}) 
  \cup \bigcup_{(p,q) \in \Procs^2} \test{\lnot\existsp{\pi}{q}} \cdot \jump p q \,.
  \]
  
  Let $M = (E,\prel,\mrel,\ploc,\lambda)$ be an MSC and $e,f \in E$.
  We write $p = \ploc(e)$, $q = \ploc(f)$.
  Let us show that $M,e,f \models \pi^c$ iff $M,e,f \models \sigma$.
  If $M,e \models \lnot\existsp{\pi}{q}$, then both
  $M,e,f \models \pic \pi$ and $M,e,f \models \sigma$ hold.
  In the following, we assume that $M,e \models \existsp{\pi}{q}$,
  and thus that $\minpie \pi e$ and $\maxpie \pi e$ are well-defined
  and on process~$q$.
  Again, if $f \ltproc \minpie \pi e$ or $\maxpie \pi e \ltproc f$, then
  both $M,e,f \models \pic \pi$ and $M,e,f \models \sigma$ hold.
  And if $\minpie \pi e \leproc f \leproc \maxpie \pi e$, then, by
  Lemma~\ref{lem:image}, we have
  $M,e,f \models \pi^c$ iff $M,f \models \lnot\existsptrue{\pi^{-1}}$,
  iff $M,e,f \models \sigma$.
\end{proof}

\newcommand{\bunderbrace}[2]{%
  \begin{array}[t]{@{}c@{}}
  \underbrace{#1}\\
  #2
  \end{array}
}

The rest of this section is dedicated to the proof of Theorem~\ref{thm:FO-to-PDLm-main},
stating that every $\FO[\prel,\mrel,\le]$ formula with at most two free variables can
be translated into an equivalent $\PDL$ formula.
As we proceed by induction, we actually need a more general statement, which takes into account arbitrarily many free variables:

\begin{proposition}\label{thm:FO-to-PDLm}
  Every formula $\Phi\in \FOle$ with at least one free variable
  is equivalent to a 
  boolean combination of formulas of the form
  $\pifo \pi(x,y)$, where $\pi \in \sfPDLm$ and $x,y \in \Free(\Phi)$.
\end{proposition}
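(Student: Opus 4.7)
I proceed by structural induction on $\Phi$. Atomic formulas translate directly: $p(x), a(x) \equiv \pifo{\test{p}}(x,x), \pifo{\test{a}}(x,x)$; $x = y \equiv \pifo{\test{\True}}(x,y)$; $x \prel y \equiv \pifo{\prel}(x,y)$; $x \mrel y \equiv \bigvee_{(p,q) \in \Ch} \pifo{\mrel_{p,q}}(x,y)$; and $x \le y \equiv \pifo{\test{\True}}(x,y) \lor \pifo{\rightg{\True}}(x,y)$. The Boolean cases are immediate from the induction hypothesis. Sentence-subformulas that arise inside an otherwise non-sentence $\Phi$ are absorbed by the same trick that I use for the $n{=}0$ case below.

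The substantive case is existential quantification $\Phi = \exists z.\, \Phi'(z,\vec{x})$ with $\vec{x} = \Free(\Phi)$ nonempty, so the induction hypothesis applies to $\Phi'$. Rewriting $\Phi'$ in DNF and distributing $\exists z$ over disjunction reduces the task to eliminating $\exists z$ from a single DNF clause. Negated atoms $\lnot\pifo{\pi}(y_1,y_2) \equiv \pifo{\pic\pi}(y_1,y_2)$ are unfolded via Lemma~\ref{lem:closure-complement} into disjunctions of $\sfPDLm$ atoms, and a further DNF pass leaves only positive literals. Conjuncts not mentioning $z$ are pulled outside the quantifier; each $\pifo{\pi}(z,z)$ collapses into $\phifo{\Loop{\pi}}(z)$ and is merged into a single event formula $\phi(z)$; and each $\pifo{\pi}(x,z)$ is flipped to $\pifo{\pi^{-1}}(z,x)$ by Lemma~\ref{lem:inverse}. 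The resulting normal form is
\[
\exists z.\, \phi(z) \land \bigwedge_{k=1}^{n} \pifo{\pi_k}(z, x_k), \qquad x_1,\ldots,x_n \in \vec{x}.
\]

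The cases $n = 0, 1, 2$ are direct: $n = 0$ is a sentence $\E\phi$, expressible at any $x_0 \in \vec{x}$ as $\bigvee_{p,q \in \Procs} \pifo{\test{p \land \existsp{\jump{p}{q}}{\phi}}}(x_0,x_0)$; $n = 1$ yields $\pifo{\test{\existsp{\pi_1^{-1}}{\phi}}}(x_1, x_1)$; and $n = 2$ is handled by composition as $\pifo{\pi_1^{-1} \cdot \test{\phi} \cdot \pi_2}(x_1, x_2)$. For $n \ge 3$, the crux, no single composition can anchor $z$ to all free variables at once. Here I would invoke Lemma~\ref{lem:image}: each $\semM{\pi_k^{-1}}(x_k)$ consists precisely of the events on a single process (determined by $\Comp{\pi_k^{-1}}$) that lie in the interval $[\minpie{\pi_k^{-1}}{x_k}, \maxpie{\pi_k^{-1}}{x_k}]$ and satisfy $\existsp{\pi_k}{\True}$. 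Consequently, a witness $z$ exists iff all these intervals share a common host process $p$, the joint lower and upper bounds $L = \max_k \minpie{\pi_k^{-1}}{x_k}$ and $U = \min_k \maxpie{\pi_k^{-1}}{x_k}$ satisfy $L \leproc U$, and some event in $[L,U]$ on $p$ satisfies $\phi \land \bigwedge_k \existsp{\pi_k}{\True}$. Using the path formulas $\minpi{\pi}$ and $\maxpi{\pi}$ supplied by the lemma preceding Lemma~\ref{lem:closure-complement}, I enumerate (disjunctively) over the pair of indices $(k^*, k^{**})$ attaining $L$ and $U$ and over the host process $p$; for each such choice, three facts must be checked---(i)~$\minpie{\pi_j^{-1}}{x_j} \leproc \minpie{\pi_{k^*}^{-1}}{x_{k^*}}$ for every $j$, (ii)~the dual for $\max$, and (iii)~existence of a suitable $z \in [L,U]$ on $p$---each of which translates to a conjunction of $\sfPDLm$ path formulas connecting pairs of free variables in $\vec{x}$ (via compositions such as $\minpi{\pi_j^{-1}} \cdot \rightg{\True} \cdot (\minpi{\pi_{k^*}^{-1}})^{-1}$).

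The main obstacle is this case analysis at $n \ge 3$: one must simultaneously single out which constraint attains the tight lower bound, which attains the tight upper bound, and the process hosting $z$. Condition~(iii) additionally requires a split on whether $z$ coincides with $L$, with $U$, or lies strictly between them (so that $\rightg{\True}$, rather than the unavailable $\righta$, suffices inside $\sfPDLm$); conditions~(i) and~(ii) then reduce to routine compositions of $\minpi{\pi_j^{-1}}$ and $\maxpi{\pi_j^{-1}}$ chained via $\rightg{\True}$ or $\test{\True}$. Disjoining over all enumerations and reattaching the previously pulled-out $z$-free conjuncts yields the required Boolean combination of $\pifo{\sigma}(x_i, x_j)$ with $\sigma \in \sfPDLm$ and $x_i, x_j \in \vec{x}$.
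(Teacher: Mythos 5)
Your overall strategy is the paper's: put the quantifier-free kernel in DNF, eliminate complements with Lemma~\ref{lem:closure-complement}, reduce each quantified clause to $\exists z.\,\phi(z)\land\bigwedge_k \pi_k(z,x_k)$, and then use Lemma~\ref{lem:image} together with the $\minpi{\pi}$ and $\maxpi{\pi}$ path formulas to replace the witness $z$ by a disjunction over which constraint attains the largest lower bound and which attains the smallest upper bound, plus a composed path formula asserting that a suitable event lies in between. The interval argument, the inclusion of $\bigwedge_k\existsptrue{\pi_k}$ in the event formula tested at the witness, and the expansion of $\righta$ into the two cases $\test{\True}$ and $\rightp$ are all exactly what the paper does, so the quantifier-elimination step is sound (and is needed already for $n\ge 2$ free variables if one wants the clean normal form, though your direct composition for $n=2$ is also correct).

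There is, however, a genuine error in the base case: you translate the atomic formula $x\le y$ as $\test{\True}(x,y)\lor\rightg{\True}(x,y)$, which expresses $x\leproc y$, i.e., the reflexive process order on a \emph{single} process. But $\le$ is the happened-before relation $(\prel\cup\mrel)^*$, which relates events on different processes through chains of messages; e.g., in Figure~\ref{fig:msc} we have $e_2\le f_3$ although $e_2$ and $f_3$ lie on different processes, and your formula is false there. Since $\sfPDLm$ has no Kleene star, you cannot simply iterate; the correct translation normalizes a causal path so that it visits each process at most once, yielding the finite disjunction, over $1\le m\le|\Procs|$ and pairwise distinct $p_1,\dots,p_m\in\Procs$, of the formulas $\bigl(\pi\cdot{\mrel_{p_1,p_2}}\cdot{\rightp}\cdot{\mrel_{p_2,p_3}}\cdots{\rightp}\cdot{\mrel_{p_{m-1},p_m}}\cdot\pi'\bigr)(x,y)$ with $\pi,\pi'\in\{\rightp,\test{\True}\}$. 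Without this (or an equivalent) clause the induction fails for every formula mentioning $\le$, which is the heart of $\FOle$. The rest of your argument goes through once this base case is repaired.
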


\begin{proof} 
  In the following, we will simply write $\pi(x,y)$ for $\pifo\pi(x,y)$, where
  $\pifo\pi(x,y)$ is the $\FO$ formula equivalent to $\pi$ as defined in
  Proposition~\ref{prop:pdl-fo3}.
  The proof is by induction. For convenience, we assume that $\Phi$ is in 
  prenex normal form.
  If $\Phi$ is quantifier free, then it is a boolean
  combination of atomic formulas.  For $x,y \in \Var$, atomic formulas are
  translated as follows:
    \[
      \begin{array}{rclcrclcrcl}
        p(x) & \equiv & \test{p} (x,x)
        & \qquad\qquad & x \prel y & \equiv & {\prel}(x,y)
        & & x = y & \equiv & \test{\True}(x,y)\\[1.5ex]
        a(x) & \equiv & \test{a} (x,x)
        & & x \mrel y & \equiv &
        \displaystyle \bigvee_{(p,q) \in \Ch} {\mrel_{p,q}}(x,y)
      \end{array}
    \]
Moreover, $x \le y$ is equivalent to the disjunction of the formulas
$\bigl(\pi \cdot {\mrel_{p_1,p_2}} \cdot {\rightp} \cdot {\mrel_{p_2,p_3}}
        \cdots {\rightp} \cdot {\mrel_{p_{m-1},p_{m}}} \cdot \pi'\bigr)(x,y)$,
where $1 \le m \le |\Procs|$, $p_1,\ldots,p_m \in \Procs$ are such that $p_i 
\neq p_{j}$ for all $1\leq i<j\leq m$, and $\pi,\pi' \in \{\rightp,\test{\True}\}$.%

  \paragraph{Universal quantification.} We have $\forall x.\Psi \equiv 
  \neg\exists x.\neg\Psi$. Since we allow boolean combinations, dealing with 
  negation is trivial. Hence, this case reduces to existential quantification.

  \paragraph{Existential quantification.} Suppose that $\Phi = \exists x.\Psi$.
  If $x$ is not free in $\Psi$, then $\Phi\equiv\Psi$ and we are done by 
  induction. Otherwise, assume that $\Free(\Psi)=\{x_1,\ldots,x_n\}$ with $n>1$ 
  and that $x=x_n$. 
  By induction, $\Psi$ is equivalent to a boolean combination 
  of formulas of the form $\pi(y,z)$ with $y,z\in\Free(\Psi)$.
  We transform it into a finite disjunction of formulas of the form
  $\bigwedge_{j} \pi_{j}(y_{j},z_{j})$, where $y_j = x_{i_1}$ and $z_j =
  x_{i_2}$ for some $i_1 \le i_2$.
  To do so, we first eliminate negation using Lemma~\ref{lem:closure-complement}.
The resulting positive boolean combination is then brought into disjunctive normal form.
Note that this latter step may cause an exponential blow-up so that the overall construction
is nonelementary (which is unavoidable \cite{phd-stockmeyer}).
Finally, the variable ordering can be guaranteed by replacing $\pi_j$ with $\pi_j^{-1}$ whenever needed.
    
  Now, $\Phi=\exists x_n.\Psi$ is equivalent to a finite disjunction of formulas
  of the form
  $$\bigwedge_{j \in I} \pi_{j} (y_{j},z_{j})
      ~\land~ \bunderbrace{\exists x_n. \Bigl(\bigwedge_{j \in J} \pi_{j} (y_{j},x_n) \land
      \bigwedge_{j \in J'} \pi_{j} (x_n,x_n)\Bigr)}{=: \Upsilon}$$
  for three finite, pairwise disjoint index sets $I,J,J'$ such that $y_j \in
  \{x_1,\ldots,x_{n-1}\}$ for all $j \in I \cup J$, and $z_j \in
  \{x_1,\ldots,x_{n-1}\}$ for all $j \in I$.  
  Notice that $\Free(\Upsilon)\subseteq\{x_1,\ldots,x_{n-1}\}$. 
  If $J = \emptyset$, then\footnote{In this case, $\Upsilon$ is a sentence
  whereas $x_1$ is free in the right hand side.  Notice that $\equiv$ does not
  require the two formulas to have the same free variables.}
  $$
  \Upsilon \equiv \bigvee_{p,q \in P}
  \Big( \jump p q \cdot \test{\bigwedge_{j \in J'} \Loop{\pi_{j}}}
  \cdot \jump q p\Big) (x_1,x_1) \, .
  $$
  So assume $J \neq \emptyset$. Set
  $$\Upsilon' \df
      \bigvee_{k,\ell \in J}
      \left(\begin{array}{rl}
      & \bigwedge_{j \in J} ((\minpi {\pi_{j}}) \cdot {\righta} \cdot
      (\minpi {\pi_{k}})^{-1}) (y_{j},y_{k})
      \\ \land & \bigwedge_{j \in J} ((\maxpi {\pi_{\ell}}) \cdot {\righta}
      \cdot (\maxpi {\pi_{j}})^{-1}) (y_{\ell},y_{j})\\
      \land &
      (\pi_{k} \cdot \test{\psi} \cdot \pi_{\ell}^{-1}) (y_{k},y_{\ell})
      \end{array}\right)$$
  where $ \psi = \bigwedge_{j \in J} \existsptrue{\pi_{j}^{-1}}
  \land \bigwedge_{j \in J'} \Loop {\pi_{j}}$.
  We have $\Free(\Upsilon')=\Free(\Upsilon)\subseteq\{x_1,\ldots,x_{n-1}\}$.
    
  \begin{claim}\label{claim:fo-pdl}
    We have $\Upsilon \equiv \Upsilon'$.
  \end{claim}

  Intuitively, by Lemma~\ref{lem:image}, we know that
  $\Upsilon$ holds iff the intersection of the intervals
  $[\minpie {\pi_j} {y_j}, \maxpie {\pi_j} {y_j}]$ contains some event
  satisfying $\psi$. The formula $\Upsilon'$ identifies
  some $\pi_k$ such that $\minpie {\pi_k} {y_k}$ is maximal (first line),
  some $\pi_\ell$ such that $\maxpie {\pi_\ell} {y_\ell}$ is minimal
  (second line), and tests that there exists an event $x_n$ satisfying $\psi$
  between the two (third line).
  This is illustrated in Figure~\ref{fig:fo-pdl}.
  \begin{figure}[t]
    \centering
    \begin{tikzpicture}[font=\footnotesize,inner sep=2pt,auto,>=stealth]
      \node[dot,colj,label={[colj]below:{$y_j$}}] (xj) at (0,-1) {};
      \node[dot,colj] (xj1) at (-1.5,0) {};
      \node[dot,colj] (xj2) at (1,0) {};
      \path[colj] (xj) edge[decorate,decoration={snake,post length=0.5mm,amplitude=0.4mm},->=2] node[left,pos=0.4,xshift=-2pt] {$\minpi{\pi_j}$} (xj1)
      (xj) edge[decorate,decoration={snake,post length=0.5mm,amplitude=0.4mm},->=2] node[right,pos=0.4,xshift=5pt] {$\maxpi{\pi_j}$} (xj2) ;
      \draw[colj,dashed] (xj1) -- (xj2);
      
      \node[dot,coll,label={[coll]above:{$y_\ell$}}] (xl) at (-1.5,1) {};
      \node[dot,coll] (xl1) at (-1,0) {};
      \node[dot,coll] (xl2) at (0.5,0) {};
      \path[coll] (xl) edge[decorate,decoration={snake,post length=0.5mm,amplitude=0.4mm},->=2] node[left,pos=0.5] {$\minpi{\pi_\ell}$} (xl1)
      (xl) edge[decorate,decoration={snake,post length=0.5mm,amplitude=0.4mm},->=2] node[right,pos=0.1,xshift=5pt] {$\maxpi{\pi_\ell}$} (xl2) ;
      
      \node[dot,colk,label={[colk]above:{$y_k$}}] (xk) at (1.5,1) {};
      \node[dot,colk] (xk1) at (-0.5,0) {};
      \node[dot,colk] (xk2) at (2,0) {};
      \path[colk] (xk) edge[decorate,decoration={snake,post length=0.5mm,amplitude=0.4mm},->=2] node[left,pos=0.1,yshift=5pt] {$\minpi{\pi_k}$} (xk1)
      (xk) edge[decorate,decoration={snake,post length=0.5mm,amplitude=0.4mm},->=2] node[right] {$\maxpi{\pi_k}$} (xk2) ;

      \path[coll!50!colk] (xk1) edge[thick] node[below] {$x_n$} (xl2);
    \end{tikzpicture}
    \caption{Proof of Claim~\ref{claim:fo-pdl}}
    \label{fig:fo-pdl}
  \end{figure}

  We give now the formal proof of Claim~\ref{claim:fo-pdl}.
  Assume $M,\nu \models \Upsilon$.
  There exists $e \in E$ such that
  for all $j \in J$, $M,\nu(y_{j}),e \models \pi_{j}$, and
  for all $j \in J'$, $M,e \models \Loop {\pi_j}$.
  In particular, all $\minpie {\pi_{j}} {\nu(y_{j})}$ and
  $\maxpie {\pi_{j}} {\nu(y_{j})}$ for $j \in J$ are well-defined
  and on process $\proc(e)$.
  Let $k \in J$ such that $\minpie {\pi_{k}} {\nu(y_{k})}$ is
  maximal, i.e., $\minpie {\pi_{j}} {\nu(y_{j})} \leproc
  \minpie {\pi_{k}} {\nu(y_{k})}$ for all $j \in J$.
  Then, for all $j \in J$, we have $M,\nu(y_{j}),\nu(y_{k}) \models
  (\minpi {\pi_{j}}) \cdot {\righta} \cdot (\minpi {\pi_{k}})^{-1}$.
  Similarly,
  let $\ell \in J$ such that $\maxpie {\pi_{\ell}} {\nu(y_{\ell})}$
  is minimal. Then, for all $j \in J$, $M,\nu(y_{\ell}),\nu(y_{j})
  \models (\maxpi {\pi_{\ell}}) \cdot {\righta} \cdot
  (\maxpi {\pi_{j}})^{-1}$.
  In addition, we have $M,e \models \psi$,
  $M,\nu(y_{k}),e \models \pi_{k}$, and
  $M,e,\nu(y_{\ell}), \models \pi_{\ell}^{-1}$,
  hence $M,\nu(y_{k}),\nu(y_{\ell}) \models
  \pi_{k} \cdot \test{\psi} \cdot \pi_{\ell}^{-1}$.
  So we have 
  $M,\nu \models \Upsilon'$.
  
  Conversely, assume $M,\nu \models \Upsilon'$.
  Let $k,\ell \in J$ such that the corresponding sub-formula is satisfied.
  There exists $e \in E$ such that $M,\nu(y_{k}),e \models \pi_{k}$,
  $M,e \models \psi$, and $M,e,\nu(y_{\ell}) \models \pi_{\ell}^{-1}$.
  Note that we have
  $\minpie {\pi_{k}} {\nu(y_{k})} \leproc e \leproc
  \maxpie {\pi_{\ell}} {\nu(y_{\ell})}$.
  For all $j \in J'$, we have $M,e \models \Loop {\pi_{j}}$,
  i.e., $M,\nu[x \mapsto e] \models \pi_{j}(x_{n},x_n)$.
  Now, let $j \in J$. We have $M,\nu(y_{j}),\nu(y_{k}) \models
  (\minpi {\pi_{j}}) \cdot {\righta} \cdot (\minpi {\pi_{k}})^{-1}$,
  hence $\minpie {\pi_{j}} {\nu(y_{j})} \leproc
  \minpie {\pi_{k}} {\nu(y_{k})} \leproc e$.
  Similarly,
  $M,\nu(y_{\ell}),\nu(y_{j}) \models (\maxpi {\pi_{\ell}}) \cdot
  {\righta} \cdot (\maxpi {\pi_{j}})^{-1}$, hence
  $e \leproc \maxpie {\pi_{\ell}} {\nu(y_{\ell})} \leproc
  \maxpie {\pi_{j}} {\nu(y_{j})}$.
  In addition, since $M,e \models \psi$, we have
  $M,e \models \existsptrue{\pi_{j}^{-1}}$.
  Applying Lemma~\ref{lem:image}, we get
  $M,\nu(y_{j}),e \models \pi_{j}$, i.e.,
  $M,\nu[x \mapsto e] \models \pi_j(y_j,x_n)$.
  Hence, $M,\nu \models \Upsilon$.
  This concludes the proof of Claim~\ref{claim:fo-pdl}.

  Thus, $\Upsilon$ is equivalent to some positive
  combination of formulas $\pi(x,y)$ with $\pi \in \PDLm$ and 
  $x,y\in\{x_1,\ldots,x_{n-1}\}=\Free(\Phi)$, therefore,
  so is $\Phi$.
  Note that the two formulas $\bigl((\minpi {\pi_{j}}) \cdot {\righta} \cdot
  (\minpi {\pi_{k}})^{-1}\bigr) (y_{j},y_{k})$ and 
  $\bigl((\maxpi {\pi_{\ell}}) \cdot {\righta}
  \cdot (\maxpi {\pi_{j}})^{-1}\bigr) (y_{\ell},y_{j})$ are not $\PDLm$
  formulas (since $\righta$ is not). However, they are disjunctions of
  $\PDLm$ formulas, for instance, 
  $\bigl((\minpi {\pi_{j}}) \cdot {\righta} \cdot
  (\minpi {\pi_{k}})^{-1}\bigr) (y_{j},y_{k}) \equiv
  \bigl((\minpi {\pi_{j}}) \cdot (\minpi {\pi_{k}})^{-1}\bigr) (y_{j},y_{k})
  \lor \bigl((\minpi {\pi_{j}}) \cdot {\rightp} \cdot
  (\minpi {\pi_{k}})^{-1}\bigr) (y_{j},y_{k})$.
\end{proof}

We are now able to prove the main result relating $\FO[\prel,\mrel,\le]$ and
$\PDLm$.

\begin{proof}[Proof of Theorem~\ref{thm:FO-to-PDLm-main}]
  Let $\Phi_2(x_1,x_2)$ be an $\FO[\prel,\mrel,\le]$ formula with two free
  variables.  We apply Proposition~\ref{thm:FO-to-PDLm} to $\Phi_2(x_1,x_2)$ and
  obtain a boolean combination of path formulas $\pi(y,z)$ with
  $y,z\in\{x_1,x_2\}$.  First, we bring it into a positive
  boolean combination
  using Lemma~\ref{lem:closure-complement}.  Next, we replace formulas
  $\pi(x_1,x_1)$ with $\bigvee_{p,q} (\test{\Loop{\pi}}\cdot\jump{p}{q})(x_1,x_2)$.
  Similarly, $\pi(x_2,x_2)$ is replaced with $\bigvee_{p,q}
  (\jump{p}{q}\cdot\test{\Loop{\pi}})(x_1,x_2)$.  Also, $\pi(x_2,x_1)$ is replaced with
  $\pi^{-1}(x_1,x_2)$.  Finally, we transform it into disjunctive normal form:
  we obtain $\Phi_1(x_1,x_2)\equiv\bigvee_{i}\bigwedge_{j} 
  \pi_{ij}(x_1,x_2)$, which concludes the proof in the case of two free 
  variables.

    Next, let $\Phi_1(x)$ be an $\FO[\prel,\mrel,\le]$ formula with one free
    variable.
    As above, applying Proposition~\ref{thm:FO-to-PDLm} to $\Phi_1(x)$
    and then Lemma~\ref{lem:closure-complement}, we obtain $\PDLm$ path
    formulas $\pi_{ij}$ such that
    $\Phi_1(x) \equiv  \bigvee_i \bigwedge_j \pi_{ij}(x,x)$.
    Now, $M,[x \mapsto e] \models \pi_{ij}(x,x)$ iff
    $M,e \models \Loop{\pi_{ij}}$.
    Hence, $\Phi(x) \equiv \bigvee_i \bigwedge_j \Loop{\pi_{ij}}$.
  
  Finally, an $\FO[\prel,\mrel,\le]$ sentence $\Phi_0$ is a boolean combination of 
  formulas of the form $\exists x.\Phi_1(x)$. Applying the theorem to $\Phi_1(x)$, we 
  obtain an equivalent $\PDLm$ event formula $\varphi$. Then, we take 
  $\pdlsentence=\E\varphi$, which is trivially equivalent to $\exists x.\Phi_1(x)$.
\end{proof}

\section{From $\boldsymbol{\sfPDLm}$ to CFMs}\label{sec:pdl-cfm}

In the inductive translation of $\sfPDLm$ formulas into CFMs,
\emph{event} formulas will be evaluated by \emph{MSC transducers}.
An MSC transducer for a formula $\varphi$ produces a truth value at every event
on the given MSC. More precisely, it outputs $1$ when $\varphi$ holds, and $0$ otherwise.
We will first introduce MSC transducers formally and then go into the actual translation.

\subsection{Letter-to-letter MSC Transducers}

Let $\Gamma$ be a nonempty finite output alphabet.
A \emph{(nondeterministic) letter-to-letter MSC transducer} (or simply, \emph{transducer})
$\A$ over $\Procs$ and from $\Sigma$ to $\Gamma$ is a CFM
over $\Procs$ and $\Sigma \times \Gamma$.
The transducer $\A$ accepts the relation
$\Lt \A = \{\bigl((E,\prel,\mrel,\ploc,\lambda),(E,\prel,\mrel,\ploc,\gamma)\bigr)
\mid {(E,\prel,\mrel,\ploc,\lambda\times\gamma)} \in \msclang(\A)\}$.
Transducers are closed under product and composition, using standard
constructions:

\begin{lemma}
  Let $\A$ be a transducer from $\Sigma$ to $\Gamma$, and $\A'$ a transducer
  from $\Sigma$ to $\Gamma'$.
  There exists a transducer $\A \times \A'$ from $\Sigma$ to $\Gamma \times \Gamma'$ such that
  \begin{align*}
    \Lt {\A \times \A'}
    & = \big\{
      \big((E,\prel,\mrel,\ploc,\lambda),
      (E,\prel,\mrel,\ploc,\gamma \times \gamma')\big)
      \mid {} \\
    & \hspace{5em}
      \big((E,\prel,\mrel,\ploc,\lambda),(E,\prel,\mrel,\ploc,\gamma)\big)
      \in \Lt {\A}, \\
    & \hspace{5em}
      \big((E,\prel,\mrel,\ploc,\lambda),(E,\prel,\mrel,\ploc,\gamma')\big)
      \in \Lt {\A'}
      \big\} \, .
  \end{align*}
\end{lemma}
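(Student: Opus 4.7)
The plan is to perform the standard product construction on the two CFMs underlying $\A$ and $\A'$. Write $\A = ((\A_p)_{p\in\Procs}, \Msg, \Acc)$ with $\A_p = (S_p, \init_p, \Delta_p)$, and likewise $\A' = ((\A'_p)_{p\in\Procs}, \Msg', \Acc')$ with $\A'_p = (S'_p, \init'_p, \Delta'_p)$. I build $\A \times \A'$ as a CFM over $\Procs$ and $\Sigma \times (\Gamma \times \Gamma')$ with message alphabet $\Msg \times \Msg'$, per-process component $\B_p = (S_p \times S'_p,\,(\init_p, \init'_p),\,\Delta''_p)$, and acceptance condition $\Acc'' = \{((s_p,s'_p))_{p \in \Procs} \mid (s_p)_{p \in \Procs} \in \Acc \text{ and } (s'_p)_{p \in \Procs} \in \Acc'\}$.

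For the transition relation $\Delta''_p$, I synchronize transitions of $\A_p$ and $\A'_p$ by matching on their common $\Sigma$-component: an internal transition $((s,s'), \loc{(a,\gamma,\gamma')}, (t,t'))$ belongs to $\Delta''_p$ iff $(s, \loc{(a,\gamma)}, t) \in \Delta_p$ and $(s', \loc{(a,\gamma')}, t') \in \Delta'_p$; a send transition $((s,s'), \send{(a,\gamma,\gamma')}{(m,m')}{q}, (t,t'))$ belongs to $\Delta''_p$ iff $(s, \send{(a,\gamma)}{m}{q}, t) \in \Delta_p$ and $(s', \send{(a,\gamma')}{m'}{q}, t') \in \Delta'_p$; receive transitions are handled symmetrically, pairing the two message components coordinatewise.

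Correctness then follows by projection. Given an accepting run of $\A \times \A'$ on an MSC $(E,\prel,\mrel,\ploc,\lambda \times \gamma \times \gamma')$, the first components of the chosen transitions form an accepting run of $\A$ on $(E,\prel,\mrel,\ploc,\lambda \times \gamma)$, and the second components an accepting run of $\A'$ on $(E,\prel,\mrel,\ploc,\lambda \times \gamma')$; conversely, any pair of accepting runs of $\A$ and $\A'$ on the same MSC skeleton can be combined into an accepting product run, because both runs classify each event as internal, send, or receive in the same way and agree on the sender/receiver process of every message edge. I do not anticipate any genuine obstacle, as this is simply the CFM analogue of the product of finite automata. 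The only point requiring a moment of care is the pairing of messages in $\Msg \times \Msg'$, which ensures that a single message edge of the product run carries the $\Msg$-message chosen by $\A$ together with the $\Msg'$-message chosen by $\A'$; FIFO preservation in the product channel then follows coordinatewise from FIFO preservation in each of the two component channels.
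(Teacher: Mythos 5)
Your construction is correct and is precisely the ``standard construction'' that the paper invokes without spelling out: a per-process product of states, a product message alphabet, synchronization of transitions on the common $\Sigma$-component and on the action type, and conjunctive acceptance. The correctness argument by projecting and recombining runs is sound, since all run conditions are local to events, process edges, and message edges, and both component runs necessarily agree on the classification of each event and on the endpoints of each message edge.
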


\begin{lemma}
  Let $\A$ be a transducer from $\Sigma$ to $\Gamma$, and $\A'$ a
  transducer from $\Gamma$ to $\Gamma'$.
  There exists a transducer $\A' \circ \A$ from $\Sigma$ to $\Gamma'$ such that
  $$
    \Lt {\A' \circ \A} = \Lt {\A'} \circ \Lt {\A} =
    \{ (M,M'') \mid \exists M' \in \MSCs{\Procs}{\Gamma} :
    (M,M') \in \Lt {\A}, (M',M'') \in \Lt {\A'} \} \, .
  $$
\end{lemma}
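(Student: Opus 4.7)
The plan is to use a standard product construction that simulates $\A$ and $\A'$ synchronously at every event, guessing the intermediate $\Gamma$-label on the fly. Writing $\A = ((\A_p)_{p \in \Procs}, \Msg, \Acc)$ with $\A_p = (S_p, \init_p, \Delta_p)$, and similarly $\A' = ((\A'_p)_{p \in \Procs}, \Msg', \Acc')$ with $\A'_p = (S'_p, \init'_p, \Delta'_p)$, I would define $\A' \circ \A$ as a CFM over $\Sigma \times \Gamma'$ whose local state space at process $p$ is $S_p \times S'_p$ with initial state $(\init_p, \init'_p)$, whose message alphabet is $\Msg \times \Msg'$, and whose acceptance condition is $\{((s_p, s'_p))_{p \in \Procs} \mid (s_p)_{p \in \Procs} \in \Acc \text{ and } (s'_p)_{p \in \Procs} \in \Acc'\}$.

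Next I would specify the transitions so that reading $(a, \gamma') \in \Sigma \times \Gamma'$ at process $p$ is enabled precisely when there is some intermediate $\gamma \in \Gamma$ for which $\A_p$ has a matching transition labeled $(a, \gamma)$ and $\A'_p$ has a matching transition labeled $(\gamma, \gamma')$. Internal transitions pair up two internal ones; a send transition of $\A' \circ \A$ sending $(m, m') \in \Msg \times \Msg'$ to $q$ pairs an $\A_p$-send of $m$ to $q$ under $(a, \gamma)$ with an $\A'_p$-send of $m'$ to $q$ under $(\gamma, \gamma')$; receive transitions are symmetric, guessing $\gamma$ afresh and synchronizing an $\A_p$-receive of $m$ with an $\A'_p$-receive of $m'$.

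Correctness is then a routine bookkeeping argument. Given any accepting run of $\A' \circ \A$ on an input $M$ producing $M''$, projecting the guessed $\gamma$-components at each event yields a $\Gamma$-labeling and hence an intermediate MSC $M' \in \MSCs{\Procs}{\Gamma}$; the two component projections of the run give accepting runs of $\A$ on $(M, M')$ and of $\A'$ on $(M', M'')$. Conversely, any pair of accepting runs witnessing some $M'$ zips into a run of $\A' \circ \A$, and the acceptance conditions match by construction.

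The one point worth flagging, rather than a genuine obstacle, is that the intermediate $\gamma$ is guessed \emph{independently} at the send event and at its matching receive event, and is not propagated along the channel as part of the pair $(m, m')$. This is exactly what is needed: an intermediate MSC $M'$ is free to assign different $\Gamma$-labels to a send and to its matching receive, so decoupling the two guesses is both correct and necessary. Beyond this, the argument mirrors the standard product construction for finite-word transducers and, at the CFM level, is directly analogous to the product construction used in the preceding lemma.
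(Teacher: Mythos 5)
Your construction is correct and is exactly the ``standard construction'' the paper invokes without spelling out: a synchronous product that guesses the intermediate $\Gamma$-label at each event, with paired states, paired messages, and conjunctive acceptance. Your remark that the intermediate label need not be propagated along message edges (since the intermediate MSC may label a send and its matching receive independently) is the right observation and the only subtlety worth noting.
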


\subsection{Translation of $\boldsymbol{\sfPDLm}$ Event Formulas into CFMs}

For a $\PDLm$ event formula $\varphi$ and an MSC
$M = (E,\prel,\mrel,\ploc,\lambda)$ over $\Procs$ and $\Sigma$,
we define an MSC $\Mphi M \varphi = (E,\prel,\mrel,\ploc,\gamma)$ over
$\Procs$ and $\{0,1\}$, by setting $\gamma(e) = 1$ if $M,e \models \varphi$,
and $\gamma(e) = 0$ otherwise.
Our goal is to construct a transducer $\Aphi \varphi$ such that
$\Lt {\Aphi \varphi} = \{(M,\Mphi M \varphi) \mid M \in \MSCs \Procs \Sigma\}$.

We start with the case of formulas from $\sfPDL[\emptyset]$, i.e.,  without $\Loopop$.

\begin{lemma}\label{lem:trad-loop-free}
  Let $\varphi$ be a $\sfPDL[\emptyset]$ event formula.
  There exists a transducer $\Aphi \varphi$
  such that
  $\Lt {\Aphi \varphi} = \{(M,\Mphi M \varphi) \mid M \in \MSCs \Procs \Sigma\}$.
\end{lemma}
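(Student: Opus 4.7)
The plan is to proceed by structural induction on the $\sfPDL[\emptyset]$ event formula $\varphi$. The atomic cases $\varphi = p$ and $\varphi = a$ are trivial: the transducer reads each event and outputs $1$ iff $\ploc(e)=p$ (respectively $\lambda(e)=a$). For $\varphi = \varphi_1 \vee \varphi_2$ and $\varphi = \lnot \varphi'$, I would first obtain $\Aphi{\varphi_1}, \Aphi{\varphi_2}$ (resp.\ $\Aphi{\varphi'}$) by induction, form the product transducer from the previous closure lemma, and then compose with a trivial stateless letter-to-letter transducer from $\{0,1\}^2$ (resp.\ $\{0,1\}$) to $\{0,1\}$ that applies the boolean operation pointwise.

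The interesting case is $\varphi = \existsp{\pi}{\psi}$. By the outer induction I first obtain $\Aphi{\psi}$ producing an MSC labeled over $\Sigma \times \{0,1\}$ where the second component marks where $\psi$ holds. I then perform a secondary induction on $\pi$. Concatenation is handled by the equivalence $\existsp{\pi_1 \cdot \pi_2}{\psi} \equiv \existsp{\pi_1}{\existsp{\pi_2}{\psi}}$, which produces two strictly smaller modal event formulas. The test case $\existsp{\test{\chi}}{\psi} \equiv \chi \wedge \psi$ is handled via the boolean cases. For $\pi \in \{{\prel},{\leftmove}\}$, each process just needs to remember in its local state whether the previous/next event was marked; the transducer is a simple local automaton on each process. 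For $\pi = {\mrel_{p,q}}$ (and symmetrically for its converse), I extend the message alphabet by a bit and let the sender at $e$ nondeterministically guess the truth value of $\psi$ at the matching receiver $f$, transmit this guess inside the message, and let process $q$ verify at $f$ that the guess equals the $\psi$-bit computed by $\Aphi{\psi}$; the output at $e$ is then the transmitted guess (and $0$ at all non-send events or sends to the wrong process).

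The two subtler cases are $\rightg{\chi}$, $\leftg{\chi}$ and $\jump{p}{r}$. For $\rightg{\chi}$, I would first apply the induction hypothesis to $\chi$, so that after a product each event carries both the $\chi$-bit and the $\psi$-bit; then the projection to each process $p$ is a word, and $\existsp{\rightg{\chi}}{\psi}$ at a position asks for a strict future position satisfying $\psi$ such that every strictly intermediate position on the same process satisfies $\chi$. This is a regular property of words over $\Sigma \times \{0,1\}^2$, computable by a nondeterministic finite automaton run independently on each process in the standard LTL-to-automaton fashion (guessing the obligation and discharging it via the process's final-state component of $\Acc$); the past operator $\leftg{\chi}$ is symmetric and deterministic. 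For $\jump{p}{r}$, the global flag ``some event on $r$ satisfies $\psi$'' is guessed by every process and stored in its state; the acceptance condition then enforces that the guesses at $p$ and $r$ agree and that $r$'s guess is correct with respect to what it actually observed. The output at $e$ on process $p$ is this guessed bit, and $0$ elsewhere.

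The main obstacle is not any single case but keeping the compositions well typed: every step uses the product and composition closures to build on $\Aphi{\psi}$ (and, for $\rightg{}$, on $\Aphi{\chi}$), so I must be careful that the intermediate output alphabets and the synchronization via messages and via $\Acc$ preserve the MSC structure and FIFO discipline. Once the atomic path cases are verified, concatenation collapses by the $\existsp{\pi_1 \cdot \pi_2}{\psi} \equiv \existsp{\pi_1}{\existsp{\pi_2}{\psi}}$ reduction and the outer induction closes.
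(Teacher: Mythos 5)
Your proposal is correct and follows essentially the same route as the paper: the paper likewise normalizes via $\existsp{\pi_1\cdot\pi_2}{\varphi}\equiv\existsp{\pi_1}{(\existsp{\pi_2}{\varphi})}$ and $\existsp{\test{\varphi}}{\psi}\equiv\varphi\land\psi$, and then builds the transducer compositionally from the same atomic pieces (boolean gates, a message-bit guess verified at the receive for $\mrel_{p,q}$, strict-until/since automata per process for $\rightg{}$/$\leftg{}$, and a globally guessed bit checked by the acceptance condition for $\jump{p}{r}$). The only cosmetic difference is that the paper treats ${\prel}$ and ${\leftmove}$ as the special cases $\rightg{\False}$ and $\leftg{\False}$ rather than giving them separate local automata.
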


\begin{proof}
  Any $\PDL[\emptyset]$ event formula is equivalent to some formula
  $\varphi$ over the syntax
  $$
    \varphi ::= p \mid a \mid \varphi \lor \varphi \mid \lnot \varphi \mid
    \existsp{\mrel_{p,q}}{\varphi} \mid \existsp{\mrel_{p,q}^{-1}}{\varphi}
    \mid \existsp{\rightg \varphi}{\varphi}
    \mid \existsp{\leftg \varphi}{\varphi}
    \mid \existsp{\jump p q}{\varphi}
  $$
  Indeed, we have $\existsp{\pi_1 \cdot \pi_2}{\varphi} \equiv
  \existsp{\pi_1}{(\existsp{\pi_2}{\varphi})}$,
  and $\existsp{\test{\varphi}}{\psi} \equiv \varphi \land \psi$.
  Notice that ${\prel}\equiv{\rightg{\False}}$ and ${\leftmove}\equiv{\leftg{\False}}$.

  It is easy to define $\A_\varphi$ for formulas $\varphi = p$, with $p \in P$,
  or $\varphi = a$, with $a \in \Sigma$.
  We also use below simple transducers over
  $P$ from $\{0,1\}^2$ or $\{0,1\}$ to $\{0,1\}$.
  For instance, the transducer $\B_{\neg}$ from $\{0,1\}$ to $\{0,1\}$ outputs 
  the negation of the bit read and $\B_{\vee}$ from $\{0,1\}^{2}$ to 
  $\{0,1\}$ outputs the disjunction of the two bits read.
  The transducer $\B_{\mrel_{p,q}}$ from $\{0,1\}$ to $\{0,1\}$ 
  outputs 1 at an event $e$ iff $e$ is a send
  event from $p$ to $q$ and the corresponding receive event $f$ is labeled 1.
  To do so, at each send event $e$ from $p$ to $q$, the transducer guesses
  whether the corresponding receive event $f$ is labeled $0$ or 1, outputs its
  guess and sends it on the message from $e$ to $f$.  At the receive event $f$
  the transducer checks that the guess was correct.  The run is accepting if all
  guesses were correct.
  The deterministic transducer $\B_{\mathsf{YS}}$ from $\{0,1\}^{2}$ to
  $\{0,1\}$ corresponds to the \emph{strict since} modality.  On each process,
  it runs the automaton given in Figure~\ref{fig:YS}:
  it outputs 1 at some event
  $e$ if there is $g\ltproc e$ where the second bit is 1 and for all $g\ltproc
  f\ltproc e$ the first bit at $f$ is 1.
  Similarly, we can construct the nondeterministic transducer 
  $\B_{\mathsf{XU}}$ for the \emph{strict until}.
  Finally, it is easy to construct a transducer $\B_{\jump p q}$ which outputs 0 on
  all events of processes $r\neq p$ and outputs 1 (resp.\ 0) on all events of
  process $p$ iff some event (resp.\ no event) of process $q$ is labeled 1.
  \begin{figure}[tbp]
      \centering
    \includegraphics[page=1]{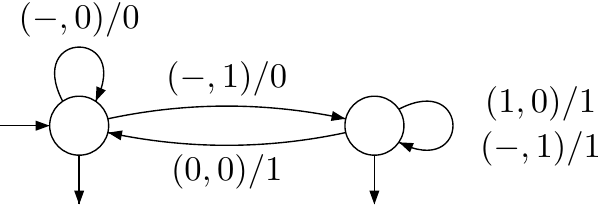}
    \caption{Transducer for strict since.
      In a transition $a/b$, the input is $a$ and the output is $b$.
      Write and receive actions are omitted.}
    \label{fig:YS}
  \end{figure}
  We then let
  \begin{align*}
    \A_{\varphi_1 \lor \varphi_2}
    & = \B_{\lor} \circ (\A_{\varphi_1} \times \A_{\varphi_2})
    & \A_{\lnot \varphi}
    & = \B_{\neg} \circ \A_{\varphi} \\
    \A_{\existsp{\mrel_{p,q}}{\varphi}}
    & = \B_{\mrel_{p,q}} \circ \A_{\varphi}
    & \A_{\existsp{\mrel_{p,q}^{-1}}{\varphi}}
    & = \B_{\mrel_{p,q}^{-1}} \circ \A_{\varphi} \\
    \A_{\existsp{\rightg {\varphi_1}}{\varphi_2}}
    & = \B_{\mathsf{XU}}
      \circ (\A_{\varphi_1} \times \A_{\varphi_2})
    & \A_{\existsp{\jump p q}{\varphi}}
    & = \B_{\jump p q} \circ \A_{\varphi}\\
    \A_{\existsp{\leftg {\varphi_1}}{\varphi_2}}
    & = \B_{\mathsf{YS}}
      \circ (\A_{\varphi_1} \times \A_{\varphi_2}) \, . && \qedhere
  \end{align*}
\end{proof}

Next, we look at a single loop where the path $\pi \in \sfPDL[\emptyset]$ is of
the form $\minpi{\pi'}$ or $\maxpi{\pi'}$.  This case will be simpler than
general loop formulas, because of the fact that $\semM {\minpi {\pi'}}(e)$ is
always either empty or a singleton.  Recall that, in addition, $\minpi{\pi'}$ is
monotone.

\begin{lemma}\label{lem:det-loop}
  Let $\pi$ be a $\sfPDLall[\emptyset]$ path formula of the form $\pi = \minpi {\pi'}$
  or $\pi = \maxpi {\pi'}$, and let $\varphi = \Loop \pi$.
  There exists a transducer $\Aphi {\varphi}$
  such that $\Lt {\Aphi {\varphi}} =
  \{(M,\Mphi M {\varphi}) \mid M \in \MSCs \Procs \Sigma\}$.
\end{lemma}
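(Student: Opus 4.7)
By symmetry it suffices to treat $\pi = \minpi{\pi'}$. Note first that, since $\Comp{\pi}$ is either $\id$, a singleton, or empty, the formula $\Loop{\pi}$ can hold at $e$ only if $(\ploc(e),\ploc(e)) \in \Comp{\pi}$; the transducer can trivially output $0$ on every other event. Writing $g(e) \df \minpie{\pi'}{e}$, the goal reduces to marking an event $e$ with $1$ precisely when $g(e) = e$. Crucially, $g$ is a \emph{deterministic} partial function, and by Lemma~\ref{lem:monotone} it is monotone along $\leproc$ on each process.

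The plan is to exploit the inductive construction of $\minpi{\pi'}$ from the preceding lemma: $\minpi{\pi'}$ is a composition $\alpha_1 \cdot \alpha_2 \cdots \alpha_n$ of \emph{atomic} min-paths of the simple shapes listed there (basic edges with tests, $\rightg{\varphi \land \lnot\psi}\cdot\test{\psi}$, $\leftg{\varphi}\cdot\test{\psi'}$, $\jump{p}{q}\cdot\test{\psi'}$). Each atomic $\alpha_i$ realizes a deterministic monotone partial function $g_i$, and $g = g_n \circ \cdots \circ g_1$. A fixed point $e$ of $g$ induces a unique chain of waypoints $e = h_0, h_1 = g_1(h_0), \ldots, h_n = g_n(h_{n-1}) = e$. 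I would have the transducer nondeterministically guess, at every event $e$, the output bit $b(e)$; whenever $b(e) = 1$, it additionally guesses $n-1$ ``token'' marks placing $h_1,\ldots,h_{n-1}$ in the MSC and verifies each transition $h_{i-1} \to h_i$ locally, together with $h_n = e$. The atomic tests $\psi_i$ appearing in each $\alpha_i$ are $\sfPDL[\emptyset]$ event formulas, so Lemma~\ref{lem:trad-loop-free} gives transducers that precompute their truth values at every event; the required \emph{minimality} conditions (e.g.\ ``$h_i$ is the \emph{first} subsequent $\psi_i$-event'') are then directly encoded in these atomic shapes, which is why the min/max case is simpler than a general loop.

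A key finite-state observation makes this feasible: if $e \neq e'$ are both fixed points of $g$ and their waypoint chains coincide at some stage, i.e.\ $g_i \circ\cdots\circ g_1(e) = g_i \circ\cdots\circ g_1(e')$, then applying $g_n \circ\cdots\circ g_{i+1}$ yields $e = e'$, a contradiction. Hence at every event $f$ and every stage $i \in \{1,\ldots,n-1\}$, at most one fixed-point token is active, so the transducer needs only $O(n)$ bits of state per event to track token stages, plus bounded message content to carry tokens along $\mrel$-edges when some $\alpha_i$ crosses processes. The main obstacle I anticipate is the careful bookkeeping of token propagation through the atomic steps $\rightg{\cdot}$ and $\leftg{\cdot}$, whose semantics requires every intermediate event along a process interval to satisfy a fixed $\sfPDL[\emptyset]$ formula; this I would handle by pre-composing with the corresponding event-formula transducer from Lemma~\ref{lem:trad-loop-free} and running a simple monitor that resets as soon as a token is ``launched''. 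Monotonicity of $g$ and of each $g_i$ guarantees that, along any single process, token chains for consecutive fixed points never cross, which is what ultimately prevents a combinatorial blow-up in the construction.
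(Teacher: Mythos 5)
There is a genuine gap: your construction (at best) certifies that every event you mark with $1$ is a fixed point of $\pi$, but nothing forces you to mark \emph{all} fixed points. The lemma requires $\Lt{\Aphi{\varphi}} = \{(M,\Mphi{M}{\varphi}) \mid M \in \MSCs{\Procs}{\Sigma}\}$, i.e.\ on every input the transducer must output \emph{exactly} the characteristic function of $\Loop{\pi}$. A run that guesses $b(e)=0$ everywhere, and hence launches no tokens, passes all of your local checks, so your transducer would also accept $(M,M')$ where $M'$ is the all-zeros labelling, even when $M$ has fixed points. To repair this, every event $e$ with $b(e)=0$ and $\semM{\pi}(e)\neq\emptyset$ would need to carry a verifiable certificate that the unique $e'$ with $M,e,e'\models\pi$ satisfies $e'\ltproc e$ or $e\ltproc e'$ --- and deciding which side $e'$ falls on is precisely the hard part of the general loop construction (Theorem~\ref{thm:PDLp-to-CFM}), not something you get for free in the $\minpi{\pi'}$ case. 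The paper's proof closes exactly this direction by a different device: it guesses a colouring with two ``yes'' and two ``no'' colours and checks the \emph{biconditional} ``$e$ has a yes-colour iff it has a $\pi$-image of the same colour'' using only the loop-free transducer of Lemma~\ref{lem:trad-loop-free}. Completeness holds because the functional graph of $\pi$ with self-loops removed is a forest (outdegree at most $1$, and no nontrivial cycles by Lemma~\ref{lem:monotone}), hence $2$-colourable so that every non-fixed-point receives a no-colour different from that of its image; soundness follows from the alternation of the two yes-colours along each process together with monotonicity and finiteness (Claim~\ref{cl:beta}).

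For what it is worth, the soundness half of your plan can be made to work, and more simply than by token propagation: writing $F_i$ for the set of stage-$i$ marks and $g_i$ for the $i$-th atomic min-step, the conditions $g_i(F_{i-1})=F_i$ are Boolean properties of marked events, checkable via Lemma~\ref{lem:trad-loop-free} (and Lemma~\ref{lem:inverse} for the surjectivity half), and $F_0=F_n$ then forces $g=g_n\circ\cdots\circ g_1$ to restrict to a monotone surjection of $F_0$ onto itself on each process, hence the identity by Lemma~\ref{lem:monotone}. But this only yields $F_0\subseteq\{e\mid M,e\models\Loop{\pi}\}$, never the reverse inclusion, so the argument cannot be completed along these lines without importing something like the paper's colouring idea.
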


\begin{proof}
We can assume that $\Comp{\pi} \subseteq \id$.
  We define $\Aphi{\varphi}$ as the composition of three transducers that will
  guess and check the evaluation of $\varphi$.  More precisely, $\Aphi{\varphi}$
  will be obtained as an inverse projection $\alpha^{-1}$, followed by the
  intersection with an MSC language $K$, followed by a projection $\beta$.

  We first enrich the labeling of the MSC with a color from $\Theta =
  \{\ycolone,\ycoltwo,\ncolone,\ncoltwo\}$.  Intuitively, colors $\ycolone$ and
  $\ycoltwo$ will correspond to a guess that the formula $\varphi$ is satisfied,
  and colors $\ncolone$ and $\ncoltwo$ to a guess that the formula is not
  satisfied. Consider the projection 
  $\alpha\colon\MSCs{\Procs}{\Sigma\times\Theta}\to\MSCs{\Procs}{\Sigma}$ which erases 
  the color from the labeling. The inverse projection $\alpha^{-1}$ can be 
  realized with a transducer $\A$, i.e.,
  $\Lt{\A}=\{(\alpha(M'),M')\mid M'\in\MSCs{\Procs}{\Sigma\times\Theta}\}$.
  
  Define the projection
  $\beta\colon\MSCs{\Procs}{\Sigma\times\Theta}\to\MSCs{\Procs}{\{0,1\}}$ by
  $\beta((E,\prel,\mrel,\ploc,\lambda\times\theta))=(E,\prel,\mrel,\ploc,\gamma)$,
  where $\gamma(e)=1$ if $\theta(e)\in\{\ycolone,\ycoltwo\}$, and $\gamma(e)=0$
  otherwise. The projection $\beta$ can be realized with a transducer $\A''$: 
  we have
  $\Lt{\A''}=\{(M',\beta(M'))\mid M'\in\MSCs{\Procs}{\Sigma\times\Theta}\}$.
  
  Finally, consider the language $K\subseteq\MSCs{\Procs}{\Sigma\times\Theta}$ of 
  MSCs $M'=(E,\prel,\mrel,\ploc,\lambda\times\theta)$ satisfying the following 
  two conditions:
  \begin{enumerate}[nosep]
    \item Colors $\ycolone$ and $\ycoltwo$ alternate on each process $p\in \Procs$:
    if $e_1 < \cdots < e_n$ are the events in $E_p \cap
    \theta^{-1}(\{\ycolone,\ycoltwo\})$, then $\theta(e_i) = \ycolone$ if $i$ is
    odd, and $\theta(e_i) = \ycoltwo$ if $i$ is even.
  
    \item For all $e \in E$, $\theta(e)\in\{\ycolone,\ycoltwo\}$ iff there
    exists $f \in E$ such that $M,e,f \models \pi$ and $\theta(e) = \theta(f)$.
  \end{enumerate}
  The first property is trivial to check with a CFM. Using 
  Lemma~\ref{lem:trad-loop-free}, we show
  that the second property can also be checked with a CFM.
  First, from $\pi$ we construct a $\sfPDLall[\emptyset]$ event formula
  $\psi$ over $\Procs$ and $\Sigma \times \Theta$ such that, for all
  $M'= (E,\prel,\mrel,\ploc,\lambda\times\theta)\in\MSCs{\Procs}{\Sigma\times\Theta}$
  and events $e \in E$, we have $M',e \models \psi$ iff the following holds:
  $\theta(e)\in\{\ycolone,\ycoltwo\}$ iff there is $f \in E$ such that $\alpha(M'),e,f \models \pi$ and
  $\theta(e) = \theta(f)$.
  Namely, we define
  $$\psi = (\ycolone \;\vee\; \ycoltwo) \Longleftrightarrow
  \Bigl[\bigl( {\ycolone} \;\wedge \existsp{\hat\pi}{\ycolone}\bigr) \vee
  \bigl( {\ycoltwo} \;\wedge \existsp{\hat\pi}{\ycoltwo}\bigr)\Bigr]$$
  where the state formula $\col \in \{\ycolone,\ycoltwo\}$ is an abbreviation for
  $\bigvee_{a \in \Sigma} (a,\col)$
  and $\hat\pi$ is obtained from $\pi$
  by replacing state formulas $a$ with $\bigvee_{\col \in \Theta} (a,\col)$.
  Now, the language for the second condition is
  $\{M' \in\MSCs{\Procs}{\Sigma\times\Theta} \mid$ every event of $M'_\psi$ is labeled with $1\}$,
  for which we can easily give a CFM using the transducer $\Aphi{\psi}$ from $\Sigma \times \Theta$ to $\{0,1\}$ given by
  Lemma~\ref{lem:trad-loop-free}.
  
  We deduce that there is a transducer $\A'$ 
  such that $\Lt{\A'}=\{(M',M')\mid M'\in K\}$.
  We let $\Aphi{\varphi} = \A''\circ\A'\circ\A$. Notice that
  $\Lt{\Aphi{\varphi}}=\{(\alpha(M'),\beta(M'))\mid M'\in K\}$.
  From the following two claims, we deduce immediately that 
  $\Lt{\Aphi{\varphi}}=\{(M,M_\varphi)\mid M\in\MSCs{\Procs}{\Sigma}\}$.
  
  \begin{claim}\label{cl:forest}
    For all $M\in\MSCs{\Procs}{\Sigma}$, there exists $M'\in K$ with $\alpha(M')=M$.
  \end{claim}
  
  \begin{proof}[Proof of Claim~\ref{cl:forest}]
  Let $M=(E,\prel,\mrel,\ploc,\lambda)\in\MSCs{\Procs}{\Sigma}$.
  Let $E_1=\{e \in E \mid M,e \models \varphi\}$ and $E_0=E\setminus E_1$.
  Consider the graph $G = (E, \{(e,f) \mid M,e,f \models \pi\})$.
  Since $\pi = \minpi {\pi'}$ or $\pi = \maxpi {\pi'}$, every vertex has
  outdegree at most 1, and, by Lemma~\ref{lem:monotone},
  there are no cycles except for self-loops. So the restriction of $G$ to
  $E_0$ is a forest, and there exists a $2$-coloring
  $\chi \colon E_0 \to \{\ncolone,\ncoltwo\}$ such that, for all $e,f \in E_0$
  with $M,e,f \models \pi$, we have $\chi(e) \neq \chi(f)$.
  This is illustrated in Figure~\ref{fig:Gpi}.
  Moreover, there exists $\theta \colon E \to \Theta$ such that $\theta(e) = \chi(e)$ for
  $e \in E_0$, and $\theta(e)\in\{\ycolone,\ycoltwo\}$ for $e \in E_1$ is 
  such that Condition 1 of the definition of $K$ is satisfied. It is easy to 
  see that Condition 2 is also satisfied. 
  Indeed, if 
  $\theta(e)\in\{\ycolone,\ycoltwo\}$, then $e\in E_1$ and $M,e,e\models\pi$.
  Now, if $\theta(e)\notin\{\ycolone,\ycoltwo\}$, then $e\in E_0$ and either 
  $M,e\not\models\existsptrue{\pi}$ or, by definition of $\theta$, we have 
  $\theta(e)\neq\theta(f)$ for the unique $f$ such that $M,e,f\models\pi$.
  \qedhere(Claim~\ref{cl:forest})
  \end{proof}
  
  \tikzstyle{dot} = [white, draw=black, line width=0.5pt, fill, circle, inner sep=0, minimum size = 6pt]
\tikzstyle{bdot} = [gray, draw=black, line width=0.5pt, circle, fill, inner sep=0, minimum size = 6pt]
\tikzstyle{rdot} = [white, draw=black, line width=0.5pt, fill, rectangle, inner sep=0, minimum size = 6pt]
\tikzstyle{rbdot} = [gray, draw=black, line width=0.5pt, rectangle, fill, inner sep=0, minimum size = 6pt]

   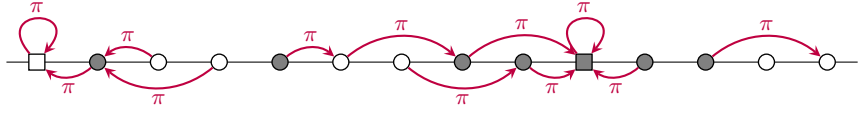
\begin{figure}
   \centering
    \begin{tikzpicture}[auto,>=stealth, scale=0.8,
      every loop/.style={in=50,out=130,min distance=1cm}, bend angle=40,
      inner sep=2pt,font=\footnotesize]
      \draw (0.5,0) -- (14.5,0);
      \foreach \i in {3,4,6,7,13,14} {
        \node[dot] (e\i) at (\i,0) {};
      }
      \foreach \i in {2,5,8,9,11,12} {
        \node[bdot] (e\i) at (\i,0) {};
      }
      \node[rdot] (e1) at (1,0) {};
      \node[rbdot] (e10) at (10,0) {};

      \path[purple,thick,->,font=\footnotesize]
      (e1) edge[loop above] node {$\pi$} (e1)
      (e10) edge[loop above] node {$\pi$} (e10)
      (e2) edge[bend left] node {$\pi$} (e1)
      (e4) edge[bend left] node {$\pi$} (e2)
      (e3) edge[bend right] node[above] {$\pi$} (e2)
      (e11) edge[bend left] node {$\pi$} (e10)
      (e9) edge[bend right] node[below] {$\pi$} (e10)
      (e7) edge[bend right] node[below] {$\pi$} (e9)
      (e5) edge[bend left] node {$\pi$} (e6)
      (e6) edge[bend left] node {$\pi$} (e8)
      (e8) edge[bend left] node {$\pi$} (e10)
      (e12) edge[bend left] node {$\pi$} (e14);
    \end{tikzpicture}
    \caption{Proof of Claim~\ref{cl:forest}\label{fig:Gpi}: $2$-coloring of $E_0$ in Graph $G$}
  \end{figure}

  \begin{claim}\label{cl:beta}
    For all $M'\in K$, we have $\beta(M')=M_\varphi$, where $M=\alpha(M')$.
  \end{claim}
  
  \begin{proof}[Proof of Claim~\ref{cl:beta}]
  Let $M'=(E,\prel,\mrel,\ploc,\lambda\times\theta)\in K$ and $M=\alpha(M')$.
  Suppose towards a contradiction that
  $M_\varphi\neq\beta(M)=(E,\prel,\mrel,\ploc,\gamma)$.  By Condition~2, for all
  $e \in E$ such that $\gamma(e) = 0$, we have $M,e \not\models \varphi$.  So
  there exists $f_0 \in E$ such that $\gamma(f_0) = 1$ and $M,f_0 \not\models
  \varphi$.  Notice that $\theta(f_0)\in\{\ycolone,\ycoltwo\}$.  For all $i \in
  \Nat$, let $f_{i+1}$ be the unique event such that $M,f_i,f_{i+1} \models
  \pi$.  Such an event exists by Condition~2, and is unique since $\pi = \minpi
  {\pi'}$ or $\pi = \maxpi {\pi'}$.  Note that, for all $i$, $\theta(f_{i+1}) =
  \theta(f_i) \in \{\ycolone,\ycoltwo\}$.  Suppose $f_0 \ltproc f_1$ (the case
  $f_1 \ltproc f_0$ is similar).  By Condition~1, there exists $g_0$ such that
  $f_0 \ltproc g_0 \ltproc f_1$ and
  $\{\theta(f_0),\theta(g_0)\}=\{\ycolone,\ycoltwo\}$.
  For an illustration, see Figure~\ref{fig:beta}.
  Again, for all $i \in
  \Nat$, let $g_{i+1}$ be the unique event such that $M,g_i,g_{i+1} \models
  \pi$.  Note that all $f_0,f_1,\ldots$ have the same color, in
  $\{\ycolone,\ycoltwo\}$, and all $g_0,g_1,\ldots$ carry the complementary
  color.  Thus, $f_i \neq g_j$ for all $i,j \in \Nat$.  But, by
  Lemma~\ref{lem:monotone}, this implies $f_0 \ltproc g_0
  \ltproc f_1 \ltproc g_1 \ltproc \cdots$, which contradicts the fact that we
  deal with finite MSCs.
  \qedhere(Claim~\ref{cl:beta})
\end{proof}

\begin{figure}
   \centering
    \begin{tikzpicture}[auto,>=stealth, scale=0.8,
      every loop/.style={in=50,out=130,min distance=1cm}, bend angle=40,
      inner sep=2pt,font=\footnotesize]
      \draw (0.5,0) -- (14.5,0);
      \foreach \i in {1,3,5,7,9,11,13} {
        \node[rdot] (e\i) at (\i,0) {};
      }
      \foreach \i in {2,4,6,8,10,12,14} {
        \node[rbdot] (e\i) at (\i,0) {};
      }
      \node (inf) at (15,0.8) {$\textcolor{red}\cdots$};
      
     \node (f0) at (1,-0.5) {$f_0$};
     \node (f1) at (5,-0.5) {$f_1$};
     \node (f2) at (11,-0.5) {$f_2$};
     \node (g0) at (2,-0.5) {$g_0$};
     \node (g1) at (8,-0.5) {$g_1$};
     \node (g2) at (14,-0.5) {$g_2$};

      \path[purple,thick,->,font=\footnotesize]
      (e1) edge[bend left] node {$\pi$} (e5)
      (e2) edge[bend left] node {$\pi$} (e8)
      (e5) edge[bend left] node {$\pi$} (e11)
      (e8) edge[bend left] node {$\pi$} (e14)
      (e11) edge[bend left] node {$\pi$} (inf);
    \end{tikzpicture}
    \caption{Proof of Claim~\ref{cl:beta}\label{fig:beta}}
  \end{figure}

This concludes the proof of Lemma~\ref{lem:det-loop}.
\end{proof}

  The general case is more complicated. We first show how to rewrite an arbitrary loop 
  formula using loops on paths of the form $\maxpi{\pi}$ or 
  $(\maxpi{\pi})\cdot{\leftp}$.
  Intuitively, this means that loop formulas will only be used to test,
  given an event $e$ such that $e' = \maxpie \pi e$ is well-defined and
  on the same process as $e$, whether $e' \ltproc e$, 
  $e' = e$, or $e \ltproc e'$.
  Indeed, we have $M,e \models \Loop{(\maxpi{\pi})\cdot{\leftp}}$ iff
  $e \ltproc \maxpie \pi e$.

\begin{lemma}\label{lem:loops}
  For all $\PDLm$ path formulas $\pi$,
  $$\Loop \pi
    \equiv \Loop{\maxpi \pi} \lor
      \left({\existsptrue{\pi^{-1}}} \land {\Loop{(\maxpi \pi) \cdot {\leftp}}}
      \land {\lnot \Loop{(\minpi \pi) \cdot {\leftp}}}\right)
      \, .
$$
\end{lemma}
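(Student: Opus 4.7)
The plan is to reduce the claim directly to Lemma~\ref{lem:image}, which characterizes $\semM{\pi}(e)$ as the set of events in the interval $[\minpie{\pi}{e}, \maxpie{\pi}{e}]$ that satisfy $\existsptrue{\pi^{-1}}$. Therefore $M, e \models \Loop{\pi}$ (that is, $e \in \semM{\pi}(e)$) is equivalent to the conjunction of three conditions: (a) $M, e \models \existsptrue{\pi}$, so that $\minpie{\pi}{e}$ and $\maxpie{\pi}{e}$ are well-defined and on the same process as $e$; (b) $\minpie{\pi}{e} \leproc e \leproc \maxpie{\pi}{e}$; and (c) $M, e \models \existsptrue{\pi^{-1}}$. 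This reformulation is the workhorse of both directions.

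For the forward direction, I would split on whether $e = \maxpie{\pi}{e}$ or $e \lessproc \maxpie{\pi}{e}$. In the first case, $M, e \models \Loop{\maxpi{\pi}}$ directly. In the second case, $e \lessproc \maxpie{\pi}{e}$ unfolds to $M, e \models \Loop{(\maxpi{\pi}) \cdot {\leftp}}$ by the semantics of composition and ${\leftp}$; condition (b) also gives $\minpie{\pi}{e} \leproc e$, ruling out $e \lessproc \minpie{\pi}{e}$ and hence $M, e \not\models \Loop{(\minpi{\pi}) \cdot {\leftp}}$; and (c) is exactly $\existsptrue{\pi^{-1}}$. So the second disjunct of the right-hand side holds.

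For the converse, I would note that the first disjunct immediately gives $e = \maxpie{\pi}{e} \in \semM{\pi}(e)$, since the maximum, whenever it is defined, belongs to $\semM{\pi}(e)$. If instead the second disjunct holds, then $\Loop{(\maxpi{\pi}) \cdot {\leftp}}$ witnesses that $\maxpie{\pi}{e}$ exists on the same process as $e$ and that $e \lessproc \maxpie{\pi}{e}$; $\lnot\Loop{(\minpi{\pi}) \cdot {\leftp}}$ rules out $e \lessproc \minpie{\pi}{e}$, so $\minpie{\pi}{e} \leproc e$; combined with $\existsptrue{\pi^{-1}}$, Lemma~\ref{lem:image} yields $e \in \semM{\pi}(e)$, i.e., $M, e \models \Loop{\pi}$.

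I do not expect a serious obstacle here: Lemma~\ref{lem:image} does the heavy lifting, and the remainder is a clean case split on the relative position of $e$ inside the interval $[\minpie{\pi}{e}, \maxpie{\pi}{e}]$. The one detail I would make sure to dispatch is the degenerate case $\Comp{\pi} \neq \id$, where $\Loop{\pi}$, $\Loop{\maxpi{\pi}}$, and $\Loop{(\maxpi{\pi}) \cdot {\leftp}}$ are each identically false (a $\pi$-loop forces $\Comp{\pi} = \id$), so the equivalence holds vacuously.
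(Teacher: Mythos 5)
Your proof is correct and follows essentially the same route as the paper: both reduce the equivalence to Lemma~\ref{lem:image}, splitting on whether $e = \maxpie{\pi}{e}$, and reading off $\minpie{\pi}{e} \leproc e \ltproc \maxpie{\pi}{e}$ together with $\existsptrue{\pi^{-1}}$ from the three conjuncts of the second disjunct. Your extra remarks (that $\Loop{(\maxpi \pi)\cdot{\leftp}}$ forces $\maxpie{\pi}{e}$, hence also $\minpie{\pi}{e}$, to lie on the process of $e$, and the vacuous case $\Comp{\pi} \neq \id$) are correct and only make explicit what the paper leaves implicit.
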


\begin{proof}
  The result follows from Lemma~\ref{lem:image}.
  Indeed, if we have $M,e \models \Loop \pi$ and $M,e \not\models \Loop {\maxpi \pi}$,
  then $\minpie \pi e \leproc e \ltproc \maxpie \pi e$
  and $M,e \models \existsptrue{\pi^{-1}}$, hence
  $M,e \models {\existsptrue{\pi^{-1}}}
  \land {\Loop{(\maxpi \pi) \cdot {\leftp}}}
  \land {\lnot \Loop{(\minpi \pi) \cdot {\leftp}}}$.
  Conversely, if $M,e \models \Loop {\maxpi \pi}$, then
  $M,e \models \Loop \pi$, and if $M,e \models ({\existsptrue{\pi^{-1}}}
  \land {\Loop{(\maxpi \pi) \cdot {\leftp}}}
  \land {\lnot \Loop{(\minpi \pi) \cdot {\leftp}}})$,
  then $M,e \models \existsptrue{\pi^{-1}}$ and $\minpie \pi e \leproc e
  \ltproc \maxpie \pi e$, hence $M,e,e \models \pi$, i.e., $M,e \models
  \Loop \pi$.
\end{proof}

Notice that, since $\minpi \pi \equiv \maxpi {(\minpi \pi)}$,
the formula $\Loop{(\minpi \pi) \cdot {\leftp}}$ can also be seen as
a special case of a $\Loop{(\maxpi {\pi'}) \cdot {\leftp}}$ formula.

\begin{theorem}\label{thm:PDLp-to-CFM}
  For all $\PDLm$ event formulas $\varphi$, there exists a transducer
  $\Aphi \varphi$
  such that
  $\Lt {\Aphi \varphi} = \{(M,\Mphi M \varphi) \mid M \in \MSCs \Procs \Sigma\}$.
\end{theorem}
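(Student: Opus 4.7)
The plan is to argue by induction on the number $n$ of $\Loop$ operators appearing in $\varphi$. The base case $n=0$ puts $\varphi$ in $\sfPDL[\emptyset]$, where the desired transducer is furnished directly by Lemma~\ref{lem:trad-loop-free}. For the inductive step I would pick an innermost occurrence $\Loop\pi$ in $\varphi$ --- innermost meaning that no $\Loop$ appears strictly inside $\pi$, so in particular every event subformula of $\pi$ is $\Loop$-free and $\pi$ is itself a $\sfPDL[\emptyset]$ path formula. Given a transducer $\B\colon\Sigma\to\{0,1\}$ for the single event formula $\Loop\pi$, I form $\varphi'$ from $\varphi$ by replacing every occurrence of $\Loop\pi$ with the atomic test $\bigvee_{a\in\Sigma}(a,1)$ over the enlarged alphabet $\Sigma\times\{0,1\}$; then $\varphi'$ has $n-1$ occurrences of $\Loop$, and the induction hypothesis supplies $\Aphi{\varphi'}\colon\Sigma\times\{0,1\}\to\{0,1\}$. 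Composition yields $\Aphi{\varphi}=\Aphi{\varphi'}\circ(\id_\Sigma\times\B)$, using closure of transducers under product and composition.

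It remains to construct $\B$ for $\Loop\pi$ with $\pi\in\sfPDL[\emptyset]$. I would use Lemma~\ref{lem:loops} to rewrite
\[
  \Loop\pi \equiv \Loop{\maxpi\pi} \lor \bigl(\existsptrue{\pi^{-1}} \land \Loop{(\maxpi\pi)\cdot\leftp} \land \lnot\Loop{(\minpi\pi)\cdot\leftp}\bigr),
\]
and build one transducer for each kind of atom, then combine them via the Boolean transducers provided by Lemma~\ref{lem:trad-loop-free}. The atom $\existsptrue{\pi^{-1}}$ is a $\sfPDL[\emptyset]$ event formula by Lemma~\ref{lem:inverse} and is handled by Lemma~\ref{lem:trad-loop-free}, while $\Loop{\maxpi\pi}$ is handled directly by Lemma~\ref{lem:det-loop}. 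Using the identity $\minpi\pi\equiv\maxpi{(\minpi\pi)}$ recalled after Lemma~\ref{lem:loops}, both remaining atoms have the form $\Loop{(\maxpi{\pi'})\cdot\leftp}$ for some $\sfPDL[\emptyset]$ path formula $\pi'$.

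The hard part will therefore be to build a transducer for a single atom $\Loop{(\maxpi{\pi'})\cdot\leftp}$, which holds at $e$ iff the event $f=\maxpie{\pi'}{e}$ is defined, lies on the process of $e$, and satisfies $e\lessproc f$. I plan to adapt the construction of Lemma~\ref{lem:det-loop}: since the witness map $e\mapsto\maxpie{\pi'}{e}$ is functional and monotone by Lemma~\ref{lem:monotone}, the forest argument of Claim~\ref{cl:forest} still yields a $2$-colouring of the ``No'' events on an enlarged alphabet $\Sigma\times\Theta$ with $\Theta=\{\ycolone,\ycoltwo,\ncolone,\ncoltwo\}$, and Yes events can alternate $\ycolone/\ycoltwo$ along each process as before. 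The subtle point --- and the main deviation from Lemma~\ref{lem:det-loop} --- is that the path $(\maxpi{\pi'})\cdot\leftp$ is no longer functional, so I need to phrase the correctness property of the colouring in terms of the canonical witness $\maxpie{\pi'}{e}$ only, rather than the full image at $e$; the resulting property is still a $\sfPDL[\emptyset]$ event formula over $\Sigma\times\Theta$ and can be certified by the transducer of Lemma~\ref{lem:trad-loop-free} applied over the enlarged alphabet, delivering $\B$ and thereby $\Aphi{\varphi}$.
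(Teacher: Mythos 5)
Your overall architecture coincides with the paper's: induction on the number of $\Loopname$ operators, elimination of an innermost $\Loop\pi$ by relabelling over $\Sigma\times\{0,1\}$ and composing transducers, reduction via Lemma~\ref{lem:loops} to the two atom shapes $\Loop{\maxpi\pi}$ and $\Loop{(\maxpi\pi)\cdot{\leftp}}$, and Lemma~\ref{lem:det-loop} for the first shape. The gap is in the one place where all the difficulty is concentrated: the atom $\Loop{(\maxpi\pi)\cdot{\leftp}}$. Your plan is to rerun the guess-and-check colouring of Lemma~\ref{lem:det-loop}, but that construction depends on two features that are both lost here. First, its soundness (Claim~\ref{cl:beta}) works by chasing the \emph{unique} $\pi$-successor chain $f_0,f_1,\dots$ of a wrongly coloured event and using the alternation of $\ycolone/\ycoltwo$ together with monotonicity to manufacture an infinite $\ltproc$-increasing sequence; for the non-functional relation $(\maxpi\pi)\cdot{\leftp}$ there is no such chain to chase, and a block of events all guessed ``yes'' can be locally consistent (each pointing into the block) without any of them actually satisfying the loop. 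Second, and more fundamentally, your proposed fix --- ``phrase the correctness property of the colouring in terms of the canonical witness $\maxpie{\pi}{e}$ only'' --- is circular: the property of the canonical witness you would need to certify is precisely $e\ltproc\maxpie{\pi}{e}$, i.e.\ the very $\Loopname$ formula being eliminated, and the colour-consistency condition of Lemma~\ref{lem:det-loop} (``some $\pi$-image of $e$ carries the same colour as $e$'') is inherently a test for \emph{equality} of $e$ with its witness, not for the \emph{direction} (left or right) in which the witness lies. No local condition over the coloured alphabet is exhibited whose solutions are forced to coincide with the truth value of $e\ltproc\maxpie{\pi}{e}$.

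The paper resolves this case by an entirely different mechanism, which is the real content of the theorem: writing $E_p^{\pi}$ for the events of process $p$ satisfying $\existsp{\pi}{p}$, it proves that the truth value of $\psi=\Loop{(\maxpi\pi)\cdot{\leftp}}$ can be \emph{propagated deterministically from left to right along $E_p^{\pi}$}. Claim~\ref{claim:minimal} settles the minimal event of $E_p^{\pi}$ via the auxiliary formula $\Loop{\minpi{({\rightp}\cdot\pi^{-1})}}$, and Claim~\ref{claim:consecutive} shows how to decide $\psi$ at an event $f$ from its value at the preceding event $e$ of $E_p^{\pi}$, using only the auxiliary formulas $\existsptrue{\pi}$, $\Loop{\maxpi\pi}$, $\Loop{\minpi{({\rightp}\cdot\pi^{-1})}}$, and $\Loop{\maxpi{((\maxpi{\pi})\cdot\rightg{\neg\existsptrue{\pi}})}}$ --- all of which are loop-free or loops on functional $\maxpi{\cdot}$ paths, hence already covered by Lemmas~\ref{lem:trad-loop-free} and~\ref{lem:det-loop}. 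A final deterministic per-process scan combines these bits. Both claims rest on the monotonicity of $\maxpi\pi$ (Lemma~\ref{lem:monotone}). Without this propagation argument, or a genuine substitute for it, your proof of the key case does not go through.
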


\begin{proof}
  By Lemma~\ref{lem:loops}, we can assume that all loop subformulas in $\varphi$
  are of the form $\Loop {(\maxpi{\pi})\cdot{\leftp}}$ or $\Loop{\maxpi{\pi}}$
  (recall that $\minpi \pi \equiv \maxpi {(\minpi \pi)}$).  We prove
  Theorem~\ref{thm:PDLp-to-CFM} by induction on the number of loop subformulas
  in $\varphi$.  The base case is stated in Lemma~\ref{lem:trad-loop-free}.

  Let $\psi = \Loop{\pi'}$ be a subformula of $\varphi$ such that $\pi'$
  contains no loop subformulas and $\Comp{\pi'} \subseteq \id$.
  Let us show that there exists $\Aphi \psi$ such that
  $\Lt {\Aphi \psi} = \{(M,\Mphi M \psi) \mid M \in \MSCs \Procs \Sigma\}$.
  If $\pi'=\maxpi{\pi}$, then we apply Lemma~\ref{lem:det-loop}.
  Otherwise, $\pi'=(\maxpi{\pi})\cdot{\leftp}$ for some $\sfPDLall[\emptyset]$ 
  path formula $\pi$. So we assume from now on that 
  $\psi=\Loop{(\maxpi{\pi})\cdot{\leftp}}$.
  
  We start with some easy remarks.  Let $p\in\Procs$ be some process
  and $e\in E_p$.  A necessary condition for
  $M,e\models\psi$ is that
  $M,e\models\existsptrue{\pi}\wedge\neg\Loop{\maxpi{\pi}}$.  
  Also, it is easy to see that
  $M,e\models\Loop{\minpi{({\rightp}\cdot\pi^{-1})}}$ is a sufficient condition
  for $M,e\models\psi$.
  
  We let $E_p^{\pi}$
  be the set of events $e\in E_p$ satisfying $\existsp{\pi}{p}$. For all $e\in
  E_p^{\pi}$, we let $e'=\semM{\maxpi{\pi}}(e)\in E_p$. The transducer $\A_\psi$ will 
  establish, for each $e\in E_p^{\pi}$, whether $e'\ltproc e$, $e'=e$, or $e \ltproc e'$, and it 
  will output $1$ if $e \ltproc e'$, and $0$ otherwise. The case $e'=e$ means 
  $M,e\models\Loop{\maxpi{\pi}}$ and can be checked with the help of 
  Lemma~\ref{lem:det-loop}. So the difficulty is to distinguish between $e' \ltproc e$ 
  and $e \ltproc e'$ when $M,e\models\existsptrue{\pi}\wedge\neg\Loop{\maxpi{\pi}}$.

    The following two claims rely on Lemma~\ref{lem:monotone}.
    Recall that $\psi=\Loop{(\maxpi{\pi})\cdot{\leftp}}$.
  
  \begin{claim}\label{claim:minimal}
    Let $f$ be the minimal event in $E_p^{\pi}$ (assuming this set is 
    nonempty). Then, $M,f\models\psi$ iff 
    $M,f\models\Loop{\minpi{({\rightp}\cdot\pi^{-1})}}$.
  \end{claim}

  \begin{proof}[Proof of Claim~\ref{claim:minimal}]
    The right to left implication holds without any hypothesis. 
    Conversely, assume $f\rightp f'=\semM{\maxpi{\pi}}(f)$.
    Then, $M,f,f \models {\rightp}\cdot\pi^{-1}$,
    and $g = \semM{\minpi{({\rightp}\cdot\pi^{-1})}}(f) \leproc f$.
    This is illustrated in Figure~\ref{fig:minimal}.
    Moreover, $M,g \models \existsptrue{\pi}$ and by minimality of $f$ in
    $E_p^{\pi}$, we conclude that $g = f$.
    \qedhere(Claim~\ref{claim:minimal})
  \end{proof}

\begin{figure}[h]
   \centering
    \begin{tikzpicture}[auto,>=stealth, scale=0.8,
      every loop/.style={in=50,out=130,min distance=1cm}, bend angle=40,
      inner sep=2pt,font=\footnotesize]
    
     \node (f) at (1,0) {$f$};
     \node (fp) at (5,0) {$f'$};
     \node (g) at (0,0) {$g$};
     \node (eq) at (0.5,0) {$=$};

      \path
      (1.2,-0.1) edge[purple,thick,font=\footnotesize] (3,-0.1)
      (3,-0.1) edge[bend left,purple,thick,->,font=\footnotesize] node {$\minpi{({\rightp}\cdot\pi^{-1})}$} (g);

      \path
      (f) edge[->] node {$+$} (fp);

      \path[purple,thick,->,font=\footnotesize]
      (f) edge[bend left] node {$\max \pi$} (fp);
    \end{tikzpicture}
    \caption{Proof of Claim~\ref{claim:minimal}\label{fig:minimal}}
  \end{figure}

  \begin{claim}\label{claim:consecutive}
    Let $e,f$ be consecutive events in $E_p^{\pi}$, i.e., $e,f\in E_p^{\pi}$ 
    and $M,e,f\models{\rightg{\neg\existsptrue{\pi}}}$.
    \begin{enumerate}[nosep]
      \item
      If $M,e\not\models\psi$, then $[M,f\models\psi$ iff 
      $M,f\models\Loop{\minpi{({\rightp}\cdot\pi^{-1})}}]$.

      \item
      If $M,e\models\psi$, then $[M,f\not\models\psi$ iff 
      $M,f\models\Loop{\maxpi{\pi}}\vee
      \Loop{\maxpi{((\maxpi{\pi})\cdot{\rightg{\neg\existsptrue{\pi}})}}}]$.
    
    \end{enumerate}
  \end{claim}
  
  \begin{proof}[Proof of Claim~\ref{claim:consecutive}]
    We show the two statements.  
  \begin{enumerate}
    \item Assume that $M,e\not\models\psi$.
    Again, the right to left implication holds without any hypothesis. 
    Conversely, assume that  $M,e\not\models\psi$ and  
    $M,f\models\psi$, i.e., $e'=\semM{\maxpi{\pi}}(e)\leproc e$ and $f \ltproc 
    f'=\semM{\maxpi{\pi}}(f)$.
    We have $M,f\models\existsptrue{{\rightp}\cdot\pi^{-1}}$, and
    $g=\semM{\minpi{({\rightp}\cdot\pi^{-1})}}(f) \leproc f$.
    Notice that $g\in E_p^{\pi}$, and $f \ltproc g'=\semM {\maxpi \pi} (g)$.
    If $g \ltproc f$, we get $g\leproc e$, and
    using Lemma~\ref{lem:monotone} (monotonicity), we obtain
    $g' \leproc e' \leproc e \ltproc f$,
    a contradiction. The situation is illustrated in Figure~\ref{fig:consecutive1}.
    Therefore, $g=f$ and $M,f\models\Loop{\minpi{({\rightp}\cdot\pi^{-1})}}$.
    
   \begin{figure}[h]
   \centering
    \begin{tikzpicture}[auto,>=stealth, scale=0.8,
      every loop/.style={in=50,out=130,min distance=1cm}, bend angle=40,
      inner sep=2pt,font=\footnotesize]
    
     \node (f) at (1,0) {$f$};
     \node (fp) at (6.5,0) {$f'$};
     \node (e) at (-2,0) {$e$};
     \node (ep) at (-6,0) {$e'$};

     \node (notpsi) at (-2.1,-0.4) {$\neg\psi$};
     \node (psi) at (1,-0.4) {$\psi$};
     \node (g) at (-7,-0.1) {$g$};
     \node (gp) at (5,-0.3) {$g'$};

      \path
      (f) edge[->] node {$+$} (fp);

      \path
      (e) edge[->] node {$\neg\existsptrue{\pi}$} (f);

      \path
      (ep) edge[->] node {$\ast$} (e);

      \path[purple,thick,->,font=\footnotesize]
      (f) edge[bend left] node {$\max \pi$} (fp);
      
      \path[purple,thick,->,font=\footnotesize]
      (e) edge[bend right=25] node[above] {$\max \pi$} (ep);

      \path[purple,thick,->,font=\footnotesize]
      (g) edge[bend left=25] node {$\max \pi$} (5,0);

      \path
      (1.2,-0.1) edge[purple,thick,font=\footnotesize] (3,-0.1)
      (3,-0.1) edge[bend left=22,purple,thick,->,font=\footnotesize] node {$\minpi{({\rightp}\cdot\pi^{-1})}$} (g);

    \end{tikzpicture}
    \caption{Proof of Claim~\ref{claim:consecutive}(1.)\label{fig:consecutive1}}
  \end{figure}

    \item  Assume that $M,e\models\psi$.
    The right to left implication holds easily: the first disjunct implies that 
    $f'=f$ and the second disjunct implies $f'<f$.  Conversely, assume that
    $M,e\models\psi$, $M,f\not\models\psi$ and
    $M,f\not\models\Loop{\maxpi{\pi}}$, i.e., $e\ltproc e'$ and $f' \ltproc f$.
    From Lemma~\ref{lem:monotone} we get $e'\leproc f'$ and since 
    $e,f$ are consecutive in $E_p^{\pi}$ we obtain 
    $M,f',f\models{\rightg{\neg\existsptrue{\pi}}}$. Therefore,
    $M,f\models\Loop{(\maxpi{\pi})\cdot\rightg{\neg\existsptrue{\pi}}}
    \equiv\Loop{\maxpi{((\maxpi{\pi})\cdot\rightg{\neg\existsptrue{\pi}})}}$.

   \begin{figure}[h]
   \centering
    \begin{tikzpicture}[auto,>=stealth, scale=0.8,
      every loop/.style={in=50,out=130,min distance=1cm}, bend angle=40,
      inner sep=2pt,font=\footnotesize]
    
     \node (e) at (1,0.3) {$e$};
     \node (ep) at (4,0.3) {$e'$};
     \node (fp) at (7,0.3) {$f'$};
     \node (f) at (10,0.3) {$f$};

     \node (a) at (1.1,-0.02) {};
     \node (b) at (9.9,-0.02) {};

     \node (notpsi) at (0.8,0.6) {$\psi$};
     \node (psi) at (10.2,0.6) {$\neg\psi$};
     
      \path (a) edge[->] node[below] {$\neg\existsptrue{\pi}$} (b);
      \path (ep) edge[->] node {$\ast$} (fp);
      \path (e) edge[->] node {$+$} (ep);
      \path (fp) edge[->] node {$\neg\existsptrue{\pi}$} (f);

      \path[purple,thick,->,font=\footnotesize]
      (e) edge[bend left] node[above] {$\max \pi$} (ep);

      \path[purple,thick,->,font=\footnotesize]
      (f) edge[bend right] node[above] {$\max \pi$} (fp);
      
    \end{tikzpicture}
    \caption{Proof of Claim~\ref{claim:consecutive}(2.)\label{fig:consecutive2}}
    \end{figure}
    \end{enumerate}
   This concludes the proof of the claim. \qedhere(Claim~\ref{claim:consecutive}).
  \end{proof}

  To conclude the proof of Theorem~\ref{thm:PDLp-to-CFM}, consider the formulas
  $\varphi_1=\existsptrue{\pi}$, $\varphi_2=\Loop{\maxpi{\pi}}$, 
  $\varphi_3=\Loop{\minpi{({\rightp}\cdot\pi^{-1})}}$, and
  $\varphi_4=\Loop{\maxpi{((\maxpi{\pi})\cdot\rightg{\neg\existsptrue{\pi}})}}$.
  By Lemmas~\ref{lem:trad-loop-free} and~\ref{lem:det-loop}, we already
  have transducers $\A_{\varphi_i}$ for $i\in\{1,2,3,4\}$.
  We let $\A_\psi = \A \circ (\A_{\varphi_1} \times\A_{\varphi_2}
  \times\A_{\varphi_3}\times\A_{\varphi_4})$,
  where, at an event $f$ labeled $(b_1,b_2,b_3,b_4)$, the transducer $\A$ 
  outputs
    $1$ if $b_3=1$ or if $(b_1,b_2,b_3,b_4)=(1,0,0,0)$ and the output was
    $1$ at the last event $e$ on the same process satisfying $\varphi_1$ (to do 
    so, each process keeps in its state the output at the last event where 
    $b_1$ was $1$),    
    and $0$ otherwise.

  Consider the formula $\varphi'$ over $\Sigma \times \{0,1\}$ obtained
  from $\varphi$ by replacing $\psi$ by $\bigvee_{a \in \Sigma} (a,1)$,
  and all event formulas $a$, with $a \in \Sigma$, by $(a,0) \lor (a,1)$.
  It contains fewer $\Loopop$ operators than $\varphi$, so by induction
  hypothesis, we have a transducer $\Aphi {\varphi'}$ for $\varphi'$.
  We then let $\Aphi \varphi = \Aphi {\varphi'} \circ (\A_{\mathit{Id}} \times \Aphi \psi)$, 
  where $\A_{\mathit{Id}}$ is the transducer for the identity relation.
\end{proof}

 \begin{proof}[Proof of Proposition~\ref{prop:fo-cfm}]
   By Theorem~\ref{thm:FO-to-PDLm-main}, every $\FOle$ formula $\Phi(x)$ with a
   single free variable is equivalent to some $\PDLm$ state
   formula, for which we obtain a transducer $\A_\Phi$ using
   Theorem~\ref{thm:PDLp-to-CFM}. It is easy to build from $\A_\Phi$ CFMs
   for the sentences $\forall x. \Phi(x)$ and $\exists x. \Phi(x)$.
   Closure of $\LangCFM$ under union and intersection takes care of disjunction and conjunction.
 \end{proof}


\section{Applications to Existentially Bounded MSCs}\label{sec:applications}

\newcommand\linle{\preceq}
\newcommand\linlt{\prec}
\newcommand\clinle{\preceq_B}
\newcommand\clinlt{\prec_B}
\newcommand\wlin[1]{M_{#1}}
\newcommand\Slin{\Sigma_{\mathit{lin}}}
\newcommand\type{\mathsf{type}}
\newcommand\LinB{\mathit{Lin}^B}
\newcommand\LM{L}
\newcommand\rev{\mathit{rev}}
\newcommand\revb[1]{\mathit{rev}_{#1}}
\newcommand\Llin{L_{\mathit{lin}}}
\newcommand\Philin{\Phi_{\mathit{lin}}}
\newcommand\uparrowB{\uparrow_B}
\newcommand\Pdiff[2]{P_{{\uparrowB {#1}} - {\uparrowB {#2}}}}
\newcommand\myPdiff[2]{\ploc(#1 - #2)}
\newcommand\myPmin[2]{\textup{minloc}(#1,#2)}

Though the translation of $\EMSO$/$\FO$ formulas into CFMs is interesting on its own, it allows us to obtain some difficult results for bounded CFMs as corollaries.

\subsection{Existentially bounded MSCs}

The first logical characterizations of communicating
finite-state machines were obtained for classes of \emph{bounded} MSCs.
Intuitively, this corresponds to restricting the channel capacity.  Bounded MSCs
are defined in terms of linearizations.  A \emph{linearization} of a given MSC
$M=(E,\prel,\mrel,\ploc,\lambda)$ is a total order ${\preceq} \subseteq E \times
E$ such that ${\le} \subseteq {\preceq}$.  For $B \in \N$, we call
$\preceq$ $B$-bounded if, for all $g \in E$ and $(p,q) \in \Ch$,
$|\{(e,f) \in {\mrel} \cap (E_p \times E_q) \mid e \preceq g \prec f\}| \le B$. In other words, the number of pending messages in $(p,q)$ never exceeds $B$.
There are (at least) two natural definitions of bounded MSCs: We call $M$
$\exists B$-bounded if $M$ has \emph{some} $B$-bounded linearization.
Accordingly, it is $\forall B$-bounded if \emph{all} its linearizations are
$B$-bounded.

\begin{example}
  The MSC from Figure~\ref{fig:msc} is $\exists 1$-bounded and $\forall
  4$-bounded.  These bounds are tight: the MSC is not $\forall 3$-bounded,
  because the four send events for, say, channel $(p_1,p_3)$ can be scheduled
  before the first reception $g_0$.

  As another example, consider the set of MSCs over two processes, $p$ and $q$,
  that consist of an arbitrary number of messages from $p$ to $q$ (and only
  messages from $p$ to $q$).  This language is $\exists 1$-bounded (every
  message may be received right after it was sent), but it is not $\forall
  B$-bounded, no matter what $B$ is.
\end{example}

In the following, we will consider only $\exists B$-bounded MSCs.
The set of $\exists B$-bounded MSCs is denoted by $\ebMSCs{\Procs}{\Sigma}{B}$.

Below, we show the following results.  First, for a given channel
bound $B$, the set $\ebMSCs{\Procs}{\Sigma}{B}$ is $\FO[\prel,\mrel,\le]$-definable (essentially due
to \cite{LohreyMuscholl04}).  By Theorem~\ref{thm:main}, we obtain
\cite[Proposition~5.14]{GKM06} stating that this set is recognized by some CFM.
Second, we obtain \cite[Proposition~5.3]{GKM06}, a Kleene theorem for
existentially bounded MSCs, as a corollary of Theorem~\ref{thm:main} in
combination with a linearization normal form from \cite{ThiagarajanW02}.

\medskip

Let $M=(E,\prel,\mrel,\ploc,\lambda)$ be some MSC, and
$e_1 \linlt e_2 \cdots \linlt e_n$ a linearization of $M$.
Given $e \in E$, we write $\type(e) = p$ if $e$ is an internal event on
process~$p$, $\type(e) = p!q$ if $e$ is a write on channel $(p,q)$, and
$\type(e) = q?p$ if $e$ is a read from channel $(p,q)$.
We associate with the linearization $\linle$ a word $\wlin \linle$
over the
alphabet $\Slin = \Sigma \times (P \cup \{q?p, p!q \mid (p,q) \in \Ch\})$.
More precisely, we let $\wlin \linle = a_1 \ldots a_n$ where
$a_i = (\lambda \times \type)(e_i)$.
Note that $M$ can be retrieved from $\wlin{\linle}$.
We let $\LinB(M) = \{\wlin \linle \mid \text{$\linle$ is a $B$-bounded linearization of $M$}\}$.

\begin{fact}[\!\!{\cite[Theorem~4.1]{GKM06}}]\label{EB-regular}
  Let $B \in \N$ and $L$ be a set of $\exists B$-bounded MSCs.
  The following are equivalent:
  \begin{enumerate}
  \item $L = \msclang(\A)$ for some CFM $\A$.
  \item $L = \msclang(\Phi)$ for some MSO formula $\Phi$.
  \item $\LinB(L)$ is a regular language.
  \end{enumerate}
\end{fact}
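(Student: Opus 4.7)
The plan is to establish $(1)\Rightarrow(2)\Rightarrow(3)\Rightarrow(1)$. The implication $(1)\Rightarrow(2)$ is immediate and already sketched in Section~\ref{sec:prel} just after Theorem~\ref{thm:main}: a CFM can be simulated by an $\EMSO$ formula that guesses the assignment of transitions to events (one second-order variable per transition) and checks locally that the guess is an accepting run. Since $\EMSO\subseteq\MSO$, this suffices.

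For $(2)\Rightarrow(3)$, I would use the standard fact that any linearization $\linle$ of an MSC $M$, viewed as a word $\wlin\linle\in\Slin^*$, carries enough information to reconstruct $M$, and that the set of words in $\Slin^*$ that encode a $B$-bounded linearization of some MSC is FO-definable (FIFO, matching of sends and receives, bounded channel occupancy). The plan is to translate the given MSO sentence $\Phi$ over MSCs into an MSO sentence $\Philin$ over $\Slin^*$ by reinterpreting event quantifiers as position quantifiers and rewriting the atomic predicates $p(x)$, $a(x)$, $x\prel y$, $x\mrel y$, $x\le y$ using MSO formulas over the linearization word (the partial order $\le$ can even be expressed in MSO on the word by closing $\prel\cup\mrel$ under transitive closure). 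Intersecting with the language of $B$-bounded linearization encodings yields an MSO-definable subset of $\Slin^*$, which is regular by B\"uchi's theorem, and by construction equals $\LinB(L)$.

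For $(3)\Rightarrow(1)$, this is where Theorem~\ref{thm:main} does the work. Starting from an automaton $\mathcal{B}$ for $R=\LinB(L)$, the plan is to construct an $\EMSO[\prel,\mrel,\le]$ sentence $\Phi$ over MSCs expressing: ``there exists a $B$-bounded linearization $\preceq$ of $M$ and an accepting run of $\mathcal{B}$ on $\wlin\preceq$''. The run is encoded by one second-order variable per state of $\mathcal{B}$, and both local validity of the transitions and the acceptance condition are first-order. For the linearization, rather than quantifying a binary relation directly, I would encode $\preceq$ by finitely many second-order set variables that pin down, for each ordered pair $(p,q)$ of distinct processes, how events on $p$ and on $q$ interleave; because $\preceq$ must be $B$-bounded, this interleaving is determined by bounded-capacity ``scheduling'' data per event. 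Checking that the guessed data encodes a $B$-bounded total order extending $\le$ is first-order, and the hypothesis $L\subseteq\ebMSCs{\Procs}{\Sigma}{B}$ (itself FO-definable, essentially by \cite{LohreyMuscholl04}) guarantees existence of such a $\preceq$ whenever $M\in L$. The resulting $\EMSO$ sentence defines $L$, and Theorem~\ref{thm:main} converts it into a CFM.

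The main obstacle is the $\EMSO$ encoding of a $B$-bounded linearization in $(3)\Rightarrow(1)$, since a linearization is inherently a binary relation, not a single set. The plan to overcome this is to exploit the $B$-bound: once channels have bounded capacity, the scheduling between any two processes is locally determined by bounded information per event, so the whole encoding stays within existential second-order quantification over sets of events, which is exactly what Theorem~\ref{thm:main} needs as input.
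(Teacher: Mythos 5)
Your implications $(1)\Rightarrow(2)$ and $(2)\Rightarrow(3)$ are fine and standard (the paper treats them as such and concentrates on $(3)\Rightarrow(1)$, which is also where you correctly place the weight on Theorem~\ref{thm:main}). The gap is in your treatment of $(3)\Rightarrow(1)$: the claim that, because $\preceq$ is $B$-bounded, ``the interleaving between any two processes is determined by bounded-capacity scheduling data per event'' is false. The bound $B$ only constrains the relative placement of sends and receives \emph{on the same channel}; it says nothing about how events of two processes that exchange no (or few) messages are shuffled. For instance, if $p$ and $q$ each perform $n$ internal events and never communicate, then \emph{every} interleaving of $E_p$ and $E_q$ is $B$-bounded, and an arbitrary shuffle of two total orders cannot be encoded by finitely many monadic predicates with FO-checkable consistency (monadic logic over two disjoint linear orders cannot even pin down a bijection between them). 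So the existential second-order quantification over ``a $B$-bounded linearization'' as you set it up does not go through, and with it the run-encoding of $\mathcal{B}$, which needs the $\preceq$-successor relation to be definable.

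The repair --- and the route the paper takes --- is to observe that you never need an \emph{arbitrary} linearization: since $R=\LinB(L)$ contains \emph{all} $B$-bounded linearizations of each $M\in L$ and none of any $M\notin L$, it suffices to evaluate $\mathcal{B}$ (equivalently, an $\EMSO$ word sentence $\Philin$) on one \emph{canonical, definable} $B$-bounded linearization. The paper uses the order $\clinle$ adapted from Thiagarajan--Walukiewicz: $e\clinlt f$ iff $e<_B f$, or $e\parallel_B f$ and $\min(\newPdiff{e}{f})\sqsubset\min(\newPdiff{f}{e})$, where ${\le_B}=({\le}\cup\revb{B})^\ast$ and $\revb{B}$ is the FO-definable ``reverse'' relation forcing $B$-boundedness. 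Since $\clinlt$ is $\FO[\prel,\mrel,\le]$-definable, one relativizes $\Philin$ along it to get an $\EMSO[\prel,\mrel,\le]$ sentence $\Phi$ with $M\models\Phi$ iff $\wlin{\clinle}\models\Philin$, conjoins with the FO sentence $\Phi_{\exists B}$ defining $\ebMSCs{\Procs}{\Sigma}{B}$, and applies Theorem~\ref{thm:main}. So replace ``guess the linearization in EMSO'' by ``define a canonical one in FO''; the rest of your outline then works.
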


The proof given in \cite{GKM06} relies on the theory of Mazurkiewicz traces.
Another major part of the proof is the construction of a CFM recognizing the
set $\ebMSCs{\Procs}{\Sigma}{B}$ of $\exists B$-bounded MSCs
\cite[Proposition~5.14]{GKM06}.
We show that this CFM can in fact be obtained as a simple application of
Theorem~\ref{thm:main}.
Moreover, we give an alternative proof of $(3) \implies (1)$
(Section~5 in \cite{GKM06}).

\subsection{A CFM for Existentially Bounded MSCs}

The set $\ebMSCs{\Procs}{\Sigma}{B}$ of $\exists B$-bounded MSCs is in fact
$\FO[\mrel,\prel,\le]$-definable, and thus, we can apply Theorem~\ref{thm:main}
to construct a CFM $\A_{\exists B}$ recognizing $\ebMSCs{\Procs}{\Sigma}{B}$.
We describe below a formula defining $\ebMSCs{\Procs}{\Sigma}{B}$.

Let us first recall a characterization of $\exists B$-bounded MSCs.
Let $M=(E,\prel,\mrel,\ploc,\lambda)$ be an MSC.
We define a relation $\revb{B} \subseteq E \times E$ which consists of the set of
pairs $(f,g)$ such that $f$ is a receive event from some channel $(p,q)$ with
corresponding send event $e \mrel f$, and $g$ is the $B$-th send on
channel $(p,q)$ after event $e$.
The relation $\revb{B}$ is illustrated in Figure~\ref{fig:revb}
(represented by the dashed edges) for $B=1$
and an $\exists 1$-bounded MSC.
It can be defined by the $\sfPDLall[\cup]$ path formula
$$
\mathsf{rev}_B = \bigcup_{p\neq q} 
{\mrel_{p,q}^{-1}}\cdot \Big(\rightg{\neg\existsptrue{\mrel_{p,q}}}\cdot
\test{\existsptrue{\mrel_{p,q}}} \Big)^B\,.
$$
For completeness, let us also give a corresponding $\FO[\prel,\mrel,\le]$ formula:
\begin{multline*}
  \revb{B}(x,y) := \exists z_0, z_1, \ldots, z_B.\
  z_0 \mrel x \land z_B = y \land
  \bigwedge_{1 \le i \le B} \exists x_i.\ z_i \mrel x_i \land x \leproc x_i
   \\
  {}\land \bigwedge_{0 \le i < B-1} z_i \ltproc z_{i+1} \land
  \lnot (\exists z',x'.\ z_i \ltproc z' \ltproc z_{i+1} \land z' \mrel x' \land
  x \leproc x') \, .
\end{multline*}

\begin{fact}[\!\!\cite{LohreyMuscholl04}]
  $M$ is $\exists B$-bounded iff the relation $({<} \cup \revb{B})$ is acyclic.
\end{fact}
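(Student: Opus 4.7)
The plan is to prove both implications by a direct analysis of the relation $\revb{B}$, whose intuitive content is the following: if $e\mrel f$ is the $k$-th message on channel $(p,q)$ and $g$ is the $(k+B)$-th send on the same channel, then in any $B$-bounded linearization the receive $f$ must be scheduled strictly before $g$, for otherwise the channel would contain the $B+1$ unreceived messages $e=e_k,e_{k+1},\ldots,e_{k+B}=g$ at the moment $g$ fires. Once this is formalized, the equivalence with acyclicity of $<\cup\revb{B}$ is essentially bookkeeping. I expect the main obstacle to be keeping the send/receive numbering consistent and appealing to FIFO at the right places; no deep machinery is required.

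For the forward direction, suppose $M$ admits a $B$-bounded linearization $\preceq$. Then $<\subseteq\prec$ by definition of a linearization, so it suffices to show $\revb{B}\subseteq\prec$. Let $(f,g)\in\revb{B}$ with notation as above; note that $f\neq g$, since they lie on different processes. Assume for contradiction $g\prec f$. Let $e=e_k,e_{k+1},\ldots,e_{k+B}=g$ be the consecutive sends on $(p,q)$ starting from $e$, and $f_k=f,\ldots,f_{k+B}$ their receives. For every $i\in\{k,\ldots,k+B\}$ we have $e_i\leproc g$, hence $e_i\preceq g$, and FIFO gives $f_k\prec f_{k+1}\prec\cdots\prec f_{k+B}$, so $g\prec f_i$ for all such $i$. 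This exhibits $B+1$ pairs $(e_i,f_i)\in{\mrel}\cap(E_p\times E_q)$ satisfying $e_i\preceq g\prec f_i$, contradicting $B$-boundedness of $\preceq$.

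For the converse, assume $<\cup\revb{B}$ is acyclic. Its reflexive-transitive closure is then a partial order extending $\le$, and I take any linear extension $\preceq$; this is a linearization of $M$. I claim $\preceq$ is $B$-bounded. Otherwise, there exist $g\in E$ and $(p,q)\in\Ch$ together with at least $B+1$ pairs $(e',f')\in{\mrel}\cap(E_p\times E_q)$ such that $e'\preceq g\prec f'$. Enumerate $B+1$ of them in process order on $p$: $e_1'\ltproc\cdots\ltproc e_{B+1}'$, with corresponding receives satisfying $f_1'\ltproc\cdots\ltproc f_{B+1}'$ by FIFO. Let $e^*$ be the $B$-th send on $(p,q)$ after $e_1'$ along process $p$; since $e_2',\ldots,e_{B+1}'$ already provide $B$ such sends, $e^*\leproc e_{B+1}'$ and therefore $e^*\preceq e_{B+1}'\preceq g$. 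By construction $(f_1',e^*)\in\revb{B}\subseteq\prec$, so $f_1'\prec e^*\preceq g$, which contradicts $g\prec f_1'$. Hence $\preceq$ witnesses $\exists B$-boundedness of $M$, completing the proof.
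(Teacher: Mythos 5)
Your proof is correct. Note that the paper does not prove this statement at all---it is imported as a known fact with a citation to Lohrey and Muscholl---so you have supplied a self-contained argument where the paper gives none; the argument you give is the standard one underlying that citation. Both directions check out: in the forward direction, any $B$-bounded linearization $\preceq$ must contain $\revb{B}$ (your counting of the $B+1$ pending pairs $(e_i,f_i)$ with $e_i\preceq g\prec f_i$ is exactly the right contradiction, and FIFO guarantees that all the receives $f_{k+1},\ldots,f_{k+B}$ exist and come after $f_k$), so ${<}\cup\revb{B}$ embeds into a strict total order and is acyclic. In the converse direction, any linear extension of $({<}\cup\revb{B})^{\ast}$ is a linearization, and your observation that the $B$-th send $e^{\ast}$ on $(p,q)$ after $e_1'$ satisfies $e^{\ast}\leproc e_{B+1}'$ (hence $e^{\ast}\preceq g$) while $(f_1',e^{\ast})\in\revb{B}$ forces $f_1'\prec g$ is precisely the contradiction needed. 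The only points worth being explicit about, which you handle correctly, are that in this MSC model every send is matched (so the intermediate receives exist) and that $f$ and $g$ lie on distinct processes (so $f\neq g$ and trichotomy applies).
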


Note that, if $({<} \cup \revb{B})$ contains a cycle, then it contains one of
size at most $2|P|$. So $M$ is $\exists B$-bounded iff it satisfies the
$\sfPDLall[\Loopname,\cup]$ formula 
$\pdlsentence_{\exists B} = \neg\E\Loop{\mathsf{lt}_B}$ where
$$
  \mathsf{lt}_B = \bigcup_{2\leq n\leq |P|}
  \big( (\mrel \cup \mathsf{rev}_B)\cdot {\rightp} \big)^{n}
  \qquad\qquad \mrel = \bigcup_{p\neq q}\mrel_{p,q}\,.
$$
Again, let us determine a corresponding $\FO[\prel,\mrel,\le]$ formula:
$$
  \Phi_{\exists B} =
  \bigwedge_{2 \le n \le 2|P|} \lnot \Big(
  \exists x_0, \ldots, x_{n}.\ x_0 = x_{n} \wedge
  \bigwedge_{0 \le i < n}
    x_i < x_{i+1} \lor \revb{B}(x_i,x_{i+1}) \Big) \, .
$$

\newcommand{\convexpath}[2]{
  [   
  create hullcoords/.code={
    \global\edef\namelist{#1}
    \foreach [count=\counter] \nodename in \namelist {
      \global\edef\numberofnodes{\counter}
      \coordinate (hullcoord\counter) at (\nodename);
    }
    \coordinate (hullcoord0) at (hullcoord\numberofnodes);
    \pgfmathtruncatemacro\lastnumber{\numberofnodes+1}
    \coordinate (hullcoord\lastnumber) at (hullcoord1);
  },
  create hullcoords
  ]
  ($(hullcoord1)!#2!-90:(hullcoord0)$)
  \foreach [
  evaluate=\currentnode as \previousnode using \currentnode-1,
  evaluate=\currentnode as \nextnode using \currentnode+1
  ] \currentnode in {1,...,\numberofnodes} {
    let \p1 = ($(hullcoord\currentnode) - (hullcoord\previousnode)$),
    \n1 = {atan2(\y1,\x1) + 90},
    \p2 = ($(hullcoord\nextnode) - (hullcoord\currentnode)$),
    \n2 = {atan2(\y2,\x2) + 90},
    \n{delta} = {Mod(\n2-\n1,360) - 360}
    in 
    {arc [start angle=\n1, delta angle=\n{delta}, radius=#2]}
    -- ($(hullcoord\nextnode)!#2!-90:(hullcoord\currentnode)$) 
  }
}

\definecolor{mygreen}{RGB}{0.0,180,0.0}

\begin{figure}[h]
\centering
    \begin{tikzpicture}[semithick,>=stealth]

     \node[acirc] (p1) at (0,4) {};
     \node[acirc] (p2) at (2,4) {};
     \node[acirc] (p3) at (4,4) {};
     \node[acirc] (p4) at (6,4) {};
     \draw[->] (p1) -- (p2);\draw[->] (p2) -- (p3);\draw[->] (p3) -- (p4);

     \node[acirc,label=left:$e$] (q1) at (0,3) {};
     \node[acirc] (q2) at (2,3) {};
     \node[acirc] (q3) at (3,3) {};
     \node[acirc] (q4) at (4,3) {};
     \node[acirc] (q5) at (5,3) {};
     \node[acirc] (q6) at (6,3) {};
     \draw[->] (q1) -- (q2);\draw[->] (q2) -- (q3);\draw[->] (q3) -- (q4);\draw[->] (q4) -- (q5);\draw[->] (q5) -- (q6);

     \node[acirc,label=left:$f$] (r1) at (0,2) {};
     \node[acirc] (r2) at (2,2) {};
     \node[acirc] (r3) at (3,2) {};
     \node[acirc] (r4) at (4,2) {};
     \node[acirc] (r5) at (5,2) {};
     \draw[->] (r1) -- (r2);\draw[->] (r2) -- (r3);\draw[->] (r3) -- (r4);\draw[->] (r4) -- (r5);
     
     \node[acirc] (s1) at (0,1) {};
     \node[acirc] (s2) at (2,1) {};
     \node[acirc] (s3) at (4,1) {};
     \draw[->] (s1) -- (s2);\draw[->] (s2) -- (s3);

     \draw[->] (p1) -- (q1);\draw[->] (p2) -- (q2);\draw[->] (p3) -- (q4);\draw[->] (p4) -- (q6);
     \draw[->] (s1) -- (r1);\draw[->] (s2) -- (r2);\draw[->] (s3) -- (r4);
     \draw[->] (q3) -- (r3);\draw[->] (q5) -- (r5);
    
    \draw[->,dotted] (q1) -- (p2);\draw[->,dotted] (q2) -- (p3);\draw[->,dotted] (q4) -- (p4);
    \draw[->,dotted] (r1) -- (s2);\draw[->,dotted] (r2) -- (s3);
    \draw[->,dotted] (r3) -- (q5);

    \begin{pgfonlayer}{background}
    {\fill[mygreen!30] \convexpath{q1,p2,p4,q6,r5,r3,q3}{3mm}; }
    \end{pgfonlayer}
 
    \begin{pgfonlayer}{background}
    {\fill[purple!30,opacity=0.6] \convexpath{r1,r3,q5,q6,r5,s3,s2}{3mm}; }
    \end{pgfonlayer}
    
      \node at (-1,4) {$p_1$};
      \node at (-1,3) {$p_2$};
      \node at (-1,2) {$p_3$};
      \node at (-1,1) {$p_4$};

      \node at (6.6,3.6) {${\uparrowB}{e}$};
      \node at (5.1,1.2) {${\uparrowB}{f}$};
                        
    \end{tikzpicture}
    \caption{The relation $\revb{B}$ for $B = 1$, and the sets ${\uparrowB}{e}$ and ${\uparrowB}{f}$\label{fig:revb}}
  \end{figure}
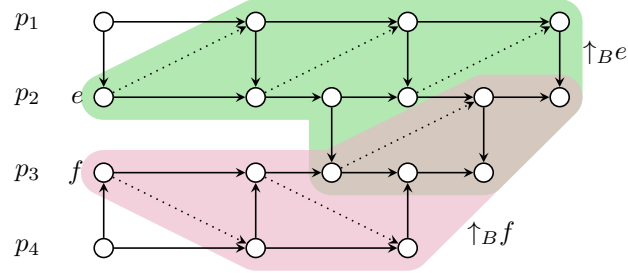

\newcommand\newPdiff[2]{\ploc({\uparrowB}#1 \setminus {\uparrowB}#2)}


\subsection{FO-definable Linearizations for Existentially Bounded MSCs}

We give a canonical $B$-bounded linearization of $\exists B$-bounded
MSCs, adapted from \cite[Definition~13]{ThiagarajanW02} where the
definition was given for traces.
We fix some total order $\sqsubseteq$ on $P$.
Let $M=(E,\prel,\mrel,\ploc,\lambda)$ be an $\exists B$-bounded MSC,
and let ${\le}_B = ({\le} \cup \revb{B})^\ast$ which is a partial order on $M$.
Note that a linearization of $M$ is $B$-bounded iff it contains $\le_B$.

For $e \in E$, we define ${\uparrowB}e = \{g \in E \mid e \le_B g\}$.
Moreover, for $E' \subseteq E$, let $\ploc(E') = \{\ploc(e) \mid e \in E'\}$.
Finally, given $e,f \in E$, let $e \parallel_B f$ if $e \not\le_B f$ and $f \not\le_B e$.
We then define a relation ${\clinlt} \subseteq E \times E$ by
$$\begin{array}{rcl}
e \clinlt f & \Longleftrightarrow~
\left(
\begin{array}{rl}
& e <_B f\\
\vee & e \parallel_B f ~\wedge~ \min (\newPdiff e f) \sqsubset \min (\newPdiff f e)
\end{array}
\right).
\end{array}
$$

\begin{example}
Consider the MSC $M$ in Figure~\ref{fig:revb} and suppose $p_1 \sqsubset p_2 \sqsubset p_3 \sqsubset p_4$.
We have $\newPdiff e f = \{p_1,p_2\}$ and $\newPdiff f e = \{p_3,p_4\}$.
Since $p_1 \sqsubset p_3$, we obtain $e \clinlt f$.
\end{example}

The following result is due to \cite[Lemma~14]{ThiagarajanW02}.
It is stated there for traces, but the proof can be taken almost verbatim.

\begin{lemma}\label{lem:linord}
The relation $\clinlt$ is a strict linear order on $E$.
\end{lemma}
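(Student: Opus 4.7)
The plan is to verify the four defining properties of a strict linear order---irreflexivity, asymmetry, totality, and transitivity---of which only the doubly parallel subcase of transitivity is a real obstacle. Throughout, I write $\Delta(e,f) := \ploc({\uparrowB}e \setminus {\uparrowB}f)$ and use the observation that ${\uparrowB}e \cap E_p$ is upward-closed in the total order $\le_p$ on $E_p$ (if $e \le_B h$ and $h \le_p h'$, then $e \le_B h'$). A consequence used repeatedly is that if $p = \min\Delta(e,f)$, then for every $q \sqsubset p$ one has ${\uparrowB}e \cap E_q \subseteq {\uparrowB}f \cap E_q$; otherwise a witness would put $q$ into $\Delta(e,f)$, contradicting minimality of $p$.

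Irreflexivity holds since $e \not<_B e$ and $e \not\parallel_B e$. Asymmetry follows from asymmetry of $<_B$ when the step uses $<_B$, and from totality of $\sqsubseteq$ on $P$ in the parallel case. For totality, when $e \neq f$ and $e \parallel_B f$, both $\Delta(e,f) \ni \ploc(e)$ and $\Delta(f,e) \ni \ploc(f)$ are nonempty; if they shared a minimum $p$, two events on $p$, one drawn from each difference, would be $\le_p$-comparable, and upward-closedness would force the larger to lie in both ${\uparrowB}e$ and ${\uparrowB}f$, contradicting its membership in exactly one difference.

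For transitivity, assume $e \clinlt f \clinlt g$. If both steps are $<_B$, transitivity of $<_B$ gives $e <_B g$. If $e <_B f$ and $f \parallel_B g$, then $g \le_B e$ is ruled out (it would give $g \le_B f$), so either $e <_B g$ (done) or $e \parallel_B g$; since ${\uparrowB}f \subseteq {\uparrowB}e$, we get $\Delta(f,g) \subseteq \Delta(e,g)$ and $\Delta(g,e) \subseteq \Delta(g,f)$, which transports $\min\Delta(f,g) \sqsubset \min\Delta(g,f)$ to $\min\Delta(e,g) \sqsubset \min\Delta(g,e)$. The mirror case $e \parallel_B f$, $f <_B g$ is symmetric.

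The real obstacle is $e \parallel_B f \parallel_B g$. Set $p_1 = \min\Delta(e,f)$, $p_2 = \min\Delta(f,e)$, $p_3 = \min\Delta(f,g)$, $p_4 = \min\Delta(g,f)$, with $p_1 \sqsubset p_2$ and $p_3 \sqsubset p_4$. First, $g <_B e$ is excluded: it would give ${\uparrowB}e \subseteq {\uparrowB}g$, hence $\Delta(e,f) \subseteq \Delta(g,f)$ and $\Delta(f,g) \subseteq \Delta(f,e)$, yielding the absurd chain $p_4 \sqsubseteq p_1 \sqsubset p_2 \sqsubseteq p_3 \sqsubset p_4$. If $e <_B g$, done; otherwise $e \parallel_B g$, and the inclusion ${\uparrowB}e \setminus {\uparrowB}g \subseteq ({\uparrowB}e \setminus {\uparrowB}f) \cup ({\uparrowB}f \setminus {\uparrowB}g)$ together with its dual yield the lower bounds $\min\Delta(e,g) \sqsupseteq \min\{p_1,p_3\}$ and $\min\Delta(g,e) \sqsupseteq \min\{p_2,p_4\}$. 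For the matching upper bound, let $p^* = \min\{p_1,p_3\}$ and pick any witness $h \in E_{p^*}$ lying in ${\uparrowB}e \setminus {\uparrowB}f$ (if $p^* = p_1$) or in ${\uparrowB}f \setminus {\uparrowB}g$ (if $p^* = p_3$); the key consequence applied at $p_4$ or $p_2$ respectively---valid since $p^* \sqsubseteq p_3 \sqsubset p_4$ and $p^* \sqsubseteq p_1 \sqsubset p_2$---shows $h \in {\uparrowB}e \setminus {\uparrowB}g$, so $p^* \in \Delta(e,g)$. Finally, $\min\{p_1,p_3\} \sqsubseteq p_1 \sqsubset p_2$ and $\min\{p_1,p_3\} \sqsubseteq p_3 \sqsubset p_4$ give $\min\{p_1,p_3\} \sqsubset \min\{p_2,p_4\}$, whence $\min\Delta(e,g) \sqsubset \min\Delta(g,e)$ and $e \clinlt g$.
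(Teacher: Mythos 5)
Your proof is correct, and it follows the same skeleton as the argument the paper borrows (essentially verbatim) from Thiagarajan--Walukiewicz: totality/asymmetry via disjointness of $\newPdiff{e}{f}$ and $\newPdiff{f}{e}$, then a case split on which of the pairs are $<_B$-comparable, with all comparisons routed through the $\sqsubset$-minima of the sets $\ploc({\uparrowB}e\setminus{\uparrowB}f)$ and the monotonicity of these minima under inclusions of up-sets. The one place where you genuinely diverge is the closing of the doubly parallel case. The paper's version fixes witness events $e_{ij}$ on the minimal processes and runs a nested sub-case analysis (on whether $e_{12}\in{\uparrowB}{e_3}$, whether $e_{31}\in{\uparrowB}{e_2}$, whether $e_{23}\in{\uparrowB}{e_1}$) to chain inequalities such as $\ploc(e_{13})\sqsubseteq\ploc(e_{12})$. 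You instead determine $\min\ploc({\uparrowB}e\setminus{\uparrowB}g)$ \emph{exactly} as $\min\{p_1,p_3\}$: the lower bound comes from the inclusion ${\uparrowB}e\setminus{\uparrowB}g\subseteq({\uparrowB}e\setminus{\uparrowB}f)\cup({\uparrowB}f\setminus{\uparrowB}g)$, and the upper bound from exhibiting a witness on $p^\ast$ and using upward-closedness of ${\uparrowB}e\cap E_p$ in the process order to show it survives into ${\uparrowB}e\setminus{\uparrowB}g$. This sandwich is a cleaner and more symmetric way to finish than the paper's sub-case chase, at the cost of nothing; the two arguments use the same underlying facts. All the side conditions you need (nonemptiness of the relevant $\Delta$-sets when the events are $\parallel_B$-incomparable, exclusion of $g<_B e$, existence of the witnesses on the minimal processes) are checked, so there is no gap.
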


Moreover, the reflexive closure
$\clinle$ of $\clinlt$
contains $\le_B$, hence it is a $B$-bounded linearization of $M$.
Finally, the relation $\clinlt$ is $\FO[\prel,\mrel,\le]$-definable.
Indeed, the strict partial order $<_B$ is 
$\FO[\prel,\mrel,\le]$-definable since it can be expressed with the path formula
$\mathsf{lt}_B$ given above.
From its definition, we deduce that the relation $\clinlt$ is also
$\FO[\prel,\mrel,\le]$-definable.

\medskip

We are now ready to give our alternative proof of the direction
$(3) \implies (1)$ in Fact~\ref{EB-regular}.

\begin{proof}[Proof of $\boldsymbol{(3) \implies (1)}$]
Let $L$ be a set of $\exists B$-bounded MSCs such that $\LinB(L)$ is regular.
There exists an EMSO sentence $\Philin$ over $\Slin$-labeled words such that $\LinB(L) = \msclang(\Philin)$.
Since $\clinle$ is $\FO[\prel,\mrel,\le]$-definable, it is easy to translate
$\Philin$ into an $\EMSO[\prel,\mrel,\le]$ formula $\Phi$ such that, for all
$\exists B$-bounded MSC $M$, we have $M \models \Phi$ iff
$\wlin {\clinle} \models \Philin$.
Let $\A$ be a CFM such that $\msclang(\A) = \msclang(\Phi \land \Phi_{\exists B})$.
Then, for all $M \in L$, $M$ is $\exists B$-bounded and
$\wlin {\clinle} \models \Philin$, hence $M\models\Phi \land \Phi_{\exists B}$, 
i.e., $M \in \msclang(\A)$.
Conversely, if $M \in \msclang(\A)$, then $M$ is $\exists B$-bounded and $\clinle$ is a
linearization of $M$. Moreover, $\wlin {\clinle} \in \LinB(L)$, hence $M \in L$.
\end{proof}

\section{Conclusion}\label{sec:concl}

In this paper, we showed that every $\FO[\prel,\mrel,\leq]$ formula over MSCs
is effectively equivalent to a CFM.
As an intermediate step, we used a purely logical transformation of own interest,
relating FO logic with a star-free fragment of PDL.

While star-free PDL constitutes a two-dimensional temporal logic  over MSCs,
we leave open whether there is a one-dimensional one, with a finite set of
$\FO$-definable modalities, that is expressively complete for $\FO[\prel,\mrel,\leq]$.

It will be worthwhile to see whether our techniques can be applied
to other settings such as trees or Mazurkiewicz traces.

\bibliography{lit}


\end{document}